\newtheorem{theorem}{Theorem}
\newtheorem{corollary}{Corollary}[theorem]
\newtheorem{prop}{Proposition}
\theoremstyle{definition}
\def\R{\mathbb{R}}
\def\H{\mathcal{H}}
\def\O{\mathcal{O}}
\def\W{\mathcal{W}}
\def\A{\mathcal{A}}
\def\B{\mathcal{B}}
\def\D{\mathcal{D}} 
\def\S{\mathcal{S}} 
\def\L{\mathcal{L}} 
\def\M{\mathcal{M}}
\def\N{\mathcal{N}}
\def\la{\langle} 
\def\ra{\rangle_{\rho}} 
\def\P{\mathcal{P}}
\def\Q{\mathcal{Q}}
\def\E{\mathcal{E}}
\def\G{\mathcal{G}}
\def\K{\mathcal{K}} 
\def\I{\mathcal{I}}
\def\Pi{\mathbf{P}}
\journal{arXiv}
\begin{document}
\begin{frontmatter}
\title{Generalized second fluctuation-dissipation theorem in the nonequilibrium steady state: Theory and applications}

\author[ucm]{Yuanran Zhu\corref{correspondingAuthor}}
\cortext[correspondingAuthor]{Corresponding author}
\ead{yzhu56@ucmerced.edu}
\author[msu]{Huan Lei}
\author[ucm]{Changho Kim}
\address[ucm]{Department of Applied Mathematics, University of California, Merced, CA 95343}
\address[msu]{Department of Computational Mathematics, Science \& Engineering and Department of Statistics \& Probability, Michigan State  University, MI 48824}

\begin{abstract}
In this paper, we derive a generalized second fluctuation-dissipation theorem (FDT) for stochastic dynamical systems in the steady state. The established theory is built upon the Mori-type generalized Langevin equation for stochastic dynamical systems and only uses the properties of the Kolmogorov operator. The new second FDT expresses  the memory kernel of the generalized Langevin equation as the correlation function of the fluctuation force plus an additional term. In particular, we show that for nonequilibrium states such as heat transport between two thermostats with different temperatures, the classical second FDT is valid even when the exact form of the steady state distribution is unknown. The obtained theoretical results  enable us to construct a data-driven nanoscale fluctuating heat conduction model based on the second FDT. We numerically verify that the new model of heat transfer yields better predictions than the Green-Kubo formula for systems far from the equilibrium.
\end{abstract}
\end{frontmatter}

\section{Introduction}
\label{sec:Intro}
The fluctuation-dissipation relations are one of the most important results in statistical mechanics. In 1966, Kubo, in his renowned paper \cite{kubo1966fluctuation}, proposed a general relationship that connects the response of a given system to the external perturbation and the internal thermal fluctuation of the system in the absence of disturbance. This groundbreaking result, together with many generalizations for open systems, is now categorized as the fluctuation-dissipation theorems (FDTs) which provide a far reaching generalization of Einstein's theory of Brownian motion and Nyquist's work on thermal noise in electrical resistors. In his original paper \cite{kubo1966fluctuation}, Kubo presented two results which characterize the fluctuation-dissipation relationship for arbitrary Hamiltonian systems. The aforementioned response result is called the {\em first} FDT, whereas another relationship which builds connections between the thermal fluctuation of an observable and the memory kernel for the corresponding generalized Langevin equation (GLE) is now known as the {\em second} FDT. It is not widely known that Kubo actually provided two derivation methods for these two relationships. The first method is based on a phenomenological linear GLE for a system observable $u(t)$:
\begin{align}\label{intro_pheno_GLE}
\frac{d}{dt}u(t)=-\int_0^tK(t-s)u(s)ds+\frac{1}{m}f(t)+\frac{1}{m}g(t),
\end{align}
where $K(t)$ is known as the memory kernel and $f(t)$ is the fluctuation force satisfying $\langle f(t)\rangle=0$ and $\langle u(s)f(t)\rangle=0$ for $t>s$. Here $\langle\cdot\rangle$ is the ensemble average in the thermal equilibrium. The system is assumed to be perturbed by a periodic external force $g(t)=g_0\cos(\omega t)$. For such a case, the first FDT is proved using the standard Fourier-Laplace transform and the second FDT is derived from the first FDT hence can be viewed as a corollary of it \cite{kubo1966fluctuation}.
The second method differs for the derivation of these two FDTs. In particular, the first FDT is derived using a perturbation method which is known as the linear response theory. During the derivation, the phenomenological GLE \eqref{intro_pheno_GLE} is {\em not} used. The second FDT, however, can be derived using Mori's equation \cite{Mori}, which can be interpreted as the \eqref{intro_pheno_GLE} in an operator form. In fact, since the evolution operator for the Hamiltonian system is given by $e^{t\L}$, the second FDT is a natural result of the skew-symmetry of the Liouville operator $\L$ with respect to the Hilbert space inner product $\langle\cdot,\cdot\rangle_{\rho_{eq}}$, where $\rho_{eq}$ is the equilibrium distribution. Detailed explanations can be found in Section \ref{sec:EMZE}.

Since the establishment of the FDTs in the 1960s, there has been a considerable amount of work on its verification, violation, generalization and applications. In fact, the FDT-related linear response relation has became the cornerstone of nonequilibrium statistical mechanics, which provides a powerful tool to study various transport phenomena for near-equilibrium systems \cite{snook2006langevin,morriss2013statistical}. Here we only mention some recent studies in active matter \cite{dal2019linear}, turbulence \cite{majda2005information,gritsun2008climate} and heat conduction \cite{kundu2009green,lepri1998anomalous,dhar2008heat}. We note that most of these studies are about the {\em first} FDT and relatively less attention has been paid to the {\em second} FDT. For most applications, the second FDT was required for the construction of the GLE when it was used as a reduced-order model for particular observables. Its generalization and verification has not yet been widely studied except for Mae's recent work \cite{maes2014second}. One of the reasons why this is the case is that, unlike the {\em first} FDT which is directly related to the Green-Kubo formula for transport theory, the study on the {\em second} FDT is difficult to translate into useful, experimentally verifiable results. In fact, Kubo's second derivation method reviewed in the previous paragraph clearly indicates the first and the second FDTs are actually quite different. Specifically, the second FDT is an intrinsic property for Hamiltonian systems in the thermodynamic equilibrium and its validity does not rely on perturbation arguments. 
This observation hints that one can possibly derive the second FDT or its generalized form from the operator-form GLE, i.e. the Mori-Zwanzig equation \cite{Mori,zwanzig1961memory}, for generalized stochastic dynamics in the nonequilibrium steady state. In addition, noting a gradually increasing interest in open systems and far-from-equilibrium phenomena, we anticipate that a classical or generalized second FDT for the nonequilibrium system will provide insights on the system dynamics. All these thoughts motivated the current research.

The main purpose of this paper is two-fold. First, we will follow Kubo-Mori's methodology to establish a generalized second FDT for stochastic dynamical systems driven by Gaussian white noise. Such stochastic models are widely used in coarse-graining modeling for molecules \cite{espanol1995hydrodynamics,espanol1995statistical,lei2016data,hudson2018coarse,Grogan_Lei_JCP_2020}, nonequilibrium heat conduction models \cite{eckmann1999non,lepri1998anomalous,kundu2009green} and many other open systems \cite{majda2005information,majda2010quantifying}. Secondly, the second FDT is shown to be valid in the nonequilibrium steady state, therefore can be applied to address the far-from-equilibrium transport problem.

This article is organized as follows. Section \ref{sec:short_sum} provides a short summary of the theoretical results obtained in this paper. Section \ref{sec:EMZE} briefly reviews the derivation of the GLE for stochastic systems using the Mori-Zwanzig equation. In Section \ref{sec:FDT}, as a comparison, we first review the derivation of the first FDT for stochastic systems via perturbation analysis, then we use the GLE to derive a generalized second FDT which holds in the nonequilibrium steady state. In Section \ref{sec:Application_to_specifics}, this newly established relation is applied to equilibrium and nonequilibrium systems commonly used in the statistical mechanics. In particular, we show the validity of the classical second FDT for the averaged heat flux in a heat conduction model. The theoretical results are verified numerically in Section \ref{sec:Numerical} via the molecular dynamics simulation of the heat conduction model. In addition, two reduced-order models are proposed and shown to faithfully characterize the thermal conduction for far-from-equilibrium systems. The main findings of the paper are summarized in Section \ref{sec:conclusion}. 
\section{The main theoretical results}
\label{sec:short_sum}
In this section, we give a short summary of the main theoretical results obtained in this paper and provide simple examples to explain them. In statistical physics, open systems normally refer to dynamical systems in contact with thermal reservoirs. These reservoirs are modeled by deterministic or stochastic thermostats. If a nonequilibrium condition is imposed, for instance one may choose thermostats with different temperatures and attach them to a Hamiltonian system, then the external force exerted by the thermostats will drive the system {\em out of} the thermal equilibrium. In this paper, we are mainly concerned with the nonequilibrium systems with stochastic thermostats. Mathematically, they can be described by a general stochastic differential equation (SDE):
\begin{align}\label{sum:sde}
    d\bm x(t)=\bm F(\bm x(t))+\sigma(\bm x(t))d\bm \W(t),\qquad \bm x(0)=\bm x_0\sim\rho_0(\bm x),
\end{align}
Stochastic analysis already told us \cite{mattingly2002ergodicity} after a finite transient time, SDE \eqref{sum:sde} satisfying some suitable conditions will converge to a unique steady state. For stochastic force $\sigma(\bm x(t))d\bm \W(t)$ in a nonequilibrium setting, such a steady state is normally termed as the nonequilibirum steady state (NESS). The primary goal of the current paper is to investigate whether the classical second FDT still holds in the NESS of SDEs. To this end, we follow the aforementioned Mori-Kubo's methodology and use the following GLE for the (open) stochastic system as the starting point of our analysis:
\begin{align}\label{intro_MZ}
   \frac{d}{dt}u(t)
=\Omega u(t)+\int_0^tK(t-s)u(s)ds+f(t),
\end{align}
where $u(t)=e^{t\K}u(0)$ is an arbitrary observable function of the stochastic system.  Using the properties of the Kolmogorov backward operator $\K$ (generator of the stochastic dynamics), we proved the following result regarding the validity of the second FDT:

\vspace{0.2cm}
(I) In GLE \eqref{intro_MZ}, the memory kernel $K(t)$ and the fluctuation force $f(t)$ satisfy a generalized second FDT:
\begin{align}\label{intro_FDT}
    K(t)=-\frac{\langle f(t),f(0)\rangle_{\rho}}{\langle u^2(0)\rangle_{\rho}}+\frac{\langle f(t),w(0)\rangle_{\rho}}{\langle u^2(0)\rangle_{\rho}},
\end{align}
where $\langle \cdot\rangle_{\rho}$ is the ensemble average in the NESS. In \eqref{intro_FDT}, the first term yields the {\em classical} second FDT. The additional term $\langle f(t),w(0)\rangle_{\rho}/\langle u^2(0)\rangle_{\rho}$ constitutes the generalized relation that the current paper studies. 
However, since $w(0)$ in \eqref{intro_FDT} depends on the steady state probability density $\rho$ (see its explicit expression \eqref{into_w(0)}), which is hard to obtain for most high-dimensional stochastic systems, this generalized relation cannot be used directly. To overcome this difficulty, we notice an important fact about the obtained second FDT \eqref{intro_FDT} which is the second main result of this paper: 

\vspace{0.2cm}
(II) The additional observable $w(0)$ can be explicitly written as 
\begin{align}\label{into_w(0)}
w(0)=\S u(0)=2\sigma^2\Delta u(0)+2\sigma^2\nabla(\ln\rho)\cdot\nabla,
\end{align}
where $\S$ is a second-order differential operator in the {\em non-degenerate} coordinate of the stochastic system. If in addition, observable $u(0)$ is a function of the {\em degenerate} coordinate, then $w(0)=\S u(0)=0$ and the {\em generalized} second FDT \eqref{intro_FDT} degenerates to the {\em classical} second FDT.

\vspace{0.2cm}
\noindent
We can use the Langevin dynamics as an example to explain the meaning of the degenerate coordinate and result (II). As it is well-known, the Gaussian white noise for a Langvein dynamics is only imposed in the momentum coordinate $p_i$. We shall call the {\em momentum coordinate} $p_i$ non-degenerate and the {\em position coordinate} $q_i$ degenerate. Bearing this in mind, the reason why the special case discussed in (II) has $w(0)=\S u(0)=0$ becomes obvious. Since the initial condition of the observable $u(0)=f(q_i(0))$, then we have
\begin{align*}
  w(0)=\S u(0)=\S f(q_i(0))=2\sigma^2\partial^2_{p_i}f(q_i(0)) +2\sigma^2\partial_{p_i}(\ln\rho)\partial_{p_i}f(q_i(0))=0. 
\end{align*}
This result can be physically interpreted as follows: For the linear GLE of an arbitrary (open) stochastic system, the validity of the classical second FDT of an observable $u(t)$ depends on whether $u(t)$ has {\em direct} interactions with the environment through the white noise. In reality, many nonequilibrium systems can be modeled by highly degenerate stochastic systems. Hence most observables in such nonequilibrium systems satisfy the {\em classical} second FDT. We will use this fact and the GLE \eqref{intro_MZ} to build effective reduced-order models for low-dimensional observables. In Section \ref{sec:Numerical}, we show the application of such an effective model in studying the far-from equilibrium transport phenomena, which are generally hard to handle using the linear response theory.

\section{Mori-type generalized Langevin equations for SDEs}
\label{sec:EMZE}
The starting point of our analysis is the Mori-Zwanzig (MZ) equation for stochastic differential equations (SDEs). Such stochastic system MZ equation was derived independently by different researchers using various methods \cite{espanol1995hydrodynamics,morita1980contraction,zhu2019generalized}. Here we briefly review the derivation used in \cite{,zhu2019generalized}; more detailed explanations can be found therein. Let us consider a $d$-dimensional SDE on a smooth manifold $M$
\begin{align}\label{eqn:sde}
d\bm x(t)=\bm F(\bm x(t))dt+\bm \sigma(\bm x(t))d\bm\W(t), \qquad \bm x(0)=\bm x_0\sim \rho_0(\bm x),
\end{align}
where $\bm F:M\rightarrow \R^d$ and 
$\bm \sigma: M\rightarrow \R^{d\times m}$ are 
smooth functions, $\bm \W(t)$ is 
an $m$-dimensional Wiener process with 
independent components, and $\bm x_0$ is a 
random initial state characterized in terms of 
a probability density function $\rho_0(\bm x)$. For the deterministic initial condition we have $\rho_0(\bm x)=\prod_{i=1}^d\delta(x_i-x_i(0))$. 
The solution of \eqref{eqn:sde} with different initial conditions 
form a $d$-dimensional stochastic flow
on the manifold $M$, which is known as 
Brownian flow \cite{kunita1997stochastic}.
The Markov property of \eqref{eqn:sde} allows us to define a composition operator $\M(t,0)$ that pushes forward in time the average of the observable $\bm u(t)=\bm u(\bm x(t))$ over the noise, i.e., 
\begin{align}
\mathbb{E}_{\bm \W(t)}[\bm u(\bm x(t))|\bm x_0]= 
\M(t,0)\bm u(\bm x_0)=e^{t\K}\bm u(\bm x_0).
\label{MarkovSemi}
\end{align}
Using It\^o's interpretation of the white noise, $\M(t,0)$ is known as a Markovian semigroup generated by the backward Kolmogorov operator
\cite{Risken,kloeden2013numerical}:
\begin{align}
\K(\bm x_0)&=\sum_{k=1}^dF_k(\bm x_0)\frac{\partial}{\partial x_{0k}}
+\frac{1}{2}\sum_{j=1}^m\sum_{i,k=1}^d\sigma_{ij}(\bm x_0)
\sigma_{kj}(\bm x_0)\frac{\partial^2}{\partial x_{0i}\partial x_{0k}}.
\label{KI}
\end{align}
From now on, with a slight abuse of notation, we will use $\bm u(\bm x(t))$ to represent its noise-averaged quantity \eqref{MarkovSemi}. In fact, without any further specification, all $\bm u(\bm x(t))$ that appear in the rest of this paper represent the noise-averaged quantity \eqref{MarkovSemi}. We also note \eqref{MarkovSemi} is only a conditional expectation. When the initial condition is random, \eqref{MarkovSemi} is still a {\em stochastic quantity} with initial randomness. We now introduce a projection operator $\P$ and its orthogonal operator $\Q=\I-\P$. By differentiating the well-known Dyson's identity
\begin{align*}
    e^{t\K}=e^{t\Q\K}+\int_0^te^{s\K}\P\K e^{(t-s)\Q\K}\Q\K ds,
\end{align*}
it is straightforward to obtain the following 
MZ equation governing 
the evolution of the noise-averaged observable 
\eqref{MarkovSemi}:
\begin{equation}
\frac{\partial}{\partial t}e^{t\K}\bm u(0)
=e^{t\K}\P\K\bm u(0)
+e^{t\Q\K}\Q\K\bm u(0)+\int_0^te^{s\K}\P\K
e^{(t-s)\Q\K}\Q\K\bm u(0)ds.\label{eqn:EMZ_full}
\end{equation}
The three terms at the right hand side of \eqref{eqn:EMZ_full} 
are called, respectively, streaming term, fluctuation 
(or noise) term, and memory term.  Applying the projection operator 
$\P$ to \eqref{eqn:EMZ_full}
yields the projected MZ equation
\begin{equation} 
\frac{\partial}{\partial t}\mathcal{P}e^{t\K}\bm u(0)
=\mathcal{P}e^{t\K}\mathcal{PK}\bm u(0)
+\int_0^t\P e^{s\K}\mathcal{PK}
e^{(t-s)\Q\K}\mathcal{QK}\bm u(0)ds.\label{eqn:EMZ_projected}
\end{equation}
Note that the MZ equation \eqref{eqn:EMZ_full} and its projected form \eqref{eqn:EMZ_projected} for stochastic systems 
have the same structure as the classical MZ 
equations for deterministic (autonomous) systems 
\cite{zhu2019generalized,zhu2018estimation,zhu2018faber}. 
However, the Liouville operator $\L$ is now replaced by 
the Kolmogorov operator $\K$. 
Let us consider the weighted Hilbert space 
$H=L^2(M,\rho)$ with inner product
\begin{equation}
\langle h,g\rangle_{\rho}=\int_{M} 
h(\bm x)g(\bm x)\rho(\bm x)d\bm x,
\qquad h,g\in H,
\label{ip}
\end{equation}
where $\rho$ is a positive weight 
function on $M$ which is often chosen to be a certain type of
probability densities. 
With this Hilbert space, we can introduce the Mori-type projection operator:
\begin{align}
\label{Mori_P}
\P h=\sum_{i,j=1}^N G^{-1}_{ij}
\langle u_i(0),h\rangle_{\rho}u_j(0),
\qquad h\in H,
\end{align}
where $G_{ij}=\langle u_i(0),u_j(0)\rangle_{\rho}$
and $u_i(0)=u_i(\bm x(0))$ ($i=1,...,N$) are
$N$ linearly independent functions. It is easy to check that $\P$ and its orthogonal $\Q=\I-\P$ are both symmetric projection operators in $L^2(M,\rho)$, i.e. $\P^*=\P=\P^2$, $\Q^*=\Q=\Q^2$, where $\P^*,\Q^*$ are the $L^2(M,\rho)$-adjoint operator of $\P,\Q$. With $\P$ defined as \eqref{Mori_P}, the memory integral of 
the MZ equation \eqref{eqn:EMZ_full} and its 
projected version \eqref{eqn:EMZ_projected} can be simplified to a convolution term. Therefore these two MZ equations can be rewritten as
\begin{align}
\frac{d\bm u(t)}{dt} &= \bm \Omega \bm u(t) +
\int_{0}^{t}\bm K(t-s)\bm u(s)ds+\bm f(t),\label{gle_full}\\
\frac{d}{dt}\P{\bm u}(t) &= \bm \Omega\P {\bm u}(t)+ 
\int_{0}^{t} \bm K(t-s) \P {\bm u}(s)\,ds,\label{gle_projected}
\end{align}
where $\bm u(t) =[u_1(t),\dots,u_N(t)]^T$ and 
\begin{subequations}
\begin{align}
		G_{ij} & = \langle u_{i}(0), u_{j}(0)\rangle_{\rho}
		\quad \text{(Gram matrix)},\label{gram}\\
		\Omega_{ij} &= \sum_{k=1}^N G^{-1}_{jk}
		\langle u_{k}(0), \K u_{i}(0)\rangle_{{\rho}}\quad 
		\text{(streaming matrix)},\label{streaming}\\
		K_{ij}(t) & =\sum_{k=1}^N G^{-1}_{jk}
		\langle u_{k}(0), \K e^{t\Q\K}\Q\K u_{i}(0)\rangle_{\rho}\quad 
		\text{(memory kernel)},\label{SFD}\\
		 f_i(t)& =e^{t\Q\K}\Q\K u_i(0) \quad 
		\text{(fluctuation term)}.\label{f}
	\end{align}
\end{subequations}
\noindent
Equations \eqref{gle_full}-\eqref{gle_projected} are known as the (linear) generalized Langevin equation (GLE) in statistical physics. The projection operator method provides a systematic way to derive such {\em closed} equations of motion from the first principle. Depending on the choice of the Hilbert space weight function $\rho$, the linear GLE \eqref{gle_full} and its projected form \eqref{gle_projected} yield evolution equations for different dynamical quantities. When considering SDE \eqref{eqn:sde} in the context of statistical physics, the most common setting of $\rho$ is $\rho=\rho_0=\rho_{S}$, where $\rho_S$ is the steady state distribution of the stochastic system satisfying the Fokker-Planck equation $\partial_t\rho_{S}=\K^*\rho_{S}=0$ (see details in Section \ref{sec:FDT}). For such a case, GLE \eqref{gle_full} yields the full dynamics of the noise-averaged quantity $\mathbb{E}_{\bm \W(t)}[\bm u(\bm x(t))|\bm x_0]$ and the projected GLE \eqref{gle_projected} yields the evolution equation of the steady state time-autocorrelation function $C_{ij}(t)$, which is defined as \cite{pavliotis2014stochastic,zhu2020hypoellipticity} 
\begin{align*}
C_{ij}(t):=\mathbb{E}_{\bm x_0}[\mathbb{E}_{\bm \W(t)}[u_i(t)u_j(0)|\bm x_0]]=\langle\M(t,0)u_i(0),u_j(0)\rangle_{\rho_0}
=\langle\M(t,0)u_i(0),u_j(0)\rangle_{\rho_S}.
\end{align*}
For deterministic Hamiltonian systems with the unitary evolution operator $e^{t\L}$, the Mori-Zwanizg equation is as \eqref{eqn:EMZ_full} but with the Kolmogorov operator $\K$ replaced by the Liouville operator $\L$. As we mentioned in the introduction, the second FDT the the thermal equilibirum holds naturally as a result of the skew-symmetry of the Liouville operator $\L$ with respect to $\langle\cdot,\cdot\rangle_{\rho_{eq}}$, where $\rho_{eq}$ is the equilibrium distribution \cite{snook2006langevin,zhu2018estimation}. In fact we have 
\begin{equation}
\begin{aligned}
		K_{ij}(t) =\sum_{k=1}^N G^{-1}_{jk}
		\langle u_{k}(0), \L e^{t\Q\L}\Q\L u_{i}(0)\rangle_{\rho_{eq}}
		&=-\sum_{k=1}^N G^{-1}_{jk}
		\langle \L u_{k}(0), e^{t\Q\L}\Q\L u_{i}(0)\rangle_{\rho_{eq}}\\
		&=-\sum_{k=1}^N G^{-1}_{jk}
		\langle \Q\L u_{k}(0), e^{t\Q\L}\Q\L u_{i}(0)\rangle_{\rho_{eq}}\\
	   &=-\sum_{k=1}^N G^{-1}_{jk}
		\langle f_k(0), f_{i}(t)\rangle_{\rho_{eq}},
	\end{aligned}
\end{equation}
where we used the idempotence of the projection operator, i.e. $\Q^2=\Q$ and the fact that Mori-type projection operator $\Q$ is symmetric in $L^2(M,\rho_{eq})$. For stochastic systems studied in this paper, since the Kolmogorov operator $\K$ is not skew-symmetric with respect to the inner product $\langle\cdot,\cdot\rangle_{\rho}$, the classical second FDT has to be generalized accordingly.
\section{Generalized fluctuation-dissipation theorem}\label{sec:FDT}
In this section, we first review the derivation of the generalized first FDT for stochastic systems and emphasize its difference from the second FDT. As a mathematical preparation, we need to assume that there exists a unique probability measure $d\mu=\rho d\bm x$ which is invariant under the Markovian semigroup generated by SDE \eqref{eqn:sde}. This implies the existence and uniqueness of a time-independent probability density $\rho$ that satisfies the Kolmogorov forward (Fokker-Planck) equation:
\begin{align}\label{Fokker-Planck}
\partial_t\rho=\K^*\rho=0,
\end{align}
where $\K^*$ is the $L^2(M)$-adjoint operator of $\K$. Throughout the paper, we further assume that $\rho$ is a smooth function which decays to $0$ when approaching boundary $\partial M$. For frequently used statistical physics models, these two assumptions are generally hard to prove since it involves the analysis of the degenerate elliptic operator $\K$. Some recent studies on hypoellipticity \cite{cuneo2018non,eckmann2000non} have shown the possibility to obtain affirmative answers using the rather complicated H\"ormander analysis. These theoretical results, together with numerical studies such as \cite{lepri2003thermal,lepri1998anomalous}, suggest that the uniqueness, smoothness and the decaying properties of $\rho$ at $\partial M$ are rather technical assumptions.

\subsection{Derivation of the first FDT}
The first fluctuation–dissipation theorem describes the linear response of a given dynamical system to external perturbations. This relationship was established by Kubo \cite{kubo1966fluctuation} for classical and quantum Hamiltonian systems and then has been extended to open systems by different researchers \cite{agarwal1972fluctuation,hairer2010simple,dal2019linear,gritsun2008climate}. As we briefly mentioned in the introduction, the derivation of the first FDT relies on the perturbation theory. To demonstrate this point, we consider a general perturbation to the stochastic system \eqref{eqn:sde}:
\begin{align}\label{eqn:sde_p}
d\bm x(t)=\bm F(\bm x(t))dt+\bm \sigma(\bm x(t))d\bm\W(t)+\delta\bm G(t)\cdot\tilde{\bm F}(\bm x(t))dt+\sqrt{\delta}\bm H(t)\cdot\tilde{\bm \sigma}(\bm x(t))d\bm \W(t), \qquad \bm x(0)\sim \rho,
\end{align}
where $0<\delta\ll 1$, $\delta\bm G(t)\cdot \tilde{\bm F}(\bm x(t)) dt$ and $\sqrt{\delta}\bm H(t)\cdot\tilde{\bm \sigma}(\bm x(t))d\bm \W(t)$ characterize respectively the dynamics of the pertubative frictional and fluctuating forces added to the stochastic system. We denote $\rho$ as the steady state distribution for the unperturbed stochastic system \eqref{eqn:sde} and $\rho_{\delta}$ as the one for the perturbed stochastic system \eqref{eqn:sde_p}. For the sake of simplicity, we assume that $\delta \bm G(t)$, $\sqrt{\delta} \bm H(t)$, the diffusion matrix $\bm \sigma(\bm x(t))$ and the perturbed diffusion matrix $\tilde{\bm\sigma}(\bm x(t))$ are all $d\times d$ diagonal matrices. As a consequence, the diffusion part (second-order derivatives) of the Kolmogorov backward operator corresponding to the perturbed system is of the diagonal form and we have
\begin{align*}
\K=\sum_{i=1}^d F_i(\bm x)\partial_{x_i}+\sigma_i^2(\bm x)\partial_{x_i}^2+\delta G_i(t)\tilde{F_i}(\bm x)\partial_{x_i}+\delta H^2_i(t)\tilde{\sigma}_i^2(\bm x)\partial_{x_i}^2,
\end{align*}
where $\sigma_i(\bm x), \tilde{\sigma}_i(\bm x)$, $\delta G_i(t)$ and $\delta H_i^2(t)$ are the $i$-th diagonal elements of the corresponding matrices. Using the standard perturbation theory, we obtain the following generalized first FDT: 
\begin{theorem}\label{thm1:1st-FDT}(Generalized-1st-FDT) Assuming that the steady state distribution $\rho=\rho(\bm x)$ is a smooth function of $\bm x$ and decays to 0 when approaching the boundary $\partial M$. For the perturbed SDE \eqref{eqn:sde_p}, the following generalized first FDT holds for state space observable $u=u(\bm x)$:
\begin{equation}\label{Generalized_1st_FDT}
\begin{aligned}
\la u(t)\rangle_{\rho_{\delta}}-\la u(t)\ra=-\sum_{i=1}^d\int_0^{t}&\left\la\frac{1}{\rho} [\partial_{x_i}\tilde{F}_i\rho]u(t-s)\right\ra\delta G_i(s)ds\\
&+\sum_{i=1}^d\int_0^t\left\la\left(\partial_{x_i}^2\tilde{\sigma}_i+\frac{2}{\rho}\partial_{x_i}\tilde{\sigma}_i\partial_{x_i}\rho+\frac{1}{\rho}\tilde{\sigma}_i\partial_{x_i}^2\rho\right)u(t-s)\right\ra\delta H_i^2(s)ds+O(\delta^2).
\end{aligned}
\end{equation}
If the perturbative forces $\delta \bm G(t)$ and $\sqrt{\delta} \bm H(t)$ are homogeneous, i.e. $\delta G_i(t)=\delta G(t)$ and $\sqrt{\delta} H_i(t)=\sqrt{\delta} H(t)$, then the generalized first FDT \eqref{Generalized_1st_FDT} has a vector-form representation:
\begin{equation}\label{Generalized_1st_FDT2}
\begin{aligned}
\la u(t)\rangle_{\rho_{\delta}}-\la u(t)\ra=-\int_0^{t}&\left\la\frac{1}{\rho} \nabla\cdot[\tilde{\bm F}\rho]u(t-s)\right\ra\delta G(s)ds\\
+
&\int_0^t\left\la\left(\nabla\cdot\overrightarrow{L_{\tilde{\bm \sigma}}}
+\frac{2}{\rho}\overrightarrow{L_{\tilde{\bm \sigma}}}\cdot\nabla\rho
+\frac{1}{\rho}\D_{\tilde{\bm \sigma}}\rho\right)u(t-s)\right\ra\delta H^2(s)ds+O(\delta^2),
\end{aligned}
\end{equation}
where the vector function $\overrightarrow{L_{\tilde{\bm \sigma}}}=[\partial_{x_i}\tilde{\sigma_i}]_{i=1}^d$ and $\D_{\tilde{\bm \sigma}}$ is a second-order differential operator defined as $\D_{\tilde{\bm \sigma}}=\sum_{i=1}^d\tilde{\sigma_i}\partial_{x_i}^2$. 
\end{theorem}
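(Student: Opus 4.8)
The plan is to pass to the Schrödinger (forward) picture and compute the response by propagating the perturbed probability density rather than the observable. First I would split the perturbed backward operator displayed before the theorem as $\K=\K_0+\delta\K_1(t)$, where $\K_0=\sum_i F_i\partial_{x_i}+\sigma_i^2\partial_{x_i}^2$ generates the unperturbed dynamics and the $O(\delta)$ piece
$$\delta\K_1(t)=\sum_{i=1}^d\delta G_i(t)\,\tilde F_i\partial_{x_i}+\delta H_i^2(t)\,\tilde\sigma_i\partial_{x_i}^2$$
collects the frictional and the fluctuating perturbations (I write the effective diffusion coefficient as $\tilde\sigma_i$ to match the convention $\D_{\tilde{\bm\sigma}}=\sum_i\tilde\sigma_i\partial_{x_i}^2$ of the statement; both contributions are genuinely $O(\delta)$ because the squared noise amplitude $\sqrt\delta$ enters only through $\delta H_i^2$). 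Writing $\rho_\delta(\bm x,t)$ for the density of the solution of \eqref{eqn:sde_p} started from $\bm x(0)\sim\rho$, it solves the Kolmogorov forward equation $\partial_t\rho_\delta=(\K_0^*+\delta\K_1^*(t))\rho_\delta$ with $\rho_\delta(\cdot,0)=\rho$, and the target quantity is $\la u(t)\rangle_{\rho_\delta}=\int_M u\,\rho_\delta(\cdot,t)\,d\bm x$.

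Second, I would linearize in $\delta$. Expanding $\rho_\delta=\rho+\delta\rho_1(t)+O(\delta^2)$ and using the stationarity $\K_0^*\rho=0$ from \eqref{Fokker-Planck}, the zeroth order reproduces the unperturbed stationary average $\la u(t)\ra=\int_M u\,\rho\,d\bm x$, while the first-order correction obeys $\partial_t\rho_1=\K_0^*\rho_1+\K_1^*(t)\rho$ with $\rho_1(0)=0$. Duhamel's principle then gives the closed form $\rho_1(t)=\int_0^t e^{(t-s)\K_0^*}\K_1^*(s)\rho\,ds$. Inserting this and using the $L^2(M)$ duality $\int_M u\,(e^{\tau\K_0^*}g)\,d\bm x=\int_M (e^{\tau\K_0}u)\,g\,d\bm x$ to move the unperturbed semigroup onto the observable, I obtain
$$\la u(t)\rangle_{\rho_{\delta}}-\la u(t)\ra=\int_0^t\!\!\int_M u(t-s)\,\bigl(\delta\K_1(s)\bigr)^{*}\rho\,d\bm x\,ds+O(\delta^2),$$
where $u(t-s)=e^{(t-s)\K_0}u$ is exactly the backward-evolved observable that appears in \eqref{Generalized_1st_FDT}.

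Third, I would evaluate $\bigl(\delta\K_1(s)\bigr)^{*}\rho$ term by term by integration by parts. The drift adjoint gives $(\tilde F_i\partial_{x_i})^*\rho=-\partial_{x_i}(\tilde F_i\rho)$ and the diffusion adjoint gives $(\tilde\sigma_i\partial_{x_i}^2)^*\rho=\partial_{x_i}^2(\tilde\sigma_i\rho)$; expanding the latter by the product rule produces precisely $(\partial_{x_i}^2\tilde\sigma_i)\rho+2(\partial_{x_i}\tilde\sigma_i)(\partial_{x_i}\rho)+\tilde\sigma_i\partial_{x_i}^2\rho$. Finally, reweighting each bare spatial integral as $\int_M(\cdot)\,d\bm x=\int_M \tfrac1\rho(\cdot)\,\rho\,d\bm x$ converts them into the $\rho$-weighted ensemble averages of \eqref{ip}, yielding the two bracketed coefficients of \eqref{Generalized_1st_FDT} with the correct signs (minus for the friction term, plus for the diffusion term). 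The homogeneous case \eqref{Generalized_1st_FDT2} then follows by setting $\delta G_i=\delta G$, $\delta H_i=\delta H$ and repackaging the sums over $i$ through $\overrightarrow{L_{\tilde{\bm\sigma}}}=[\partial_{x_i}\tilde\sigma_i]_{i=1}^d$ and $\D_{\tilde{\bm\sigma}}=\sum_i\tilde\sigma_i\partial_{x_i}^2$.

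The step I expect to be the main obstacle is the rigorous control of the two approximations the physics leaves implicit: discarding the boundary terms in the integration by parts, and certifying that the Duhamel remainder is genuinely $O(\delta^2)$. The boundary terms vanish under the standing assumption that $\rho$ is smooth and decays to $0$ at $\partial M$ (together with slow enough growth of $u$, $\tilde F_i$ and $\tilde\sigma_i$), so the delicate point is to ensure the perturbation preserves these decay and integrability properties uniformly in small $\delta$ and that the semigroup $e^{\tau\K_0}$ is bounded on the relevant space; this is exactly where the degenerate-elliptic, hypoelliptic nature of $\K_0$ emphasized earlier enters. Everything else reduces to the bookkeeping computation sketched above.
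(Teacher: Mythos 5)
Your proposal is correct and follows essentially the same route as the paper's proof in its appendix: perturbative expansion of the Fokker--Planck equation about the stationary density, a Duhamel representation of the first-order correction, $L^2(M)$ duality to shift the unperturbed semigroup onto the observable, and integration by parts (with boundary terms killed by the decay of $\rho$ at $\partial M$) to produce the two bracketed response kernels. Your write-up is in fact somewhat more explicit than the paper's about the adjoint computations and the notational mismatch between $\tilde\sigma_i^2$ in the perturbed generator and $\tilde\sigma_i$ in the stated formula, but the underlying argument is identical.
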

The proof of Theorem \ref{thm1:1st-FDT} follows a perturbation analysis similar to the one used in \cite{kubo1966fluctuation} or more recent papers such as \cite{gritsun2008climate,dal2019linear}. We also outline the procedure in \ref{sec:APP1-FDT_proof}. In the absence of the stochastic perturbation, i.e. $\sqrt{\delta} \bm H(t)=0$, for homogeneous perturbation $\delta G_i(t)=\delta G(t)$, we get the commonly used linear-response relation \cite{gritsun2008climate,dal2019linear}:
\begin{align}\label{Generalized_1st_FDT3}
\la u(t)\rangle_{\rho_{\delta}}=-\int_0^{t}\left\la\frac{1}{\rho} \nabla\cdot[\tilde{\bm F}\rho]u(t-s)\right\ra\delta G(s)ds.
\end{align}
We note that to apply formulas \eqref{Generalized_1st_FDT}-\eqref{Generalized_1st_FDT3}, it is required to know the steady state distribution $\rho$. For nonequilibrium systems, this is generally hard to obtain due to high dimensionality of the Fokker-Planck equation \eqref{Fokker-Planck}. Various approaches such as the Gaussian approximation \cite{gritsun2008climate} and the information theory method \cite{majda2005information} were proposed to address this issue. As we will see in the following section, the generalized second FDT also contains additional terms that involve $\rho$. However, for specifically chosen observables $u(\bm x)$, calculation of the density $\rho$ can be avoided.  
\subsection{Derivation of the second FDT}
The second flutucation-dissipation theorem gives the proportionality between the noise amplitude and the friction kernel in the GLE. For open systems, depending on the form of the GLE and the observable of interest, the explicit expression of the second FDT will be different. A well-organized review in this regard is given by Maes in \cite{maes2014second}. In this section, we provide a novel way to generalize the second FDT using the first-principle GLE derived in Section \ref{sec:EMZE}. The derivation only uses the mathematical properties of the Kolmogorov operator $\K$. Without loss of generality, we consider the SDE \eqref{eqn:sde} with the Kolmogorov (backward) operator \eqref{KI} of the form:
\begin{align}\label{K_ini}
\K=\L+\D=\sum_{i=1}^NF_i(\bm x)\partial_{x_i}+\sum_{i,j=1}^N\sigma_i\sigma_j\partial^2_{x_ix_j},
\end{align}
where $\sigma_i, \sigma_j$ are constants. Note that here $\L$ represents a general advection operator instead of the skew-symmetric Liouville operator. For such a stochastic system in the steady state $\rho$, by introducing the Mori-type projection operator \eqref{Mori_P} in weighted Hilbert space $L^2(M,\rho)$, we can prove the following generalized second FDT for the GLE \eqref{gle_full}-\eqref{gle_projected}. This is the main theoretical result of this paper. 
\begin{theorem}\label{thm:2nd-FDT}
(Generalized-2nd-FDT) For SDE with infinitesimal generator \eqref{K_ini}, if we assume that the steady state distribution $\rho$ is a smooth function of $\bm x$ and decays to $0$ when approaching the boundary $\partial M$, then the following generalized second FDT holds for any state space observable $u=u(\bm x)$ in GLE \eqref{gle_full}-\eqref{gle_projected}:
\begin{align}\label{general_2nd_FDT}
K(t)=-\frac{\langle f(0), f(t)\rangle_{\rho}}{\langle u^2(0)\rangle_{\rho}}+\frac{\langle w(0), f(t)\rangle_{\rho}}{\langle u^2(0)\rangle_{\rho}},
\end{align}
where $w(0)=\S u(0)$ and $\S$ is a second-order differential operator  defined as 
\begin{align}\label{W_exact_form}
\S=2\sum_{i,j=1}^N\sigma_i\sigma_j\partial^2_{x_ix_j}+\frac{1}{\rho}\sum_{i,j=1}^N\sigma_i\sigma_j\partial_{x_i}\rho\partial_{x_j}+\sigma_i\sigma_j\partial_{x_j}\rho\partial_{x_i}.
\end{align}
For a special case where $\K$ is a degenerate elliptic operator and the diffusion term is of the diagonal form $\D=\sigma^2\Delta$, where $\Delta$ is a $n$-dimensional Laplacian with $n\leq N$, then $\S$ is an $n$-dimensional operator and admits a simple form:
\begin{align}\label{W_simple_form}
\S=2\sigma^2\Delta+2\sigma^2\nabla(\ln\rho)\cdot\nabla.
\end{align}
\end{theorem}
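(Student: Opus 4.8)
The plan is to begin from the scalar specialization of the memory kernel \eqref{SFD},
\[
K(t)=\frac{\la u(0),\K e^{t\Q\K}\Q\K u(0)\ra}{\la u^2(0)\ra}=\frac{\la u(0),\K f(t)\ra}{\la u^2(0)\ra},
\]
and to transfer the outermost $\K$ onto the first slot of the inner product. In the deterministic Hamiltonian case this is immediate because $\L$ is skew-symmetric in $L^2(M,\rho_{eq})$, and that is exactly what reproduces the classical second FDT at the end of Section \ref{sec:EMZE}. For a genuine SDE the generator $\K$ fails to be skew-symmetric, so the core of the argument is to compute its adjoint $\K^{\dagger}$ with respect to $\la\cdot,\cdot\ra$, defined by $\la\K h,g\ra=\la h,\K^{\dagger}g\ra$, and to split it as $\K^{\dagger}=-\K+\S$ with $\S:=\K^{\dagger}+\K$. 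The operator $\S$ measures the failure of skew-symmetry and is precisely the object that will generate the extra term in \eqref{general_2nd_FDT}.

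Next I would compute $\K^{\dagger}$ explicitly. Writing $\la\K h,g\ra=\int(\K h)\,g\rho\,d\bm x=\int h\,\K^{*}(g\rho)\,d\bm x$, where $\K^{*}$ is the unweighted $L^2(M)$-adjoint, yields the pointwise formula $\K^{\dagger}g=\rho^{-1}\K^{*}(g\rho)$; every boundary contribution arising in the integrations by parts vanishes by the assumed decay of $\rho$ at $\partial M$. I would then handle the advection part $\L=\sum_iF_i\partial_{x_i}$ and the constant-coefficient diffusion part $\D=\sum_{i,j}\sigma_i\sigma_j\partial^2_{x_ix_j}$ separately, expanding $\L^{*}(g\rho)=-\sum_i\partial_{x_i}(F_ig\rho)$ and $\D^{*}(g\rho)=\D(g\rho)$ by the product rule and dividing by $\rho$. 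This produces second-order, first-order, and zeroth-order contributions to $\K^{\dagger}$.

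The decisive step, and the one I expect to be the crux, is the cancellation of the zeroth-order part of $\K^{\dagger}$. Collecting the terms that multiply $g$ with no derivative gives exactly $\rho^{-1}\big[-\sum_i\partial_{x_i}(F_i\rho)+\sum_{i,j}\sigma_i\sigma_j\partial^2_{x_ix_j}\rho\big]=\rho^{-1}\K^{*}\rho$, which vanishes identically because $\rho$ is the invariant density, i.e. $\K^{*}\rho=0$ by the Fokker--Planck equation \eqref{Fokker-Planck}. Hence no multiplication operator survives, and one is left with
\[
\K^{\dagger}=-\L+\D+\frac{1}{\rho}\sum_{i,j=1}^N\sigma_i\sigma_j\big(\partial_{x_i}\rho\,\partial_{x_j}+\partial_{x_j}\rho\,\partial_{x_i}\big).
\]
Adding $\K=\L+\D$ then cancels the advection and doubles the diffusion, giving $\S=\K^{\dagger}+\K=2\D+\rho^{-1}\sum_{i,j}\sigma_i\sigma_j(\partial_{x_i}\rho\,\partial_{x_j}+\partial_{x_j}\rho\,\partial_{x_i})$, which is precisely \eqref{W_exact_form}. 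Without this stationarity-driven cancellation the operator $\S$ would carry a spurious multiplicative piece and the clean structure of \eqref{general_2nd_FDT} would be lost, so establishing $\K^{*}\rho=0$ and wiring it into the adjoint is the load-bearing part of the proof.

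Finally I would assemble the identity. From $\K^{\dagger}u(0)=-\K u(0)+\S u(0)$ and the adjoint relation,
\[
K(t)=\frac{\la\K^{\dagger}u(0),f(t)\ra}{\la u^2(0)\ra}=-\frac{\la\K u(0),f(t)\ra}{\la u^2(0)\ra}+\frac{\la\S u(0),f(t)\ra}{\la u^2(0)\ra}.
\]
The remaining ingredient is the standard range property $\P f(t)=0$: since $f(0)=\Q\K u(0)$ lies in $\operatorname{Ran}\Q$ and $\frac{d}{dt}\P f(t)=\P\Q\K f(t)=0$, the fluctuation force stays orthogonal to $u(0)$, so $\Q f(t)=f(t)$. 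Using the symmetry of $\Q$ in $L^2(M,\rho)$ gives $\la\K u(0),f(t)\ra=\la\Q\K u(0),f(t)\ra=\la f(0),f(t)\ra$, and writing $w(0)=\S u(0)$ yields \eqref{general_2nd_FDT}. The simplified form \eqref{W_simple_form} then follows by specializing to the diagonal case $\sigma_i\sigma_j=\sigma^2\delta_{ij}$ over the $n$ non-degenerate coordinates, where the two first-order sums coalesce into $2\sigma^2\rho^{-1}\nabla\rho\cdot\nabla=2\sigma^2\nabla(\ln\rho)\cdot\nabla$.
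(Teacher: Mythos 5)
Your proposal is correct and follows essentially the same route as the paper's proof: compute the $L^2(M,\rho)$-adjoint of $\K$ (your conjugation formula $\K^{\dagger}g=\rho^{-1}\K^{*}(g\rho)$ is just the packaged form of the paper's term-by-term integration by parts), use the stationary Fokker--Planck equation $\K^{*}\rho=0$ to cancel the zeroth-order multiplication operator and obtain $\K^{*}_{\rho}=-\K+\S$ with $\S$ as in \eqref{W_exact_form}, and then invoke $\Q^{*}=\Q=\Q^{2}$ together with $\Q f(t)=f(t)$ to convert $-\la\K u(0),f(t)\ra$ into $-\la f(0),f(t)\ra$. Your explicit justification that $\P f(t)=0$ for all $t$ is a welcome detail the paper leaves implicit, but it does not change the argument.
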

\begin{proof}
For any operator $\O$ defined in weighted Hilbert space $L^2(M,\rho)$, where $\rho$ is the steady state distribution, we denote $\O^*_{\rho}$ as the adjoint operator of $\O$ in $L^2(M,\rho)$ with respect to inner product $\langle\cdot,\cdot\rangle_{\rho}$. Correspondingly, we denote $\O^*$ as the adjoint operator of the flat Hilbert space $\O$ in $L^2(M)$ with respect to the weight-less inner product $\langle\cdot,\cdot\rangle$. First we aim to find the $L^2(M,\rho)$ adjoint of the Kolmogorov operator $\K$ \eqref{K_ini}. Using the integration by parts formula and the fact that $\rho(\bm x)=0$ at $\partial M$, it is easy to obtain
\begin{align}
\L+\L^*_{\rho}&=-\nabla\cdot \bm F(\bm x)- \frac{1}{\rho}\L\rho. \label{formula:L+L*}
\end{align}
Similarly, for the second-order differential operator $\D$ in \eqref{K_ini}, using the integration by parts formula twice leads to
\begin{equation}\label{formula:D*}
\begin{aligned}
\D^*_{\rho}=\D+\frac{1}{\rho}\sum_{i,j=1}^N\sigma_i\sigma_j\partial_{x_i}\rho\partial_{x_j}+\sigma_i\sigma_j\partial_{x_j}\rho\partial_{x_i}+\frac{1}{\rho}\D\rho.
\end{aligned}
\end{equation}
Naturally we have 
\begin{equation}\label{formula:D*+D_general}
\begin{aligned}
\D+\D^*_{\rho}=2\D+\frac{1}{\rho}\sum_{i,j=1}^N\sigma_i\sigma_j\partial_{x_i}\rho\partial_{x_j}+\sigma_i\sigma_j\partial_{x_j}\rho\partial_{x_i}+\frac{1}{\rho}\D\rho=\S+\frac{1}{\rho}\D\rho,
\end{aligned}
\end{equation}
where operator $\S$ is defined as \eqref{W_exact_form}. For the special case where $\D=\sigma^2\Delta$, it is easy to see that $\S$ has the simple form \eqref{W_simple_form}. Summing up formulas \eqref{formula:L+L*} and \eqref{formula:D*+D_general}, we can get that 
\begin{align}\label{K_sum}
\K_{\rho}^*+\K=-\nabla\cdot\bm F(\bm x)-\frac{1}{\rho}\L\rho+\frac{1}{\rho}\D\rho+\S.
\end{align}
Using a similar procedure for operator $\L$ and $\D$ defined in the flat Hilbert space $L^2(M)$, we have
\begin{align}\label{formula:LD_flat}
\L+\L^*=-\nabla\cdot \bm F(\bm x), \qquad \D^*=\D.
\end{align}
On the other hand, since $\rho$ satisfies the steady state Fokker-Planck equation $\partial_t\rho=\K^*\rho=0$, we obtain an operator identity $\partial_t\rho=\K^*\rho=\L^*\rho+\D^*\rho=0$. Combining this with formula \eqref{formula:LD_flat} and noting that $-\nabla\cdot\bm F(\bm x)$ is a multiplication operator, we obtain 
\begin{equation}
\left.\begin{aligned}
\L\rho+\L^*\rho&=-\rho\nabla\cdot\bm F(\bm x)\\
\D^*\rho&=\D\rho
\end{aligned}\right\}
\Rightarrow
-\frac{1}{\rho}\L\rho+\frac{1}{\rho}\D\rho-\nabla\cdot\bm F(\bm x)=0.
\end{equation}
Substituting this relation into \eqref{K_sum} we get $\K^*_{\rho}=-\K+\S$. Now we note that symmetric projection operator $\Q$ satisfies $\Q^*=\Q=\Q^2$. For the formally defined GLE memory kernel \eqref{SFD}, we obtain the generalized second FDT:
\begin{align*}
\langle u^2(0)\rangle_{\rho}K(t)=\la u(0),\K e^{t\Q\K}u(0)\ra&=\la \K^*_{\rho}u(0),e^{t\Q\K}\Q\K u(0)\ra \\
&=-\la \Q\K u(0), e^{t\Q\K}\Q\K u(0)\ra+\la w(0),e^{t\Q\K}\Q\K u(0)\ra\\
&=-\la f(0), f(t)\ra+\la w(0), f(t)\ra.\
\end{align*}
\end{proof}
Theorem \ref{thm:2nd-FDT} can be readily generalized to the $N$-dimensional GLE \eqref{gle_full}-\eqref{gle_projected} and for the general Kolmogorov operator \eqref{KI}. When the dissipative term in the Kolmogorov operator \eqref{K_ini} is given by $\D=\sum_{i,j=1}^N\sigma_i(\bm x)\sigma_j(\bm x)\partial^2_{x_i,x_j}$, then we obtain $w(0)=\S u(0)$ with
\begin{align}\label{W_sigma}  \S=\sum_{i,j=1}^N2\sigma_i\sigma_j\partial^2_{x_ix_j}+\frac{1}{\rho}\sigma_i\sigma_j\partial_{x_i}\rho\partial_{x_j}+\frac{1}{\rho}\sigma_i\sigma_j\partial_{x_j}\rho\partial_{x_i}+\partial_{x_i}[\sigma_i\sigma_j]\partial_{x_j}+\partial_{x_j}[\sigma_i\sigma_j]\partial_{x_i},
\end{align}
where the shorthand notation $\sigma_i(\bm x)=\sigma_i$ and $\sigma_j(\bm x)=\sigma_j$ are used. The derivation of \eqref{W_sigma} follows directly from the proof of Theorem \ref{thm:2nd-FDT} and is given in \ref{sec:APP-2FDT_proof}. On the other hand, the result of Theorem \ref{thm:2nd-FDT} holds for an arbitrary Mori-type projection operator $\P$. Hence for the $N$-dimensional GLEs \eqref{gle_full}-\eqref{gle_projected}, we have 
\begin{align}\label{KK}
K_{ij}(t)=\sum_{k=1}^N G^{-1}_{jk}(-\langle f_{k}(0), f_i(t)\rangle_{\rho}
+\langle \S u_{k}(0), f_i(t)\rangle_{\rho}).
\end{align}
When one tries to apply the first and the second FDTs such as \eqref{Generalized_1st_FDT3} and \eqref{general_2nd_FDT}, it is normally necessary to know the exact form of the steady state distribution $\rho$, which, however, is generally hard to deduce for nonequilibrium systems such as turbulence \cite{majda2005information} and heat conduction models \cite{lepri1998anomalous}. This difficulty can be bypassed  because of the following corollary which is already announced in Section \ref{sec:short_sum}:
\begin{corollary}\label{corollary_2ndFDT0}
If the phase space observable $u=u(\bm p(t), \bm q(t))$ is only a function of the degenerate coordinates of the stochastic system, then $w(0)=\S u(0)=0$ and the GLE \eqref{gle_full}-\eqref{gle_projected} for $u(t)$ satisfies the classical second FDT. 
\end{corollary}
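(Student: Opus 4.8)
The plan is to read off the corollary directly from Theorem \ref{thm:2nd-FDT}. That theorem already gives
$$K(t)=-\frac{\langle f(0),f(t)\rangle_{\rho}}{\langle u^2(0)\rangle_{\rho}}+\frac{\langle w(0),f(t)\rangle_{\rho}}{\langle u^2(0)\rangle_{\rho}},\qquad w(0)=\S u(0),$$
so the whole statement collapses to a single assertion: that the operator $\S$ annihilates any observable depending on the degenerate coordinates alone. Everything then reduces to inspecting the structure of $\S$, given by \eqref{W_simple_form} in the diagonal case and by \eqref{W_sigma} in general.

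The key point is to make precise the correspondence between ``degenerate coordinate'' and the vanishing of the noise coefficient. By definition the degenerate coordinates are exactly those directions along which no white noise acts, i.e. those indices $k$ with $\sigma_k\equiv 0$; in the Langevin example of Section \ref{sec:short_sum} these are the position coordinates $q_i$. The first step is therefore to observe that every summand of $\S$ in \eqref{W_sigma} carries a factor $\sigma_i\sigma_j$ or a derivative of it, while the differential operator acting on $u$ in that summand is $\partial_{x_i}$ or $\partial_{x_j}$. In the diagonal case \eqref{W_simple_form} this is transparent: $\S=2\sigma^2\Delta+2\sigma^2\nabla(\ln\rho)\cdot\nabla$, where, as Theorem \ref{thm:2nd-FDT} records, $\Delta$ and $\nabla$ are the $n$-dimensional Laplacian and gradient in the \emph{non-degenerate} block only. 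Thus $\S$ differentiates $u$ exclusively along non-degenerate directions.

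The second step is then immediate. If $u=u(\bm q)$ depends only on the degenerate coordinates, then $\partial_{x_k}u\equiv 0$ for every non-degenerate index $k$, and hence $\Delta u=0$ and $\nabla u=0$; each term of $\S u$ vanishes and $w(0)=\S u(0)=0$. This is exactly the one-line computation displayed for Langevin dynamics in Section \ref{sec:short_sum}, where $\S=2\sigma^2\partial_{p_i}^2+2\sigma^2\partial_{p_i}(\ln\rho)\partial_{p_i}$ kills any $u=f(q_i)$. Substituting $w(0)=0$ into \eqref{general_2nd_FDT} (or into the matrix form \eqref{KK}) leaves only the first term, which is precisely the classical second FDT $K(t)=-\langle f(0),f(t)\rangle_{\rho}/\langle u^2(0)\rangle_{\rho}$.

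I do not expect a substantive obstacle: the corollary is a structural consequence of Theorem \ref{thm:2nd-FDT}. The only point requiring genuine care is the bookkeeping in the general, non-constant-$\sigma$ form \eqref{W_sigma}, where one must check that in each summand the operator acting on $u$ is paired with a coefficient $\sigma_i$ or $\sigma_j$ (or a derivative of $\sigma_i\sigma_j$) that forces the relevant index to be non-degenerate; once this pairing is verified, the vanishing follows exactly as above. A secondary subtlety worth stating explicitly is that the hypothesis cannot be weakened to ``$u$ depends on at least one degenerate coordinate'': if $u$ depended on even a single non-degenerate variable, the corresponding derivatives in $\S$ could survive and the extra term $\langle w(0),f(t)\rangle_{\rho}$ would generically be nonzero.
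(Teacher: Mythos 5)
Your argument is correct and is essentially the paper's own proof: the authors likewise dispose of the corollary by noting that $\S$ acts only in the non-degenerate coordinates, so it annihilates any observable of the degenerate coordinates alone, giving $w(0)=0$ and reducing \eqref{general_2nd_FDT} to the classical second FDT. Your added bookkeeping for the general form \eqref{W_sigma} is a harmless elaboration of the same idea.
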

The proof is obvious since $\S$ is an operator in the {\em non-degenerate} coordinate. A typical example is the Langevin dynamics for a molecular dynamical system. Since the Gaussian white noise is only imposed in the momentum coordinate $p_i$, the position coordinate $q_i$ is therefore {\em degenerate}. Hence we have $w(0)=\S u(0)=\S f(q_i(0))=0$. 
For more general, nonequilibrium systems in the steady state, this result still holds which is somewhat surprising because the nonequilibrium steady state (NESS) measure is generally unknown! As we will see in the following section, many statistical physics models are generated by highly degenerate elliptic operator $\K$, i.e. the dissipative forces act on a small subset of the phase space coordinates. For such models, if the observable $u(\bm x)$ of interest is a function of the degenerate coordinate, we can simply avoid the calculation of $\rho$ in evaluating $w(0)$ and use the classical second FDT to build reduced-order models for the observable. An application of this fact in heat conduction problems is presented in Section \ref{sec:Numerical}. We also have the following result:
\begin{corollary}\label{corollary_2ndFDT}
For SDE with infinitesimal generator \eqref{KI}, if the linear GLEs \eqref{gle_full}-\eqref{gle_projected} for observable $\bm u(t)$ satisfying the classical second FDT with $\bm w(0)=0$, then $\Omega_{ij}=0$ in \eqref{gle_full}-\eqref{gle_projected}.
\end{corollary}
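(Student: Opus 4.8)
The plan is to recast the matrix assertion as a single linear-algebra statement and then exploit the adjoint identity for $\K$. Writing the streaming matrix \eqref{streaming} as $\Omega=A^{T}G^{-1}$, with $G$ the symmetric Gram matrix \eqref{gram} and $A$ the frequency matrix $A_{ki}=\langle u_k(0),\K u_i(0)\rangle_{\rho}$, and using invertibility of $G$, the assertion that $\Omega_{ij}=0$ for every $i,j$ is equivalent to $A\equiv 0$. So everything reduces to showing that each entry $\langle u_k(0),\K u_i(0)\rangle_{\rho}$ vanishes, and in particular the off-diagonal ones.

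First I would extract the antisymmetry of $A$. The proof of Theorem \ref{thm:2nd-FDT} gives the operator identity $\K^{*}_{\rho}=-\K+\S$ on $L^2(M,\rho)$, while the hypothesis $\bm w(0)=\S\bm u(0)=0$ means $\S u_k(0)=0$ for every $k$, so $\K^{*}_{\rho}u_k(0)=-\K u_k(0)$. Moving $\K$ across the inner product then yields $A_{ki}=\langle \K^{*}_{\rho}u_k(0),u_i(0)\rangle_{\rho}=-\langle \K u_k(0),u_i(0)\rangle_{\rho}=-A_{ik}$, so $A$ (equivalently $\Omega G$) is skew-symmetric with vanishing diagonal. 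For a scalar observable ($N=1$) this alone forces $A=0$, hence $\Omega=0$; but since the corollary is stated for the vector $\bm u$, the diagonal entries are not enough and I must still kill the off-diagonal frequencies.

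To do so I would look for a complementary symmetric relation $A_{ki}=A_{ik}$, since skew-symmetry together with symmetry gives $A=0$. Stationarity is the natural source: because $\K^{*}\rho=0$ one has $\langle\K g\rangle_{\rho}=0$ for every $g$, and taking $g=u_k u_i$ together with the Leibniz-type expansion $\K(u_k u_i)=u_k\K u_i+u_i\K u_k+2\Gamma(u_k,u_i)$, where $\Gamma(u_k,u_i)=\sum_{a,b}\sigma_a\sigma_b\,\partial_{x_a}u_k\,\partial_{x_b}u_i$, gives $A_{ki}+A_{ik}=-2\langle\Gamma(u_k,u_i)\rangle_{\rho}$. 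Integration by parts shows $\langle\Gamma(u_k,u_i)\rangle_{\rho}=-\tfrac12\langle u_k,\S u_i\rangle_{\rho}$, which is $0$ under the hypothesis; unfortunately this only returns $A_{ki}+A_{ik}=0$, reproducing skew-symmetry and supplying nothing new.

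The genuine obstacle is therefore exactly this passage from skew-symmetry to full vanishing of the off-diagonal frequencies, and I would have to confront the fact that it can fail in the NESS. For the heat-conduction chain with $u_i=q_i$ the positions are degenerate coordinates, so $\S u_i=0$ and the hypothesis holds, yet $\K q_i=p_i/m$ gives $A_{i+1,i}=\langle q_{i+1},\K q_i\rangle_{\rho}=\tfrac1m\langle q_{i+1}p_i\rangle_{\rho}$, which is proportional to the stationary heat current and is nonzero away from equilibrium. Thus $\Omega_{ij}=0$ holds unconditionally only for the scalar observable that the transport application actually uses; for genuine vector observables it needs an extra input, e.g.\ a time-reversal/parity symmetry of the $u_k$ forcing $\langle u_k,\K u_i\rangle_{\rho}=0$, which is automatic at equilibrium but broken by the current in the NESS. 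I would accordingly prove the corollary for scalar observables (or under such a symmetry) rather than for arbitrary $\bm u$.
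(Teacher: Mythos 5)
Your scalar argument is precisely the paper's proof: the paper computes $\Omega\langle u^2(0)\rangle_{\rho}=\langle \K u(0),u(0)\rangle_{\rho}=\langle u(0),\K_{\rho}^{*}u(0)\rangle_{\rho}=-\langle u(0),\K u(0)\rangle_{\rho}+\langle u(0),w(0)\rangle_{\rho}$ and concludes $\Omega=0$ when $w(0)=0$, exactly as in your second paragraph. Where you part ways with the paper is its opening sentence, ``it is sufficient to prove the one-dimensional case,'' which is asserted without justification --- and you are right to reject it. For $N\geq 2$ the identity $\K_{\rho}^{*}=-\K+\S$ together with $\S u_k(0)=0$ yields only skew-symmetry of $A_{ki}=\langle u_k(0),\K u_i(0)\rangle_{\rho}$ (equivalently of $\Omega G$, and since $G$ is invertible, $\Omega=0$ iff $A=0$), which kills the diagonal but not the off-diagonal frequencies. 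Your attempt at a complementary symmetric relation via stationarity is also correctly diagnosed as circular: $\langle\K(u_ku_i)\rangle_{\rho}=0$ plus the product rule for the generator \eqref{K_ini} gives $A_{ki}+A_{ik}=-2\langle\Gamma(u_k,u_i)\rangle_{\rho}$, which under the hypothesis is just skew-symmetry again, since $\langle u_k,\S u_i\rangle_{\rho}=-2\langle\Gamma(u_k,u_i)\rangle_{\rho}$.

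Your counterexample is sound and shows the vector form of Corollary \ref{corollary_2ndFDT}, as stated, fails in the NESS. For the chain \eqref{SDE:n_d_heat} with $u_i=q_i$ one has $\S q_i=0$ (positions are degenerate coordinates), yet $A_{i+1,i}=\langle q_{i+1}p_i\rangle_{\rho}$; for a harmonic interaction $V(r)=\tfrac{k}{2}r^2$ between two baths, stationarity of $\langle q_i^2\rangle_{\rho}$ gives $\langle q_ip_i\rangle_{\rho}=0$, so $\langle J_i\rangle_{\rho}=k\langle p_iq_{i+1}\rangle_{\rho}$ is exactly the stationary heat current, which is nonzero away from equilibrium (this is the classical Rieder--Lebowitz--Lieb covariance structure). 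Hence the corollary holds unconditionally only for scalar observables, or for vector families with an additional symmetry (e.g.\ time reversal at equilibrium) forcing $A$ to be symmetric as well as skew; your restricted restatement is the correct repair. The restriction costs nothing for the paper's purposes --- the applications in Section \ref{sec:Numerical} invoke the result only for scalar observables such as $p_M(t)$ and $J_{av}(t)$ --- and it is consistent with the paper's Note added in proof, where the equivalence $\Omega=0\Leftrightarrow w(0)=0$ is obtained from a scalar Volterra-equation argument.
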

\begin{proof}
It is sufficient to prove the one-dimensional case. According to the definition of $\Omega_{ij}$ \eqref{streaming}, for one-dimensional GLE \eqref{gle_full}-\eqref{gle_projected}, we have 
\begin{align*}
\Omega=\frac{\langle \K u(0),u(0)\rangle_{\rho}}{\langle u^2(0)\rangle_{\rho}}=\frac{\langle u(0),\K_{\rho}^*u(0)\rangle_{\rho}}{\langle u^2(0)\rangle_{\rho}}=-\frac{\langle u(0),\K u(0)\rangle_{\rho}}{\langle u^2(0)\rangle_{\rho}}+\frac{\langle u(0),w(0)\rangle_{\rho}}{\langle u^2(0)\rangle_{\rho}}=-\frac{\langle u(0),\K u(0)\rangle_{\rho}}{\langle u^2(0)\rangle_{\rho}}=0.
\end{align*}
\end{proof}
\paragraph{Remark} When compared with previous methods, e.g. \cite{maes2014second}, in deriving the (generalized) second FDT for stochastic systems, our approach does not rely on weak perturbation assumptions therefore is generally applicable to arbitrary stochastic systems driven by white noise. Moreover, the derivation only uses the properties of the Kolmogorov operator without applying any {\em ad hoc} approximations. On the other hand, although the presented example is based on the Mori-type projection operator which leads to linear GLEs, the above theory also applies to nonlinear GLEs such as the Zwanzig's equation \cite{zwanzig1973nonlinear}, essentially because the Zwanzig-type projection operator is an infinite-rank operator which is similar to the Mori-type projection operator. The proof is rather technical therefore will be deferred to \ref{sec:APP_Zwanzig}.

\section{The generalized second FDT for specific systems} \label{sec:Application_to_specifics}
\subsection{The generalized second FDT for equilibrium systems}
The application of the generalized FDTs to equilibrium systems leads to explicit expressions of formula \eqref{Generalized_1st_FDT}, \eqref{Generalized_1st_FDT2} and \eqref{general_2nd_FDT} since the equilibrium distribution $\rho$ is given by the Gibbs-Boltzmann form $\rho=e^{-\beta\H}/Z$ for the canonical ensemble. In this section, we will derive such expression for some frequently used statistical mechanics models. Before we move onto analyzing stochastic systems, it is worth noticing that for equilibrium systems generated by deterministic forces such as the Nos\'e-Hoover thermostats, the classical second FDT holds as a result of the skew-symmetry of the Liouville operator $\L$.
\paragraph{Langevin dynamics}
The Langevin dynamics for a $d$-dimensional system of $N$ interacting particles is given by the following SDE in $\R^{2d\times N}$: 
\begin{align}\label{eqn:LE}
\begin{dcases}
d\bm q_i=\frac{1}{m_i}\bm p_idt\\
d\bm p_i=\sum_{i\neq j}^N\bm F_{i,j}^Cdt-\frac{\gamma}{m_i}\bm p_idt+\sigma d\bm \W_{i}(t)
\end{dcases},
\end{align}
where $m_i$ is the mass of each particle, 
$\sum_{i\neq j}^N\bm F_{i,j}^C$ is the total conservative force acting on particle $i$, and 
$\bm \W_{i}(t)$ is a $d$-dimensional Wiener process which satisfies $d\bm \W_j(t)=\bm \xi_i(t)dt$ with
\begin{align*}
\langle \xi_{ij}(t)\rangle=0,\qquad 
\langle \xi_{ij}(t)\xi_{i'j'}(s)\rangle=2\gamma k_{B}T\delta_{ii'}\delta_{jj'}\delta(t-s).
\end{align*}   
The parameter $\gamma$ is the friction coefficient which is related to $\sigma$ through the fluctuation-dissipation relation 
$\sigma=(2\gamma/\beta)^{1/2}$, where $\beta=1/k_BT$, $k_B$ is the Boltzmann constant and $T$ the temperature of the equilibrium system. 
The stochastic dynamical system 
\eqref{eqn:LE} is widely used in the mesoscopic modelling of liquids and gases. The Kolmogorov backward operator \eqref{KI} associated with 
the SDE \eqref{eqn:LE} is given by
\begin{align}\label{Kolmo:LE}
\K=\sum_{i=1}^N\frac{\bm p_i}{m_i}\cdot\partial_{\bm q_i}+
\sum_{i,j\neq i}^N\bm F_{i,j}^C\cdot\partial_{\bm p_i}-\sum_{i}^N
\frac{\gamma\bm p_i}{m_i}\cdot\partial_{\bm p_i}+\sum_i^N\frac{\gamma}{\beta}\partial_{\bm p_i}\cdot\partial_{\bm p_i},
\end{align}
where ``$\cdot$'' denotes the standard dot product. 
If the interaction potential $V(\bm q)$ is strictly positive 
at infinity then the Langevin equation \eqref{eqn:LE} 
admits a unique invariant Gibbs measure 
given by
\begin{equation}
\rho_{eq}(\bm p,\bm q)=\frac{1}{Z} e^{-\beta \H(\bm p,\bm q)},
\end{equation}
where 
\begin{equation}
\H(\bm p,\bm q)=\sum_{i=1}^N\frac{\|\bm p_i\|_2^2}{2m_i}+V(\bm q)
\end{equation}
is the Hamiltonian and $Z$ is the partition function. The formal expression of the Gibbs-Boltzmann distribution enables us to get the explicit expression of the additional term $w(0)$ in the generalized second FDT \eqref{general_2nd_FDT}. As an example, the Langevin dynamics \eqref{eqn:LE} is often used to study the self-diffusion of Brownian particles, for which a relevant physical observable is the tagged particle velocity $\bm v_j$. By choosing $u(0)=p_{jx}/m_j$ and using \eqref{general_2nd_FDT} and \eqref{Kolmo:LE}, we obtain
\begin{align}\label{w0_Langevin}
w(0)=\sum_{i=1}^N\frac{\gamma}{\beta}\partial_{\bm p_i}\cdot\partial_{\bm p_j}\frac{p_{jx}}{m_j}
+\sum_{i=1}^N\frac{\sigma^2}{\rho_{eq}}\partial_{\bm p_{i}}\rho_{eq}\cdot\partial_{\bm p_{i}}\frac{p_{jx}}{m_j}=-\frac{2\gamma}{m_j^2}p_{jx}.
\end{align}
This implies that the GLE \eqref{gle_full} for observable $p_{jx}$ can be rewritten as
\begin{align*}
    \frac{d}{dt}p_{jx}(t)=\Omega p_{jx}(t)-\int_0^t\left\langle f(0)+\frac{2\gamma}{m_j^2}p_{jx}(0),f(t-s)\right\rangle_{\rho_{eq}}p_{jx}(s)ds+f(t).
\end{align*}
\paragraph{Remark} Here we obtained a somewhat counterintuitive conclusion stating that the {\em classical} second FDT for a Brownian particle does not hold in the statistical equilibrium since we have an additional term \eqref{w0_Langevin} in the GLE memory kernel. This is because instead of using Hamiltonian dynamics, we used {\em stochastic} dynamics, i.e. Eqn \eqref{eqn:LE}, to simulate the equilibrium. On the other hand, the form of the second FDT strongly depends on the GLE under investigation. For other GLEs such as the nonlinear ones derived by Zwanzig-type projection operators \cite{lei2016data,Li2015,hudson2018coarse}, it is possible that the {\em classical} second FDT still holds for a Brownian particle generated by the Langevin dynamics.
\paragraph{Dissipative particle dynamics}
For a $d$-dimensional interacting particle system of $N$ particles, the SDE that governs the particle position $\bm q_i$ and momentum $\bm p_i$ in dissipative particle dynamics (DPD) is given by \cite{Hoogerbrugge_SMH_1992,espanol1995statistical}
\begin{equation}\label{eqn:DPD}
\begin{aligned}
\begin{dcases}
d\bm q_i&=\frac{\bm p_i}{m_i}dt\\
d\bm p_i&=\sum_{i\neq j}^N\bm F_{ij}^C(\bm q_{ij})dt
-\sum_{i\neq j}^N\gamma\omega(q_{ij})(\bm e_{ij}\cdot\bm v_{ij})\bm e_{ij}dt+\sum_{i\neq j}^N\sigma\omega^{1/2}(q_{ij})\bm e_{ij}d \W_{ij}(t)
\end{dcases}
\end{aligned}
\end{equation}
where $m_i$ is the mass of $i$-th particle, $\bm q_{ij}=\bm q_i-\bm q_j$, $q_{ij}=\|\bm q_i-\bm q_j\|$, $\bm e_{ij}=\bm q_{ij}/q_{ij}$, $\bm v_{ij}=\bm v_i-\bm v_j$, $\bm v_i=\bm p_i/m_i$ and $\bm F^C_{ij}(\bm q_{ij})$ is the conservative force exerted on particle $i$ by particle $j$. The dimensionless weight function $\omega(q_{ij})$ provides the range of interactions of the dissipative and random forces. The friction coefficient and the noise intensity are linked with each other through the fluctuation-dissipation relation $\sigma=(2\gamma/\beta)^{1/2}$, where $\beta=1/k_BT$. For the DPD model, the frictional
forces are applied in a pair-wise form, such that the sum
of thermostating forces acting on a particle pair equals
zero. Hence for $d\W_{ij}(t)=\bm \xi_i(t)dt$, we have $\xi_{ij}(t)=\xi_{ji}(t)$ and 
\begin{align*}
\langle \xi_{ij}(t)\xi_{i'j'}(t) \rangle =(\delta_{ii'}\delta_{jj'}+\delta_{ij'}\delta_{ji'})\delta(t-s).
\end{align*}
The Kolmogorov backward operator associated with the DPD model \eqref{eqn:DPD} is given by \cite{espanol1995hydrodynamics}
\begin{align*}
\K=\sum_{i=1}^N\frac{\bm p_i}{m_i}\cdot\partial_{\bm q_i}
+\sum_{i,j\neq i}^N\bm F_{i,j}^C\cdot\partial_{\bm p_i}
-\sum_{i,j\neq i}^N\gamma\omega(q_{ij})(\bm e_{ij}\cdot \bm v_{ij})\bm e_{ij}\cdot\partial_{\bm p_i}
+\frac{1}{\beta}\sum_{i,j\neq i}^N\gamma\omega(q_{ij})(\partial_{\bm p_i}\cdot\partial_{\bm p_i}-\partial_{\bm p_i}\cdot\partial_{\bm p_j}).
\end{align*}
Similarly, for the $x$-directional velocity of the $h$-th particle in the DPD model, the additional term $w(0)$ in the generalized second FDT can be calculated using \eqref{W_sigma} as:
\begin{align*}
w(0)&=\S\frac{p_{hx}}{m_h}\\
&=\frac{1}{\beta}\sum_{i,j}^N\frac{\sigma_i\sigma_j\delta_{hj}\delta_{ij}}{m_h\rho_{eq}}\partial_{p_{ix}}\rho_{eq}
+\frac{\sigma_i\sigma_j\delta_{hi}\delta_{ij}}{m_h\rho_{eq}}\partial_{p_{jx}}\rho_{eq}
-\frac{1}{\beta}\sum_{i,i\neq j}^N\frac{\sigma_i\sigma_j\delta_{hj}}{m_h\rho_{eq}}\partial_{p_{ix}}\rho_{eq}
+\frac{\sigma_i\sigma_j\delta_{hi}}{m_h\rho_{eq}}\partial_{p_{jx}}\rho_{eq}\\
&=-\sum_{i,j}^N\frac{\sigma_i\sigma_j\delta_{hj}\delta_{ij}}{m_hm_i}p_{ix}+\frac{\sigma_i\sigma_j\delta_{hj}\delta_{ij}}{m_hm_j}p_{jx}
+\sum_{i,i\neq j}^N\frac{\sigma_i\sigma_j\delta_{hj}}{m_hm_i}p_{ix}
+\frac{\sigma_i\sigma_j\delta_{hi}}{m_hm_j}p_{jx},
\end{align*}
where $\sigma_i\sigma_j=\gamma\omega(q_{ij})$ and we used the fact that $\partial_{\bm p_i}[\sigma_i\sigma_j]=0$. The resulting GLEs \eqref{gle_full}-\eqref{gle_projected} can be obtained accordingly. For the Langevin dynamics and the DPD model, the corresponding Kolmogorov operators are both degenerate elliptic operators since white noise is only imposed in the momentum space. Hence if we choose the tagged particle position $\bm q_{j}$ as the quantity of interest, then the classical second FDT holds for GLEs \eqref{gle_full}-\eqref{gle_projected} as claimed in Section \ref{sec:FDT}.  
\subsection{The generalized second FDT for nonequilibrium systems}
\begin{figure}[t]
\centerline{
\includegraphics[height=3cm]{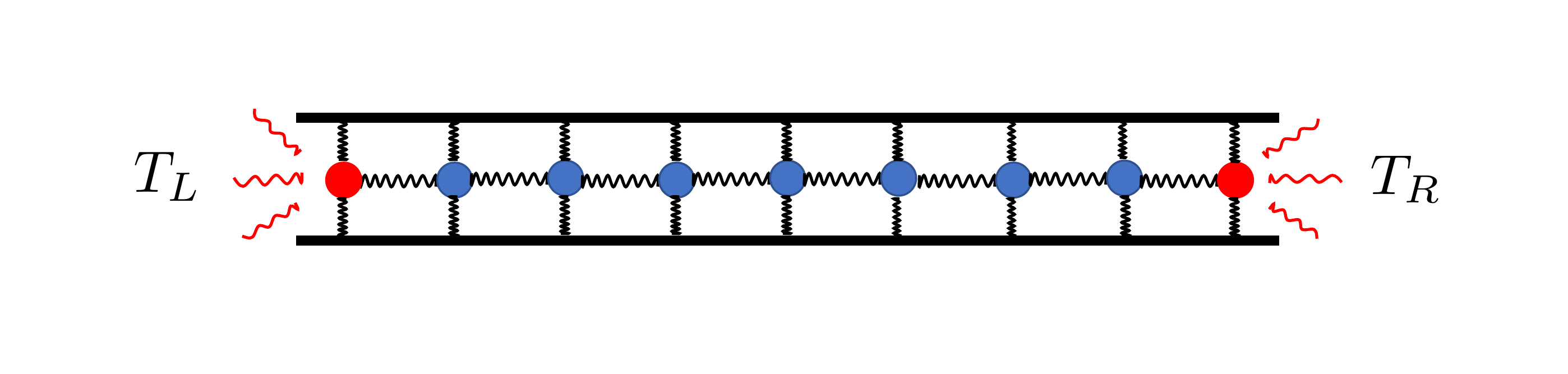}
}
\centerline{
\includegraphics[height=4.5cm]{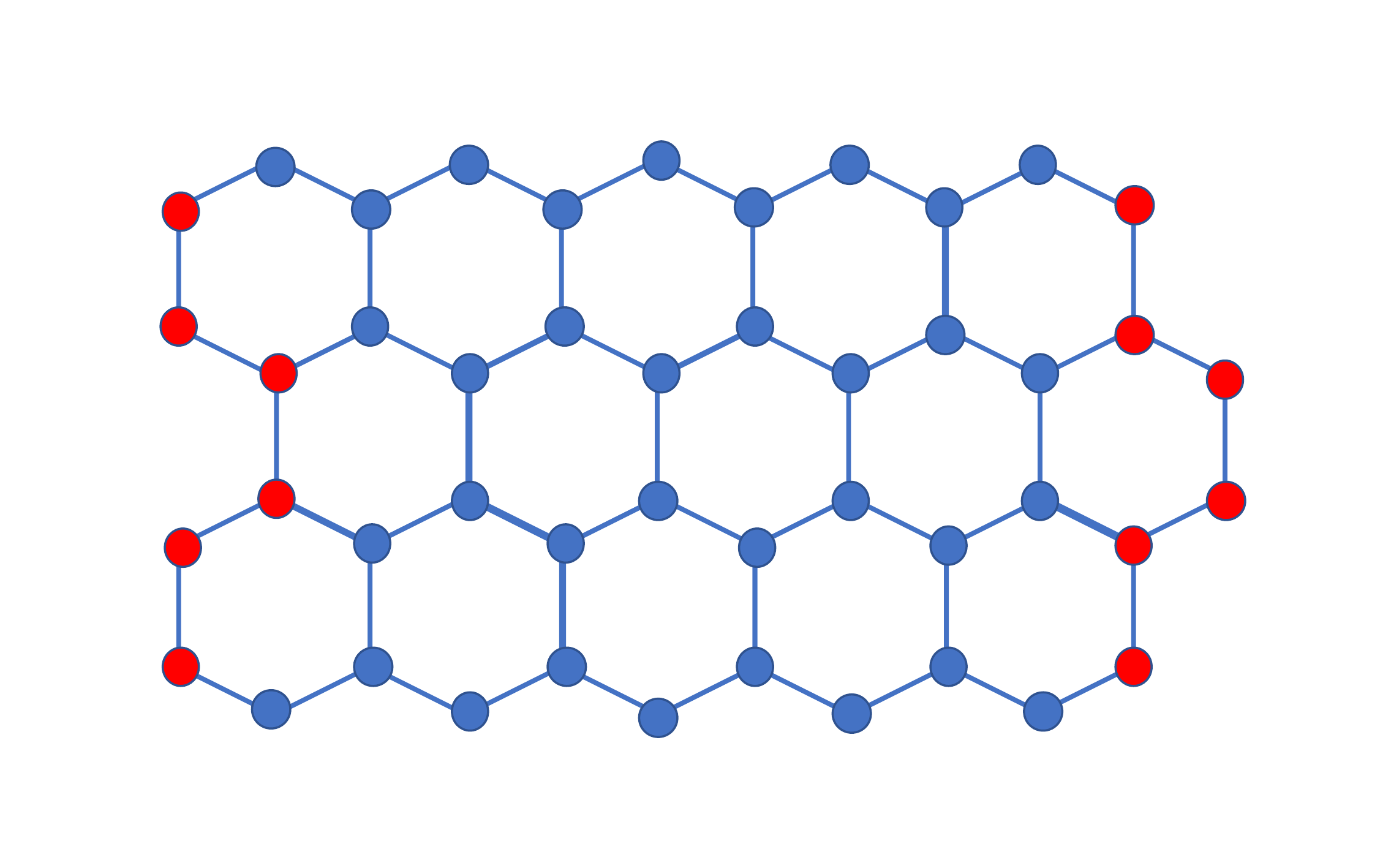}
}
\caption{Schematic of the $1$-D and $2$-D heat transport model \eqref{SDE:n_d_heat}. The onsite potential and boundary Langevin forces are shown in the top plot for the $1$-D model. Note that only the boundary oscillators, marked in red, interact with the heat bath.}
\label{fig:Heat_transport} 
\end{figure}
As an example of nonequilibrium systems, we consider an $d$-dimensional heat conduction model in \cite{cuneo2018non,lepri1998anomalous}. To this end, we consider a lattice $\G$ of interacting oscillators. For each oscillator $i\in \G$, the position and momentum are given respectively by $\bm q_i\in \R^d$ and $\bm p_i\in\R^d$. The phase space is therefore given by $\Omega=\R^{2|\G|d}$, where $|\G|$ gives the cardinality of the set $\G$ which corresponds to the total number of particles. These oscillators are interacting with the substrate and each other through the Hamiltonian 
\begin{align}\label{Hamiltonian_n}
\H(\bm p,\bm q)=\sum_{i\in \G}\left(\frac{\bm p_i^2}{2m_i}+U_{i}(\bm q_i)\right)+\sum_{e\in\E}V_{e}(\delta \bm q_e),
\end{align}
where $U_{i}$ and $V_e$ are the pining potential and interactive potential, respectivly, For $e=(i,i')\in \E=\G\times\G$ we have $\delta \bm q_{e}=\bm q_{i'}-\bm q_{i}\in \R^d$. The model $(\G,\E)$ can be viewed as an undirected graph with no on-site loop, i.e. self-interactions of the kind $V_0(\delta \bm q_e=0)$ are not allowed.  The graph is undirected in the sense that the interactive potential $V_e(\bm q_{i'}-\bm q_i)=V_e(\bm q_i-\bm q_{i'})$ appears only once in the Hamiltonian \eqref{Hamiltonian_n}. Without the loss of generality, we assume the uniform mass condition $m_i=m=1$.
We now choose a subset of the boundary oscillators $\B\subset\partial \G$ to impose thermal baths. For each $b\in\B$ we assume that a thermostat of the temperature $T_b>0$ is given, along with a coupling constant $\gamma_b>0$. With this setting, we obtain an $n$-dimensional heat conduction model given by the following system of stochastic differential equations:
\begin{equation}\label{SDE:n_d_heat}
\begin{dcases}
\begin{aligned}
d\bm q_i&=\bm p_idt\\
d\bm p_i&=-\partial_{\bm q_i}\H(\bm p,\bm q)dt-\gamma_i\bm p_idt+\sqrt{2k_BT_i\gamma_i}d\bm \W_i(t)
\end{aligned}
\end{dcases}
\qquad\qquad 
i\in\G
\end{equation}
where $\gamma_i=0$ for $i\in\G\setminus\B$ and $\gamma_i=\gamma_b$, $T_i=T_b$ for $b\in \B$. Since the Langevin forces only act on the boundary oscillators,  for a ``bulk" of oscillators that are away from the boundaries, the dynamics are kept deterministic and Hamiltonian. For modeling purposes, different boundary conditions can be specified. Typical choices are periodic, fixed or free boundaries. No matter which condition is used, the form of the Kolmogorov backward operator $\K$ corresponding to SDE \eqref{SDE:n_d_heat} can be written as: 
\begin{align}\label{Kolmogorov:n_d_heat}
\K=\sum_{b\in \B}(\gamma_bk_BT_b\partial_{\bm p_b}\cdot \partial_{\bm p_b}-\gamma_{b}\bm p_b\cdot \partial_{\bm p_b})+\sum_{i\in \G}(\bm p_i\cdot\partial_{\bm q_i}-\partial_{\bm q_i}\H(\bm p,\bm q)\cdot\partial_{\bm p_i}).
\end{align}
When all added thermostats have the same temperature $T_b=1/\beta, b\in\B$, then the system admits a unique invariant measure 
\begin{align}\label{eqil:heat}
d\mu_{\beta}=\rho_{eq}d\bm pd\bm q=\frac{1}{Z}e^{-\beta\H(\bm p,\bm q)}d\bm pd\bm q,
\end{align}
which is known as the Gibbs measure for thermal equilibrium. When the boundary temperatures are different, the Gibbs measure is no longer invariant and \eqref{SDE:n_d_heat} describes the dynamics of heat flowing from the higher-temperature thermostats to the lower-temperature thermostats. Under certain assumptions to the potential energy (see \cite{cuneo2018non}), it can be proved that the system approaches to a unique, steady state exponentially fast. In the literature, such a state is often called the nonequilibrium steady state (NESS). We further denote the steady state probability density as $\rho_{\text{NESS}}$. By introducing the Mori-type projection operator \eqref{Mori_P} in the Hilbert space $L^2(\R^{2|\G|d},\rho_{\text{NESS}})$, we can derive the GLE \eqref{gle_full}-\eqref{gle_projected}. According to Theorem \ref{thm:2nd-FDT}, we can obtain the following proposition:
\begin{prop}\label{prop:2nd_FDT_ndheat}
For an n-dimensional heat conduction model given by the SDE \eqref{SDE:n_d_heat}, if the potential energy $U_{i}(\bm q_i)$ and $V_{e}(\delta \bm q_e)$ satisfy certain conditions which ensure the smoothness and uniqueness of $\rho_{NESS}$, e.g. the one outlined in \cite{cuneo2018non}, then the following generalized second FDT holds for state space observable $u=u(\bm p,\bm q)$ in GLEs \eqref{gle_full}-\eqref{gle_projected}:
\begin{align*}
K(t)=-\frac{\langle f(0), f(t)\rangle_{\rho_{\text{NESS}}}}{\langle u^2(0)\rangle_{\rho_{\text{NESS}}}}+\frac{\langle w(0),f(t)\rangle_{\rho_{\text{NESS}}}}{\langle u^2(0)\rangle_{\rho_{\text{NESS}}}},
\end{align*}
where the additional term $w(0)$ is given by
\begin{align*}
w(0)=2\sum_{b\in \B}\gamma_bk_BT_b\partial_{\bm p_b}\cdot\partial_{\bm p_b}u(0)+\frac{2}{\rho_{\text{NESS}}}\sum_{b\in\B}\gamma_bk_BT_b\partial_{\bm p_b}\rho_{NESS}\cdot\partial_{\bm p_b}u(0).
\end{align*}
In particular, if observable $u$ is a function of bulk coordinates $\G\setminus\B$, i.e. $u=u(\bm p,\bm q)=u([\bm p_i]_{i=1}^{|\G\setminus\B|},[\bm q_i]_{i=1}^{|\G\setminus\B|})$, then the classical second FDT holds in the nonequilibrium steady state: 
\begin{align}\label{2nd-FDT_heat}
K(t)=-\frac{\langle f(0), f(t)\rangle_{\rho_{\text{NESS}}}}{\langle u^2(0)\rangle_{\rho_{\text{NESS}}}}.
\end{align}
\end{prop}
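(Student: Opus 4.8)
The plan is to apply Theorem \ref{thm:2nd-FDT} directly, treating the heat-conduction generator \eqref{Kolmogorov:n_d_heat} as a special instance of the generator \eqref{K_ini}. First I would split $\K$ into its first-order (advection) and second-order (diffusion) parts: the operator $\L$ collects the Hamiltonian streaming terms $\sum_{i\in\G}(\bm p_i\cdot\partial_{\bm q_i}-\partial_{\bm q_i}\H\cdot\partial_{\bm p_i})$ together with the deterministic friction $-\sum_{b\in\B}\gamma_b\bm p_b\cdot\partial_{\bm p_b}$, while the diffusion part is $\D=\sum_{b\in\B}\gamma_b k_B T_b\,\partial_{\bm p_b}\cdot\partial_{\bm p_b}$. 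The crucial structural observation is that $\D$ is diagonal and acts only on the boundary momentum coordinates $\{\bm p_b\}_{b\in\B}$, with coefficients $\sigma_b^2=\gamma_b k_B T_b$ that may differ from bath to bath.

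Next I would check that the standing hypotheses of Theorem \ref{thm:2nd-FDT} are satisfied. The assumed conditions on $U_i$ and $V_e$ (e.g. those of \cite{cuneo2018non}) guarantee that $\rho_{\text{NESS}}$ exists, is unique, smooth, and decays to $0$ at $\partial M$ — precisely the assumptions the theorem requires. With these in place, the theorem yields the generalized second FDT \eqref{general_2nd_FDT} with $w(0)=\S u(0)$, where $\S$ is read off from the diagonal diffusion via formula \eqref{W_exact_form}. Substituting $\sigma_i\sigma_j=\sigma_b^2\delta_{ij}$ supported on boundary momenta, all off-diagonal and bulk terms drop out and the two symmetric first-order pieces combine, so that
\begin{align*}
\S=2\sum_{b\in\B}\gamma_b k_B T_b\,\partial_{\bm p_b}\cdot\partial_{\bm p_b}+\frac{2}{\rho_{\text{NESS}}}\sum_{b\in\B}\gamma_b k_B T_b\,\partial_{\bm p_b}\rho_{\text{NESS}}\cdot\partial_{\bm p_b},
\end{align*}
which, applied to $u(0)$, is exactly the stated expression for $w(0)$.

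Finally, for the special case I would invoke the mechanism of Corollary \ref{corollary_2ndFDT0}. Since $\S$ is a differential operator in the boundary momenta alone, whenever $u$ depends only on the bulk coordinates $\G\setminus\B$ one has $\partial_{\bm p_b}u=0$ for every $b\in\B$, hence $w(0)=\S u(0)=0$ and \eqref{general_2nd_FDT} collapses to the classical second FDT \eqref{2nd-FDT_heat}.

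I expect the only genuine obstacle to be the verification step rather than the algebra: unlike the equilibrium case, $\rho_{\text{NESS}}$ is \emph{not} the Gibbs measure \eqref{eqil:heat} and is generally unavailable in closed form, and $\K$ is degenerate (hypoelliptic) rather than uniformly elliptic, so the smoothness and boundary decay of $\rho_{\text{NESS}}$ are nontrivial. These are, however, exactly the regularity properties secured by the cited assumptions, and — this is the point of the result — the explicit form of $\rho_{\text{NESS}}$ is never needed for the bulk-observable conclusion, since the $\rho_{\text{NESS}}$-dependent term in $\S$ is annihilated by $\partial_{\bm p_b}u=0$.
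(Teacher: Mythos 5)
Your proposal is correct and follows essentially the same route as the paper: the paper's proof consists precisely of the decomposition $\K=\L+\D$ with $\D=\sum_{b\in\B}\gamma_b k_B T_b\,\partial_{\bm p_b}\cdot\partial_{\bm p_b}$ and the remaining first-order terms in $\L$, followed by an appeal to Theorem \ref{thm:2nd-FDT} and Corollary \ref{corollary_2ndFDT0}. Your write-up merely spells out the substitution $\sigma_i\sigma_j=\gamma_bk_BT_b\,\delta_{ij}$ into \eqref{W_exact_form} and the vanishing of $\partial_{\bm p_b}u$ for bulk observables, which the paper leaves implicit.
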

\begin{proof}
The Kolmogrov operator \eqref{Kolmogorov:n_d_heat} can be decomposed as 
\begin{align*}
\K=\underbrace{\sum_{b\in \B}\gamma_bk_BT_b\partial_{\bm p_b}\cdot\partial_{\bm p_b}}_{\D}-\underbrace{\sum_{b\in \B}\gamma_{b}\bm p_b\cdot \partial_{\bm p_b}+\sum_{i\in \G}(\bm p_i\cdot \partial_{\bm q_i}-\partial_{\bm q_i}\cdot\H(\bm p,\bm q)\partial_{\bm p_i})}_{\L}=\L+\D.
\end{align*}
It is easy to get the desired result using Theorem \ref{thm:2nd-FDT} and Corollary \ref{corollary_2ndFDT0}.
\end{proof}
For the heat conduction model \eqref{SDE:n_d_heat}, a physically meaningful observable $u=u(\bm q,\bm p)$ is the heat flux. If we consider an oscillator chain (one-dimensional case) with symmetric on-site and neighbourhood interaction potential energy. i.e. $U_{i}=U(q_{i})$ and $V_{e}(\delta q_e)=V(q_{i+1}-q_{i})$. The local and total heat flux of the system can be defined as \cite{lepri2003thermal,spohn2016fluctuating,lepri1998anomalous}: 
\begin{align}\label{def:local_global_flux}
J_{i}=p_iV'(q_{i+1}-q_i),\quad i\in\G\setminus\B, \qquad  J_{tot,N}=\sum_{j}p_iV'(q_{j+1}-q_j),\quad j\in \G\setminus\B
\end{align}
where $J_i$ is the local heat flux and $J_{tot,N}$ is the total one with $N=|\G\setminus\B|$. Proposition \ref{prop:2nd_FDT_ndheat} ensures the validity of the classical second FDT for observables of the bulk coordinates. This implies that for the local and total heat flux defined as \eqref{def:local_global_flux}, the classical second FDT holds in the NESS even though the explicit form of the steady state probability density $\rho_{NESS}$ is unknown. Note that our definition of the total heat flux excludes the heat flux at the chain boundary. Another frequently used definition of $J_{tot,N}(t)$ contains such contributions and can be decomposed as  
\begin{align}\label{def:global_flux_B}
J_{tot,N}(t)=J_{\G\setminus\B}(t)+J_{\B}(t),
\end{align}
where $J_{\G\setminus\B}(t)$ is the bulk contribution as \eqref{def:local_global_flux} and $J_{\B}(t)$ is the boundary contribution. When applying the generalized second FDT to \eqref{def:global_flux_B}, we have a non-zero additional term $w(0)$ which breaks the classical second FDT. However, with some weak assumptions, we can show (see \ref{sec:App1}) that the dynamics of the averaged heat flux $J_{av}(t)=J_{tot,N}(t)/N$ can be approximated by the bulk contribution $J_{\G\setminus\B}(t)/N$ in the thermodynamic limit as $N\rightarrow\infty$. Hence we conclude that the GLE \eqref{gle_full}-\eqref{gle_projected} model for the averaged heat flux satisfies the classical second FDT in the thermodynamic limit.

Lastly, we want to comment on the mathematical difficulty to get similar results on the second FDT for nonequilibrium systems generated by deterministic forces. Consider a similar heat conduction chain model driven by Nos\'e-Hoover thermostats. By assuming that $m_i=m=1$ and there are only two thermostats with temperature $T_L$ and $T_R$, the dynamics is described by the following equations of motion \cite{lepri2003thermal}: 
\begin{align}\label{Nose-Hover_heat}
\begin{dcases}
    \frac{dq_i}{dt}&=p_i\\ 
    \frac{dp_i}{dt}&=-\partial_{q_i}\H(\bm p,\bm q)-
    \begin{dcases}
    \gamma_L p_i,\qquad \text{if $i\in \B_{T_L}$}\\
    \gamma_R p_i,\qquad \text{if $i\in \B_{T_R}$}
    \end{dcases} 
\end{dcases},    
\end{align}
where $\B_{L,R}$ are the set of boundary oscillators which interact with thermostats at the temperature $T_{L,R}$. The cardinality of $\B_{L,R}$ are denoted as $|\B_{L,R}|$. The dynamics of the auxiliary variables $\gamma_{L,R}$ are given by:
\begin{align}\label{Nose-Hover_bath}
    \frac{d\gamma_{L,R}}{dt}=\frac{1}{\theta_{L,R}}\left(\frac{1}{k_BT_{L,R}|\B_{T_{L,R}}|}\sum_{n\in \B_{T_{L,R}}}p_n^2-1\right),
\end{align}
where $\theta_{L,R}$ are the thermostat response times. It is easy to check that the velocity field for the combined system \eqref{Nose-Hover_heat}-\eqref{Nose-Hover_bath} has divergence
\begin{align*}
    \nabla\cdot\bm F(\bm p,\bm q,\bm \gamma)=-\gamma_L(t)-\gamma_R(t)
\end{align*}
which changes back and forth between positive and negative values depending on the kinetic temperature of the boundary oscillators \cite{lepri2003thermal}. As a consequence, the whole system oscillates between energy dissipating state and increasing state. Mathematically speaking, it is very hard to define a proper probability measure to quantify such a nonequilibrium steady state. Even the SRB measure \cite{bonetto2000fourier,ruelle1999smooth}, which works for dissipative systems, is not applicable to this case. In practice, one may assume the ergodicity of the deterministic system so that the ensemble average $\langle \cdot \rangle_{\rho_{NESS}}$ can be replaced by the time average. Then due to the similarity of the dynamics, it is reasonable to expect the heat conduction model generated by deterministic thermostats shares some properties with the stochastic model \eqref{SDE:n_d_heat}, including the classical and the generalized second FDT. This conjecture, however, needs to be verified. 
%
%
%
\section{Application to reduced-order modeling}\label{sec:Numerical}
In this section, we apply the generalized second FDT to the reduced-order modeling of the heat conduction problem. We first propose suitable methods to approximate the memory kernel $K(t)$ and the fluctuating force $f(t)$ for a GLE satisfying the {\em classical} second FDT. The resulting stochastic integro-differential equation serves as the reduced-order model for Gaussian observables of the nonequilibrium system. This model enables us to numerically verify the generalized second FDT. Secondly, we propose a polynomial chaos expansion method to approximate the dynamics of non-Gaussian observables which satisfy the {\em generalized} second FDT. Applying these two methods to the averaged heat flux $J_{av}(t)$ leads to dynamical models which characterize the steady state heat transfer in nonequilibirum systems. We note that a similar approach was used in \cite{chu2017mori} for Hamiltonian systems and the resulting stochastic model is often referred to as the {\em fluctuating heat conduction model}.
\subsection{Methodology}\label{sec:Numerical_method}
Without loss of generality, we consider a one-dimensional GLE for scalar observable $u(t)$
\begin{align}\label{eqn:GLE_oned}
\frac{d}{dt}u(t)=\Omega u(t)+\int_0^tK(t-s)u(s)ds+ f_u(t).
\end{align}
In \eqref{eqn:GLE_oned}, the streaming constant $\Omega$ is easy to obtain using the definition \eqref{streaming}. However, evaluating the memory kernel $K(t)$ and the fluctuation force $f_u(t)$ from the first principle is rather challenging since it involves the approximation of the high dimensional orthogonal flow $e^{t\Q\K}$. To avoid such technical difficulties, here we adopt a data-driven method introduced in \cite{zhu2020hypoellipticity2} to approximate the memory kernel $K(t)$. As for the fluctuation force $f_u(t)$, it can be approximated by suitable series expansions of a stochastic process. The whole procedure can be described as follows.
First of all, we recall that the projected GLE yields (see Section \ref{sec:EMZE}) the evolution of the steady state correlation function $C(t)$ of $u(t)$:
\begin{align}\label{eqn:PGLE_heat}
\frac{d}{dt}C_u(t)=\Omega C_u(t)+\int_0^tK(t-s)C_u(s)ds, \qquad 
\text{where}
\qquad 
C_u(t)=\frac{\langle u(t),u(0)\rangle_{\rho}}{\langle u^2(0)\rangle_{\rho}}.
\end{align}
Hence with equation \eqref{eqn:PGLE_heat}, the memory kernel $K(t)$ can be represented formally using the inverse Laplace transform if we know $C_u(t)$. If we further assume that the dynamics of the observable $u(t)$ is a stationary Gaussian process, since GLE \eqref{eqn:GLE_oned} is a linear equation for $u(t)$, this implies the fluctuation force $f_u(t)$ is also a stationary Gaussian process. Then we can use the truncated Karhunen-Lo\'eve (KL) expansion series to approximate $f_u(t)$, namely
\begin{align}\label{KL_expansion_f}
f_u(t)\simeq \langle f_u(t)\rangle_{\rho}+\sum_{k=1}^{K}\xi_k\sqrt{\lambda_k}e_k(t)
=\bar{f}_u+\sum_{k=1}^{K}\xi_k\sqrt{\lambda_k}e_k(t).
\end{align}
In the steady state, the mean value of the stochastic process satisfies $\langle f_u(t)\rangle_{\rho}=\langle f_u(0)\rangle_{\rho}=\bar{f}_u$, which can be obtained by taking the ensemble average of the GLE \eqref{eqn:GLE_oned} and then evaluating it at $t=0$:
\begin{align}\label{f_t=0}
\bar{f}_u=\langle f(0)\rangle_{\rho}&=\langle u(0)\rangle_{\rho}- \Omega\langle u(0)\rangle_{\rho}=\langle u(0)\rangle_{\rho}
-\frac{\langle \dot{u}(0),u(0)\rangle_{\rho}}{\langle u^2(0)\rangle_{\rho}}\langle u(0)\rangle_{\rho}.
\end{align}
The KL expansion random coefficients $\{\xi_k\}_{k=1}^K$ are necessarily independent Gaussian random variables satisfying $\langle\xi_i\xi_j\rangle=\delta_{ij}$, and $\{\lambda_k,e_k\}_{k=1}^K$ are,
respectively, eigenvalues and eigenfunctions of the homogeneous Fredholm integral equation
of the second kind:
\begin{align}\label{Fredholm_eqn}
\int_0^T\langle f_u(t),f_u(0)\rangle_{\rho}e_k(s)ds=\lambda_ke_k(t),\qquad t\in[0,T],
\end{align}
where $T$ is a certain numerical integration time. If the classical second FDT holds for GLE \eqref{eqn:GLE_oned}, by substituting $\langle f_u(t),f_u(0)\rangle_{\rho}=-K(t)\langle u^2(0)\rangle_{\rho}$ into eqn \eqref{Fredholm_eqn} and solving for $\{\lambda_k, e_k(t)\}_{k=1}^K$, we can get the exact KL series representation \eqref{KL_expansion_f} for the fluctuation force $f_u(t)$. With all these terms available, we propose the following data-driven modeling diagram for an arbitrary Gaussian observable $u(t)$:
\begin{align}\label{diagram}
\langle u(t),u(0)\rangle_{\rho},\langle u(0)\rangle_{\rho}
\Rightarrow
\begin{cases}
\displaystyle K(t)=\mathfrak{L}^{-1}\left[s-\Omega-\frac{C_u(0)}{\tilde C_u(s)}\right]\\
\displaystyle f_u(t)\simeq  \bar{f}_{u}+\sum_{k=1}^{K}\xi_k\sqrt{\lambda_k}e_k(t)
\end{cases}
\Rightarrow \text{Eqn}\ \eqref{eqn:uni_GLE_model},
\end{align}
where $\mathfrak{L}[C_u(t)]=\tilde{C}_u(s)$ is the Laplace transform of $C_u(t)$ and $\mathfrak{L}^{-1}[\cdot](t)$ is the inverse transform. In flowchart \eqref{diagram}, the leftmost three terms are inputs of the diagram which can be obtained by solving numerically the SDE \eqref{eqn:sde} and then averaging samples collected from the steady state simulation. We note that solving \eqref{eqn:PGLE_heat} for $K(t)$ is a well-known inverse problem which is ill-conditioned. In this paper, we use the series expansion method \cite{zhu2020hypoellipticity2} and LASSO regression to approximate $K(t)$. By combining all these approximations and executing \eqref{diagram}, we can get the first reduced-order model for $u(t)$:
\begin{align}\label{eqn:uni_GLE_model}
\frac{d}{dt}u(t)=\Omega u(t)+\sum_{i=1}^I\int_{0}^tk_ib_i(t-s)u(s)ds+f(t,\bm \xi),
\end{align}
where $K(t)=\sum_{i=1}^Ik_ib_i(t)$ is the series expansion approximation for $K(t)$ and $f(t,\bm \xi)$ is the truncated KL expansion approximateing the fluctuation force $f_u(t)$. We want to emphasize that the key relation that rationalizes the whole algorithm is the classical second FDT.

%

The above reduced-order modeling diagram cannot be applied to non-Gaussian cases nor the case where the {\em generalized} second FDT holds. The main modeling difficulties stem from the fact that the steady state distribution of the fluctuation force $f(t)$ and the additional term $w(0)$ in the memory kernel $K(t)$ are generally unknown and cannot be easily constructed from MD simulation. However, we can use the polynomial chaos expansion to directly simulate $u(t)$. To this end, we propose the following modeling diagram for non-Gaussian $u(t)$:
\begin{align}\label{diagram2}
\langle u(t),u(0)\rangle_{\rho},\rho_u, \langle u(0)\rangle_{\rho}
\Rightarrow
\begin{cases}
\displaystyle K(t)=\mathfrak{L}^{-1}\left[s-\Omega-\frac{C_u(0)}{\tilde C_u(s)}\right]\\
\displaystyle \text{Eqn}\ \eqref{eqn:PGLE_heat}
\end{cases}
\Rightarrow u(t)=\sum_{i=1}^Mu_iH_i(\gamma(t,\xi)).
\end{align}
In \eqref{diagram2}, $u(t)=\sum_{i=1}^{M}u_iH_i(\gamma(t,\bm \xi))$ is the polynomial chaos expansion for a stationary non-Gaussian process, which can be constructed from the time-autocorrelation and the steady state probability density $\rho_u$. Specifically, the expansion coefficient $u_i$ and a Gaussian process $\gamma(t,\bm \xi)$ is calculated via a modified Sakamoto-Ghanem algorithm \cite{sakamoto2002polynomial}. In \ref{sec:App_poly}, we explain the procedure in detail. By directly simulating the non-Gaussian processes in the state space, we avoid the computation of the infinite Kramers-Moyal coefficients \cite{Risken} or the effective Fokker-Planck diffusion coefficient \cite{chu2017mori}.

For these two reduced-order modeling methods, by approximating the full GLE or using the polynomial chaos expansion method, we can construct a surrogate model for $u(t)$. Moreover, it also enables us to use a short-time MD simulation data to {\em predict} the long-time dynamics of $u(t)$. This part will be verified later via numerical simulations in the following section.
\subsection{Numerical result for a one-dimensional heat conduction model}
We now study numerically the one-dimensional heat conduction model \eqref{SDE:n_d_heat}. In particular, we will use reduced order models introduced in Section \ref{sec:Numerical_method} to build effective models for different phase space observables, from which we can verify the validity of the generalized second FDT and demonstrate the effectiveness of these reduced order models. Moreover, we will study in detail the model for the averaged heat flux $J_{av}(t)$ and discuss its usefulness in characterizing the heat transport intensity for systems in and out of the statistical equilibrium.   

To this end, we set the on-site potential energy in \eqref{SDE:n_d_heat} to be 0 and the neighbourhood interaction potential energy to be the Lennard-Jones (LJ) potential energy, i.e.:
\begin{align*}
V_{e}(\delta q_e)=V(q_{i+1}-q_{i})=4\epsilon\left[\left(\frac{\sigma}{q_{i+1}-q_i}\right)^{12}-2\left(\frac{\sigma}{q_{i+1}-q_i}\right)^{6}\right].
\end{align*}
The whole chain is linked with two thermostats with temperatures $T_L$ and $T_R$ which will be specified later for different cases. Free boundary conditions are imposed and the modeling parameters are set as follows: $N=|\G|=256$, $\epsilon=0.2$, $\sigma=1$, $\gamma_L=\gamma_R=1$. To solve \eqref{SDE:n_d_heat} numerically, we use the Euler-Maruyama scheme with step size $dt=10^{-5}$.  In Figure \ref{fig:Heat_transport}, we show the sample trajectories of selected observables of this stochastic system. 
\begin{figure}[t]
\centerline{
\includegraphics[height=4cm]{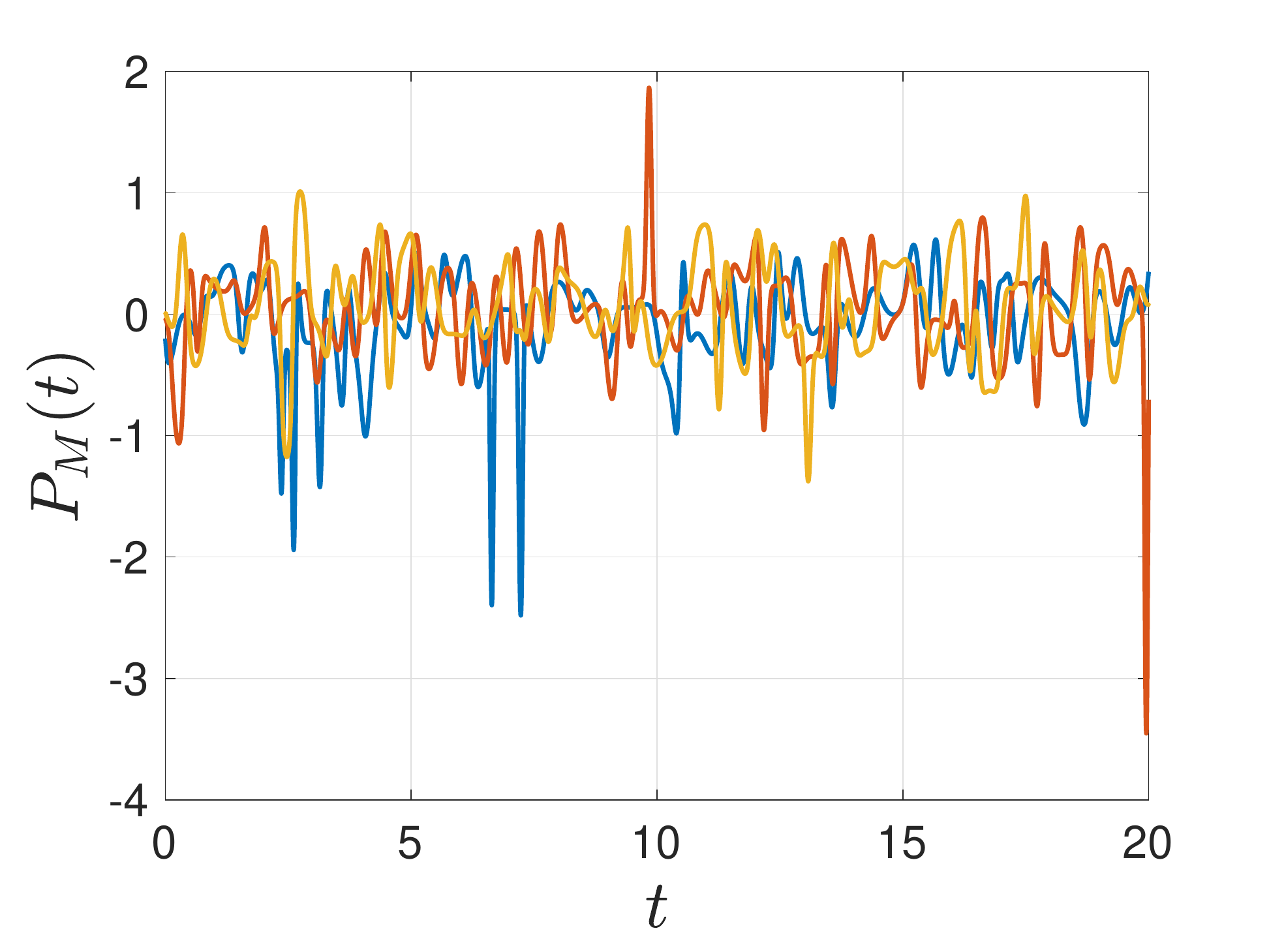}
\includegraphics[height=4cm]{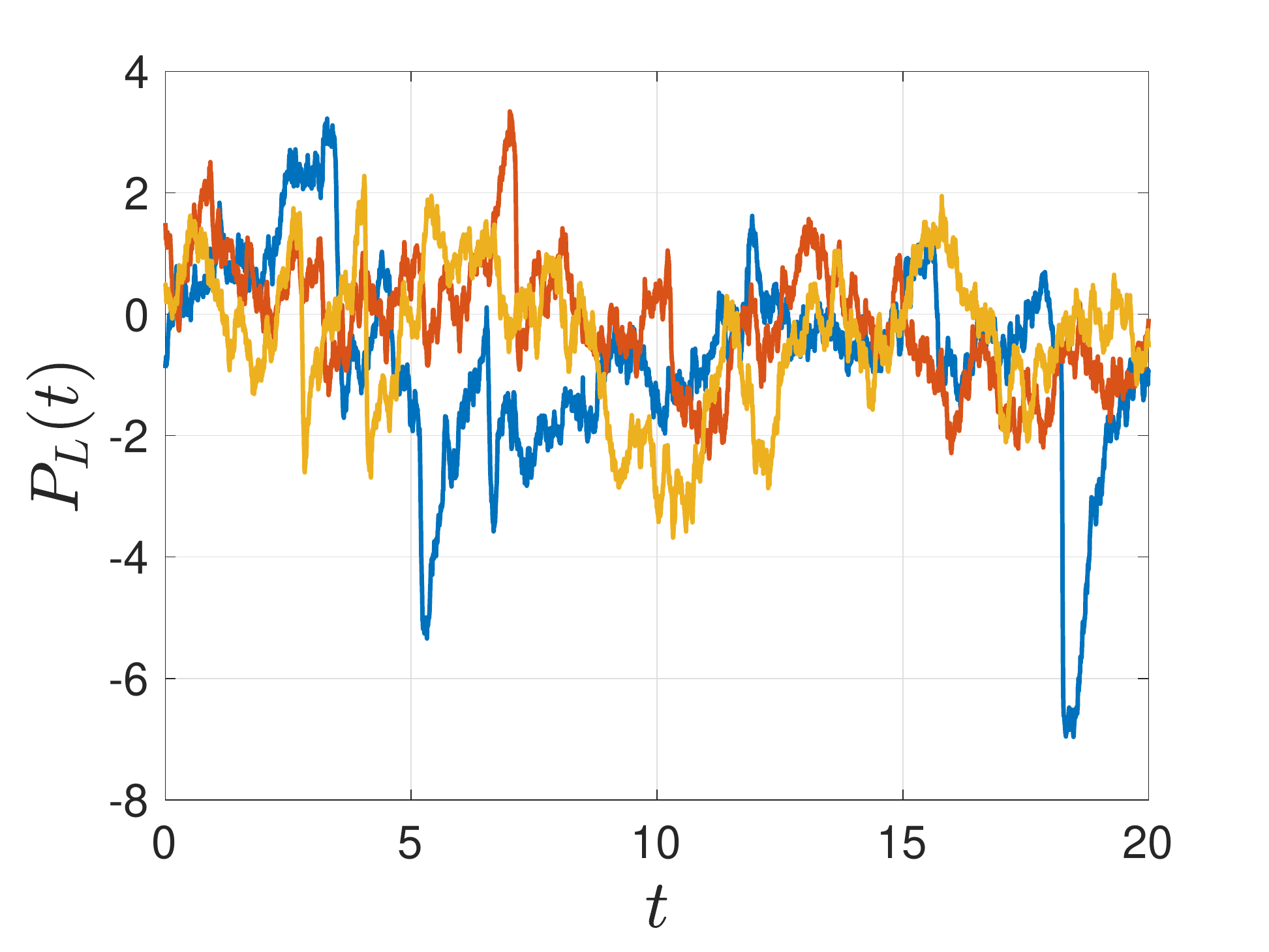}
\includegraphics[height=4cm]{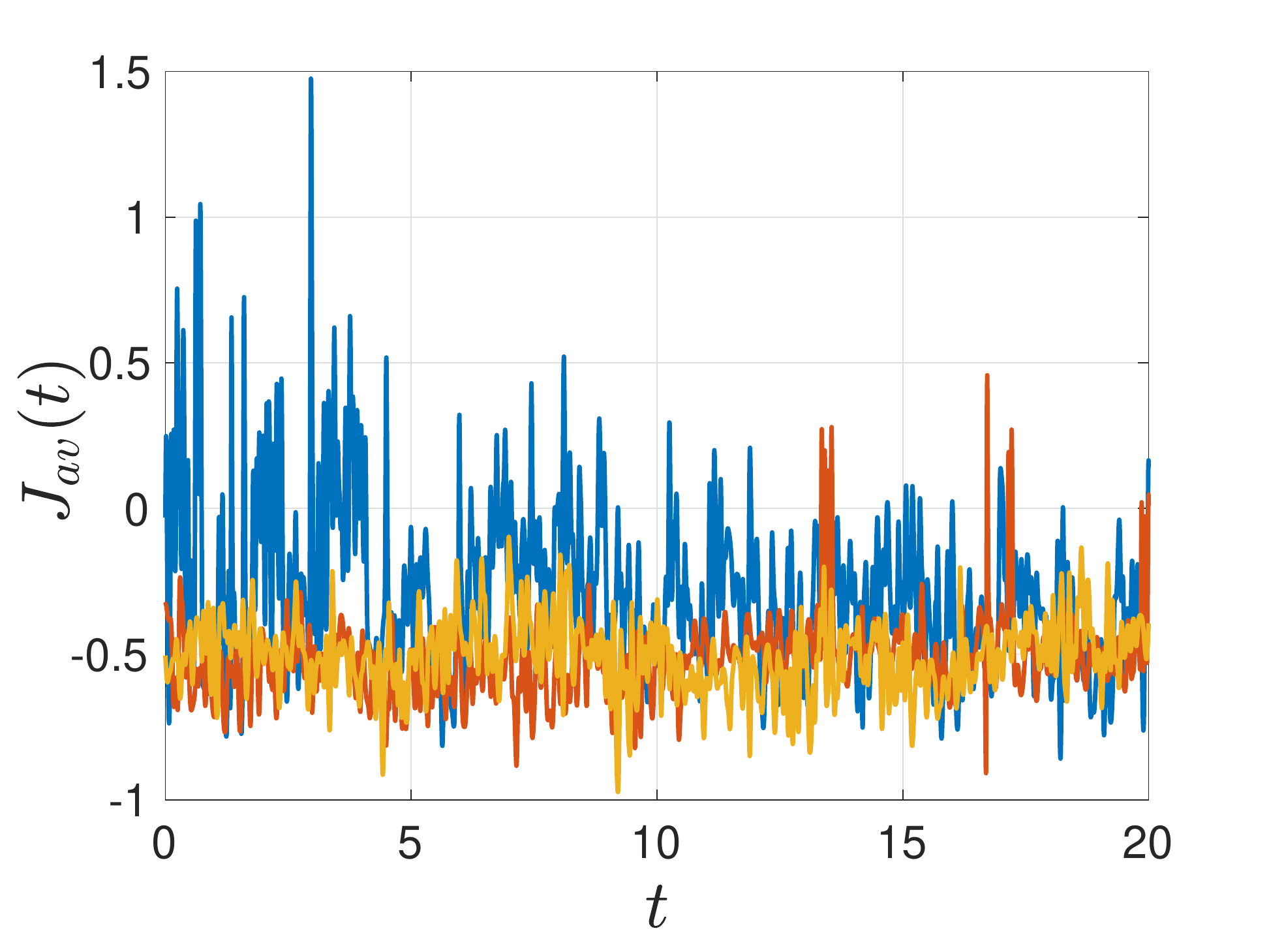}
}
\caption{Sample trajectories of the particle momenta $P_L(t)$, $P_M(t)$ and the averaged heat flux $J_{av}(t)$. The temperatures of the thermostats are set to be $T_L=1$ and $T_R=5$. The displayed time domain is $[80,100]$, where the system is verified to be in the NESS after the transient time $t=80$. }
\label{fig:Heat_transport_path} 
\end{figure}
\subsubsection{Verification of the generalized second FDT}
\label{sec:veri-2nd_FDT}
The reduced-order modeling method we introduced in Section \ref{sec:Numerical_method} enables us to numerically verify the generalized second FDT by hypothesis testing. We will first choose an observable for which the {\em classical} second FDT holds ($\Omega=0$ and $w(0)=0$) and show that the stochastic model \eqref{eqn:uni_GLE_model} gives correct statistics for observable $u(t)$. Then we will repeat the procedure for an observable for which the {\em generalized} second FDT holds ($\Omega\neq 0$ and $w(0)\neq 0$). Since the algorithm works {\em only} when the classical second FDT holds, stochastic model \eqref{eqn:uni_GLE_model} should give wrong statistics for observable $u(t)$. Throughout this subsection, the thermostat temperatures are set to be $T_L=1$ and $T_R=5$ which makes the system approaching an NESS as $t\rightarrow+\infty$. 
\paragraph{Observable $p_M$} We choose the momentum of the oscillator in midst as the observable. According to Proposition \ref{prop:2nd_FDT_ndheat}, the momentum $p_M$ satisfies the {\em classical} second FDT \eqref{2nd-FDT_heat} in the NESS. Figure \ref{fig:PLPM} shows that $p_M$ is a Gaussian variable and the marginal probability density function (PDF) satisfies $p_M\sim \N(0.2955,0.42^2)$ approximately. Therefore the standard KL expansion can be used to represent the the fluctuation term, which leads to the reduced-order model for $p_M(t)$:
\begin{align}\label{eqn:PM_full_GLE}
\frac{d}{dt}p_M(t)=\sum_{i=1}^I\int_0^tk_ib_i(t-s)p_M(s) ds+\sum_{i=1}^M\sqrt{\lambda_i}\xi_ie_i(t).
\end{align}
In this paper, the memory kernel expansion basis $b_i(t)$ are set to be the Laguerre polynomials, and a LASSO regression method is used to approximate the expansion coefficient $k_i$ \cite{zhu2020hypoellipticity2}. In \eqref{eqn:PM_full_GLE}, $I=20$ and $M=500$ (the same hereinafter). Since a stationary Gaussian process $p_M(t)$ is fully characterized by its marginal distribution and the time autocorrelation function, we solve \eqref{eqn:PM_full_GLE} numerically and compare these two statistics with the exact results obtained through MD simulation. In Figure \ref{fig:PLPM}, we can see that the solution of the reduced-order model \eqref{eqn:PM_full_GLE} reproduces the correct statistics of the observable $p_M(t)$. 
\begin{figure}[t]
\centerline{
\includegraphics[height=5cm]{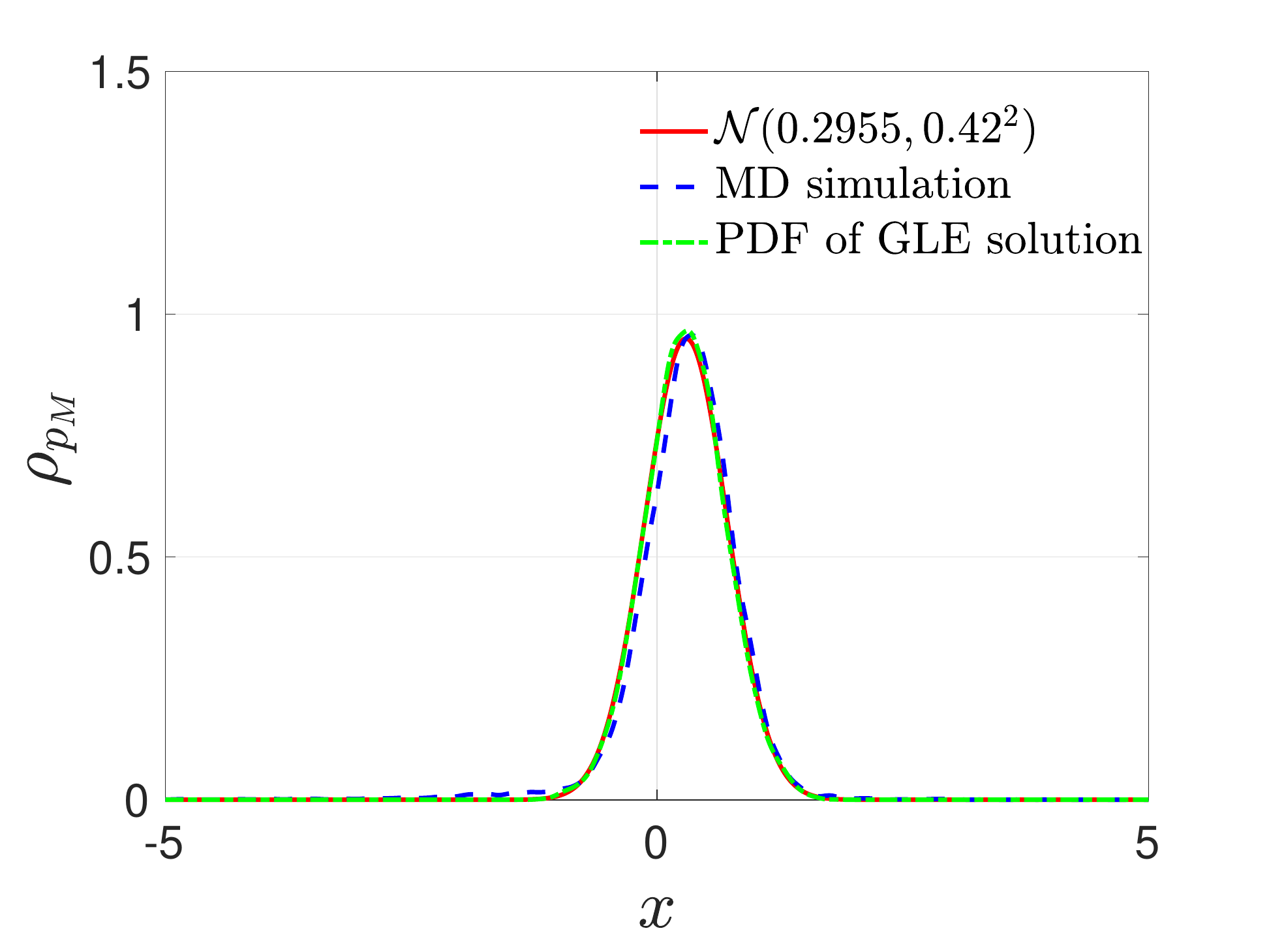}
\includegraphics[height=5cm]{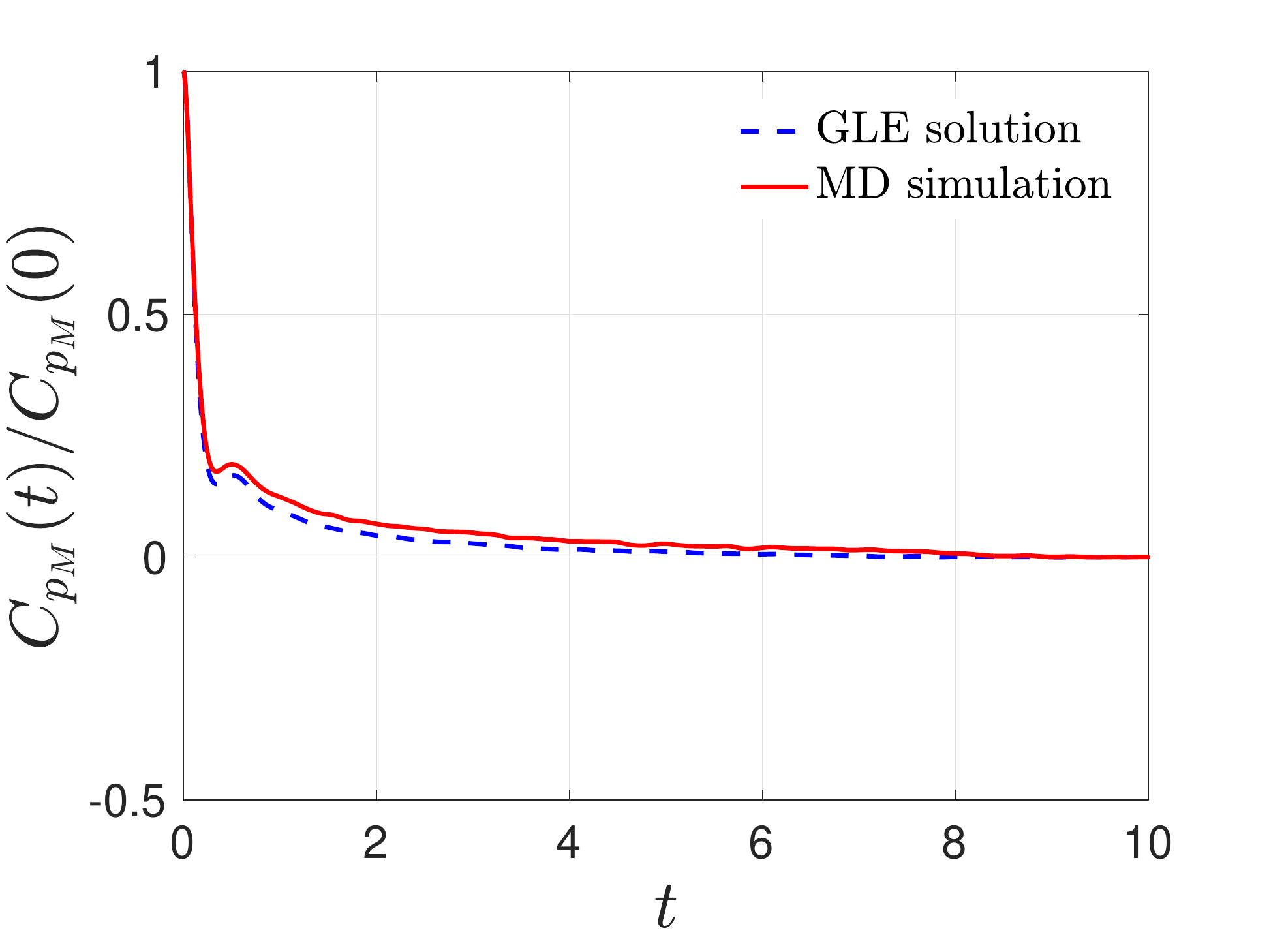}
}
\centerline{
\includegraphics[height=5cm]{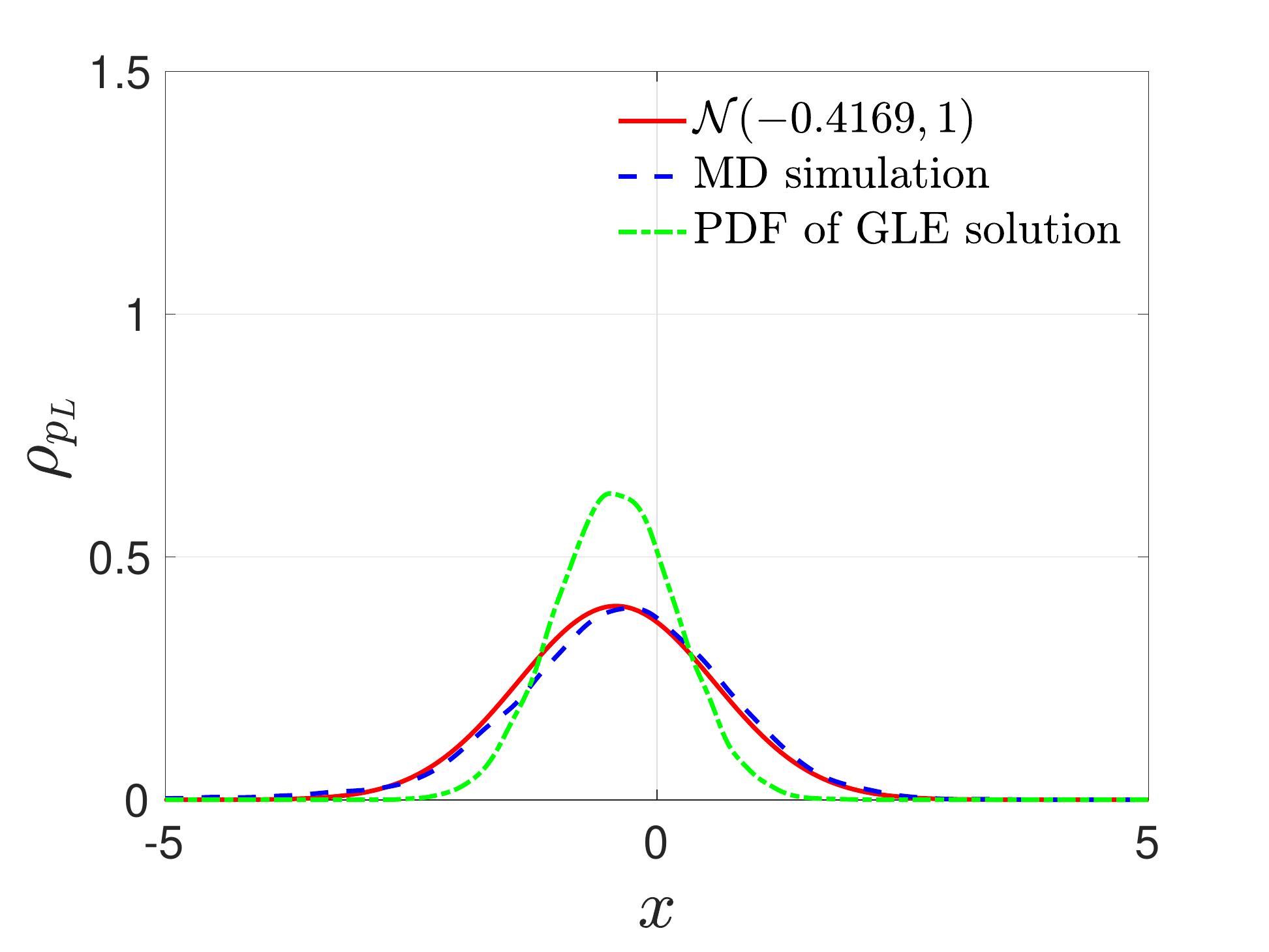}
\includegraphics[height=5cm]{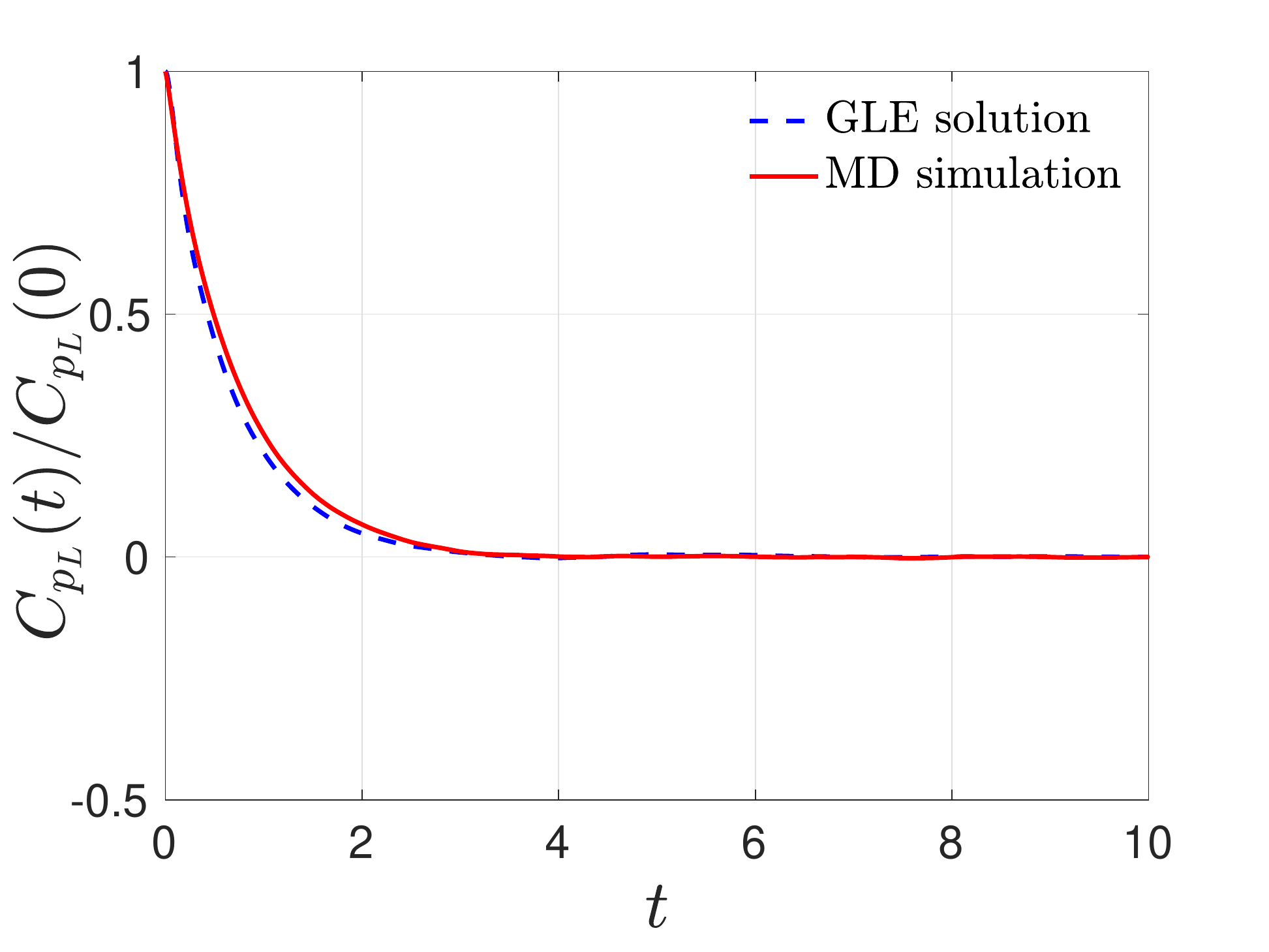}
}
\caption{
(First column) Marginal probability density for $p_M$ (Up-Left) and $p_L$(Down-Left); (Second column) Normalized time autocorrelation function $C(t)/C(0)$ for observable $p_M$ (Up-Right) and $p_L$ (Down-Right). The marginal PDF is obtained from the MD simulation data using kernel density estimation. The correlation function is obtained by averaging 5000 sample trajectories within the time domain $[90,100]$ while the system is in the NESS.}
\label{fig:PLPM}
\end{figure}
\paragraph{Observable $p_L$} We choose the momentum of the leftmost oscillator as the observable. According to Proposition \ref{prop:2nd_FDT_ndheat}, the momentum $p_L$ satisfies the {\em generalized} second FDT \eqref{2nd-FDT_heat} ($w(0)\neq 0$) in the NESS. Figure \ref{fig:PLPM} shows that $p_L$ is also a Gaussian variable with the marginal PDF satisfying $p_L\sim \N(-0.4169,1)$ approximately. If we assume that the {\em classical} second FDT holds, then the reduced-order model for $p_L(t)$ is given by
\begin{align}\label{eqn:PL_full_GLE}
\frac{d}{dt}p_L(t)=\Omega p_L(t)+\sum_{i=1}^I\int_0^tk_ib_i(t-s)p_L(s) ds+\sum_{i=1}^M\sqrt{\lambda_i}\xi_ie_i(t),
\end{align}
Using the data we obtained in the NESS we obtain $\Omega\approx -0.8622$. When comparing the marginal distribution and the time autocorrelation function with the exact MD simulation results, we find that the reduced-order model reproduces the correct autocorrelation function for observable $p_L(t)$, which is reasonable since the evolution equation for $C_{p_L}(t)$ is given by \eqref{eqn:PGLE_heat} and only depends on the memory kernel $K(t)$. However, since the {\em classical} second FDT does not hold ($w(0)\neq 0$) for the GLE of $p_L(t)$, the reduced-order model with the pre-assumed classical second FDT must reproduce the wrong NESS marginal distribution $\rho_{p_L}$. 
\paragraph{Remark}Combining these numerical simulation results for observable $p_M(t)$ and $p_L(t)$, we verify the existence of the additional term $w(0)$ in the memory kernel $K(t)$ that violates the classical second FDT. In this paper, we would not determine the specific form of $w(0)$ and leave it as an independent research topic. 
%
%
\subsubsection{Stochastic modeling of the averaged heat flux}
\label{sec:heat_red_mod}
In this subsection, we use the reduced-order technique to build dynamical models for the averaged heat flux $J_{av}(t)$ and show that \eqref{eqn:uni_GLE_model} is an effective, generic model for heat transport close to and far-from the statistical equilibrium. In addition, we show that \eqref{eqn:uni_GLE_model} with short MD simulation data can predict the correct long-term dynamics of the averaged heat flux $J_{av}(t)$. 

We firstly briefly review the classical Kubo's linear response theory for heat transport. For a stochastic model such as \eqref{SDE:n_d_heat}, the system is initially set to be in the statistical equilibrium \eqref{eqil:heat} and then perturbed by a small temperature difference acted on the boundary. This can be achieved by alternating the temperature of a thermostat as $T_{eq}+\Delta T$. If $\Delta T$ is sufficiently small, then the thermal conductivity $\kappa$ of the lattice system can be calculated using the first FDT, which is known as the Green-Kubo formula \cite{kubo2012statistical,lepri2003thermal,lepri1998anomalous,kundu2009green}:
\begin{align}\label{def:Green-Kubo}
\kappa=\frac{N}{k_BT_{eq}^2}\int_0^{\infty}\langle J_{av}(t),J_{av}(0)\rangle_{eq}dt,
\end{align}
where the ensemble average $\langle\cdot\rangle_{eq}$ is taken with respect to the equilibrium measure \eqref{eqil:heat} with temperature $T_{eq}$ and $N$ is the total particle number. Hence the Green-Kubo formula \eqref{def:Green-Kubo} links the {\em equilibrium} time autocorrelation function of the flux with the transport coefficient {\em near} the statistical equilibrium. We note that many low-dimensional heat conduction models exhibit a violation of Fourier's law with $\kappa=\kappa(N)$ depending on $N$. This, however, will not invalidate the Green-Kubo formula since the $N$-dependence for such special systems is embedded in the time correlator $\langle J_{av}(t),J_{av}(0)\rangle_{eq}$. A more difficult case for some low-dimensional systems is the anomalous long-time tail of the heat flux autocorrelation function since it scales as $t^{-d}$, $0<d<1$. This anomaly is normally associated with a chain-length dependent conductivity scaling $\kappa(N)\propto N^{\alpha}$ with $\alpha>1$, which will lead to divergent Green-Kubo integral \eqref{def:Green-Kubo} and an infinite conductivity $\kappa$. This phenomenon is rather system-dependent and strongly related to the boundary conditions. The LJ system under our investigation does not exhibit such an anomaly, which agrees with the finding in \cite{savin2014thermal} with fixed boundary conditions.

\paragraph{Equilibrium case} We first focus on the equilibrium case and show that the stochastic model we introduced in Section \ref{sec:Numerical_method} yields a correct prediction of the equilibrium time autocorrelation function $\langle J_{av}(t),J_{av}(0)\rangle_{eq}$. Note that the equilibrium case is a special case of the nonequilibrium model with $T_L=T_R=T_{eq}$, and we already proved that the GLE for the averaged heat flux $J_{av}(t)$ in the NESS satisfies the classical second FDT. If the equilibrium distribution for $J_{av}(t)$ is Gaussian, then the full stochastic model can be constructed via \eqref{diagram}, which generates a simulated sample trajectory of $J_{av}(t)$ as the solution of \eqref{eqn:uni_GLE_model} with $\Omega=0$. Figure \ref{fig:J_ave_Eqn} shows that equilibrium marginal distribution for $J_{av}(t)$ at low temperature $T_{eq}=1$ is approximately Gaussian with $J_{av}(t)\sim \N(0,0.21^2)$. Hence the corresponding fluctuation force $f_J(t)$ in \eqref{eqn:uni_GLE_model} can be approximated by a truncated KL series, which leads to the reduced-order model:
\begin{align}\label{eqn:E_J_full_GLE}
\frac{d}{dt}J_{av}(t)=\sum_{i=1}^I\int_0^tk_ib_i(t-s)J_{av}(s) ds+\sum_{i=1}^M\sqrt{\lambda_i}\xi_ie_i(t).
\end{align}
The heat flux has a longer correlation time scale when comparing to observables such as $p_L,p_M$. When solving \eqref{eqn:E_J_full_GLE} numerically, we use a short MD trajectory data for $t\in[0,10]$ to construct the memory kernel $K(t)$. From Figure \ref{fig:J_ave_Eqn}, we can see that the stochastic model \eqref{eqn:E_J_full_GLE} predicts the long time tail of the correlation function. 

%
The situation is different when increasing the system temperature to $T_{eq}=5$. In particular, the estimated PDF of $J_{av}(t)$ is clearly non-Gaussian and has a long tail. For modeling such heat flux, we adopt the second model \eqref{diagram2} where $J_{av}(t)$ is approximated by a polynomial chaos expansion 
\begin{align}\label{J_eq_Saka}
   J_{av}(t)=\sum_{i=1}^{M}J_iH_i(\gamma(t,\bm \xi)). 
\end{align}
The simulation result is shown in the second row of Figure \ref{fig:J_ave_Eqn}. We find that with short-term data, the generated stochastic process \eqref{J_eq_Saka} has a target distribution which agrees with the MD simulation result. Moreover, the extrapolated correlation function of \eqref{J_eq_Saka} predicates the correct long-time tail of the exact result. 
\paragraph{Remark}The reduced-order modeling we introduced for $J_{av}(t)$ is only semi-analytical. Namely, while the second FDT induced GLE is closed and derived from the first principle, the calculation of the memory kernel is extrapolated using a short-term data-driven method. This is less satisfying from a theoretical point of view. Here we want to mention the technical difficulty of developing a pure analytical method to get similar results for the system under our investigation. We note that the time autocorrelation function of the heat flux has multiple time scales. Specifically, there are a short-time scale, $\delta$-function alike fast decaying of the correlation close to $t=0$ and a long-time scale, slow decaying of the correlation as $t\rightarrow+\infty$. Generally speaking, it is hard to find an analytical method which generates such two-time-scale dynamics of the correlation simultaneously. Some recently developed analytical methods such as the nonlinear fluctuating hydrodynamics \cite{spohn2016fluctuating,mendl2015current} used mode-coupling theory to approximate the memory kernel and successfully predicted the long-time tail of the correlation function. But since the short-time scale dynamics cannot be captured within this framework, which is important for accurately evaluating the Green-Kubo integral \eqref{def:Green-Kubo}, it is hard to get the correct transport coefficient based on purely analytical calculations.   

\begin{figure}[t]
\centerline{
\includegraphics[height=4cm]{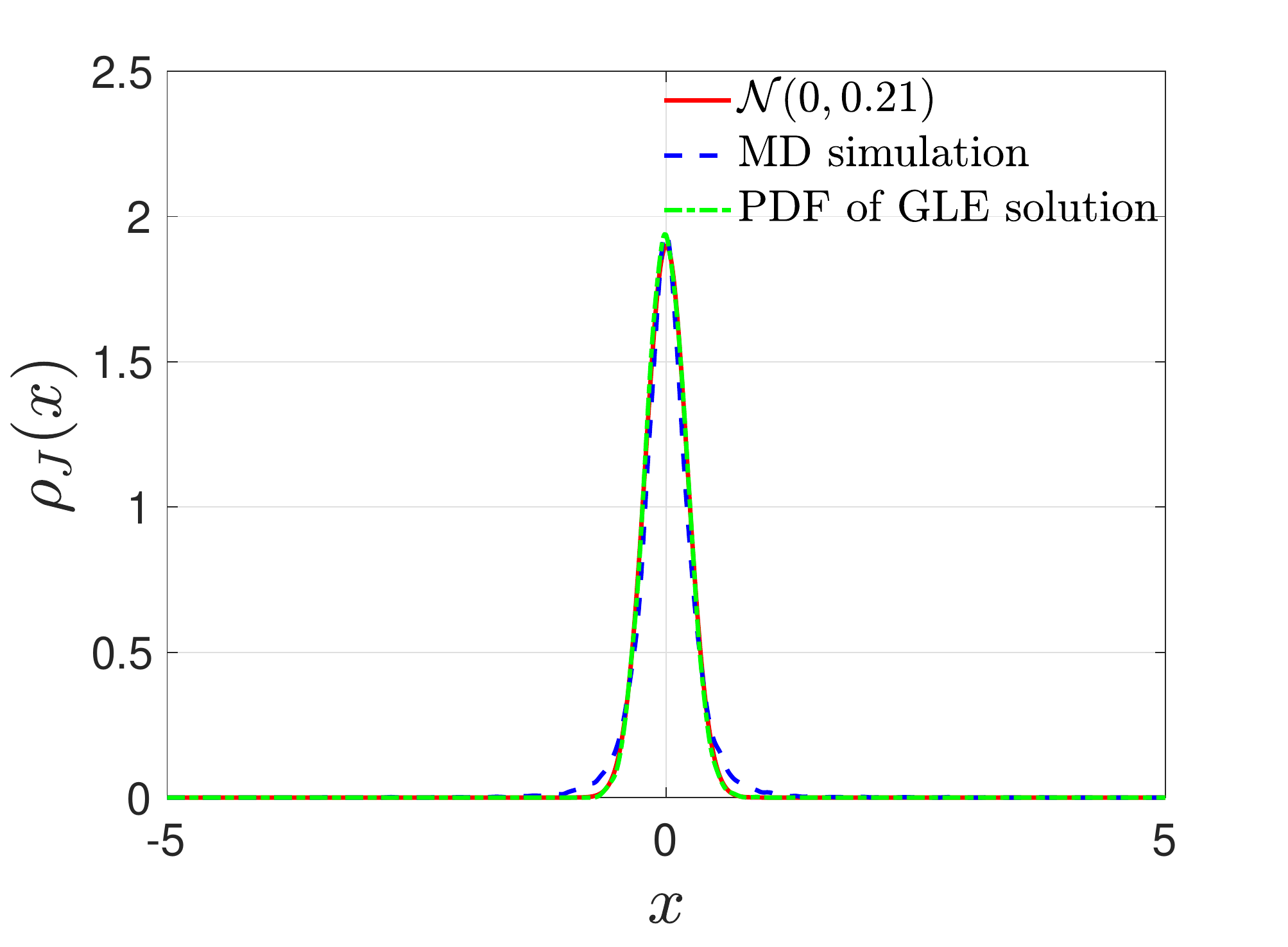}
\includegraphics[height=4cm]{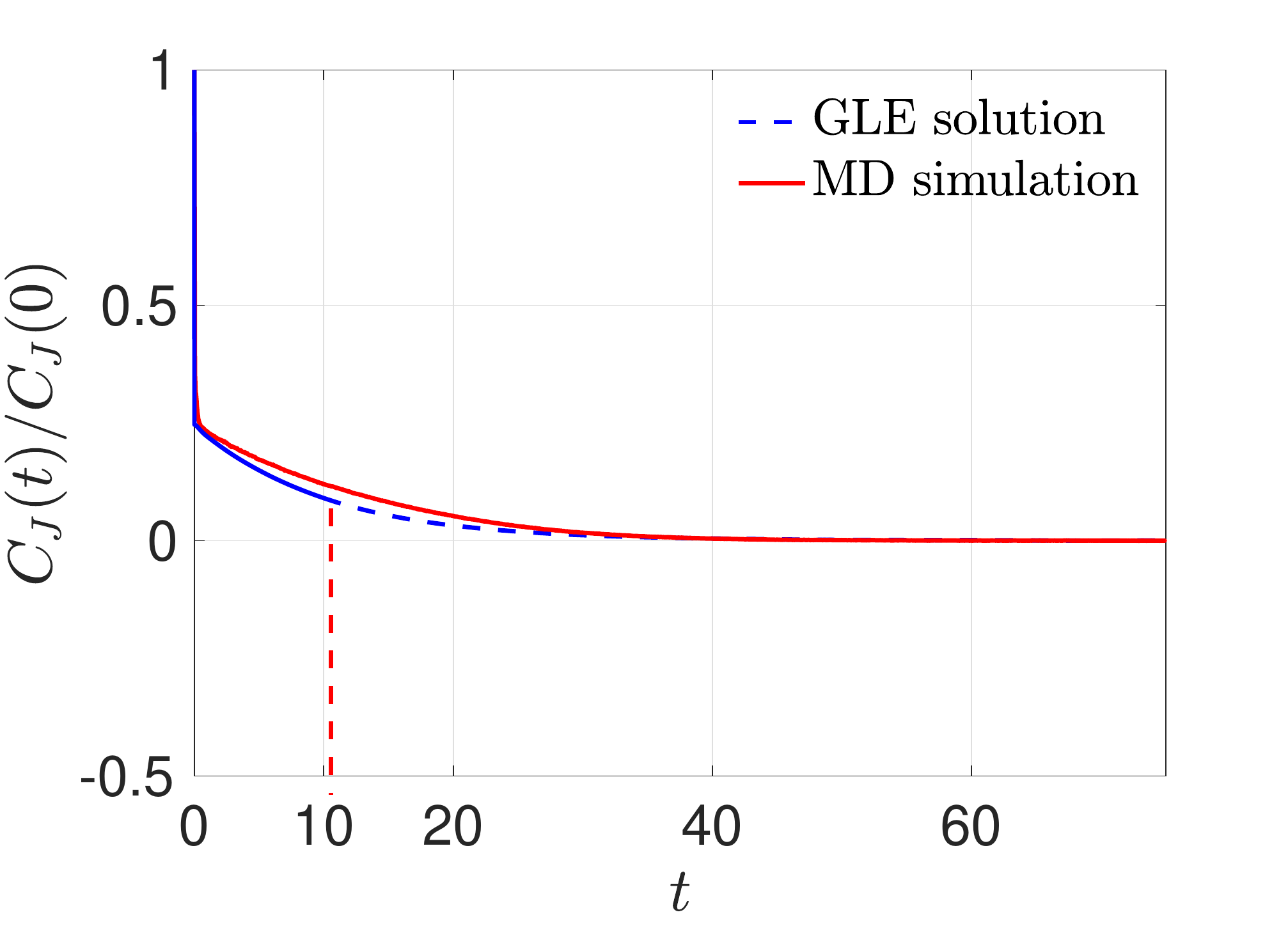}
\includegraphics[height=4cm]{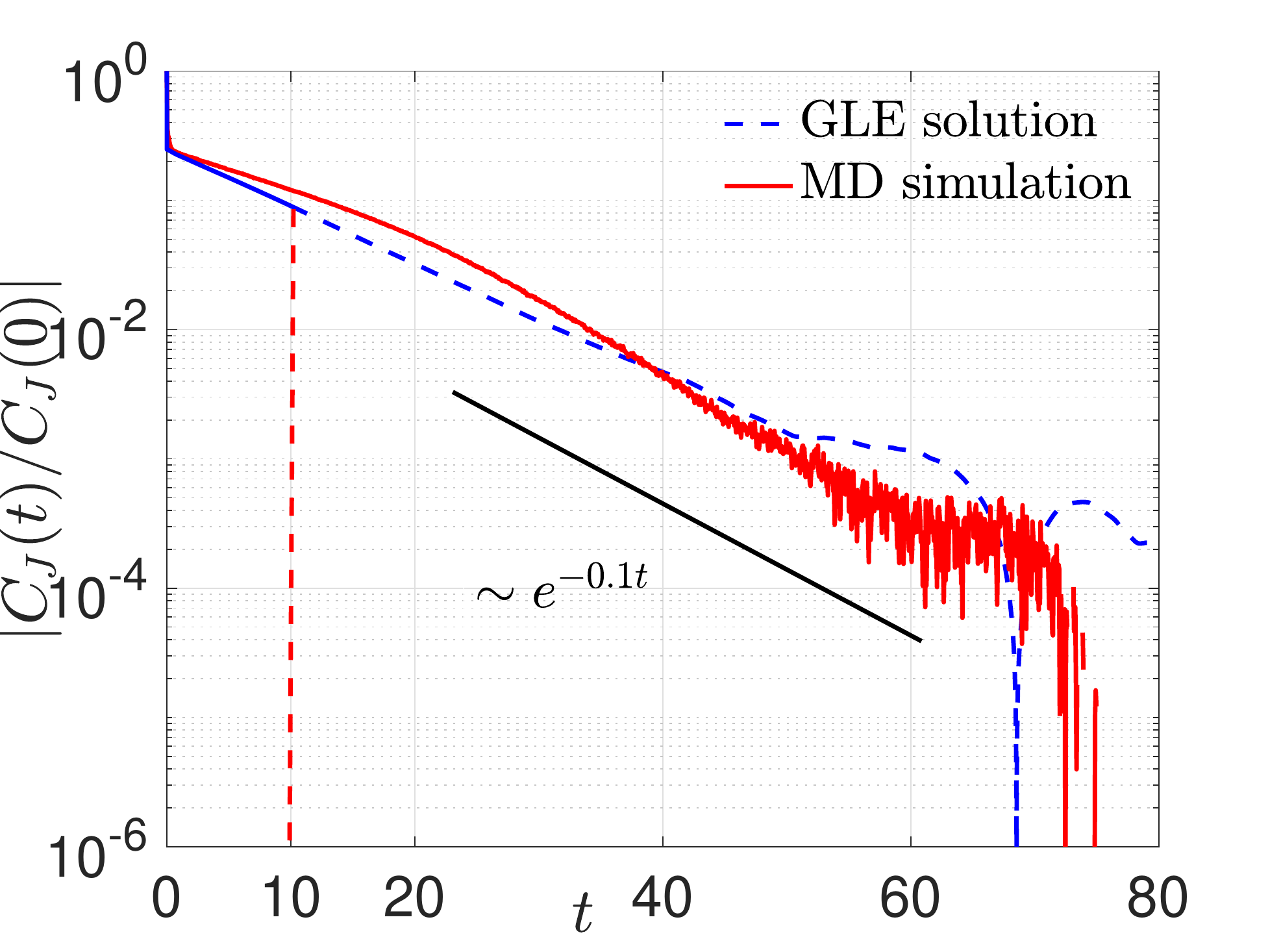}
}
\centerline{
\includegraphics[height=4cm]{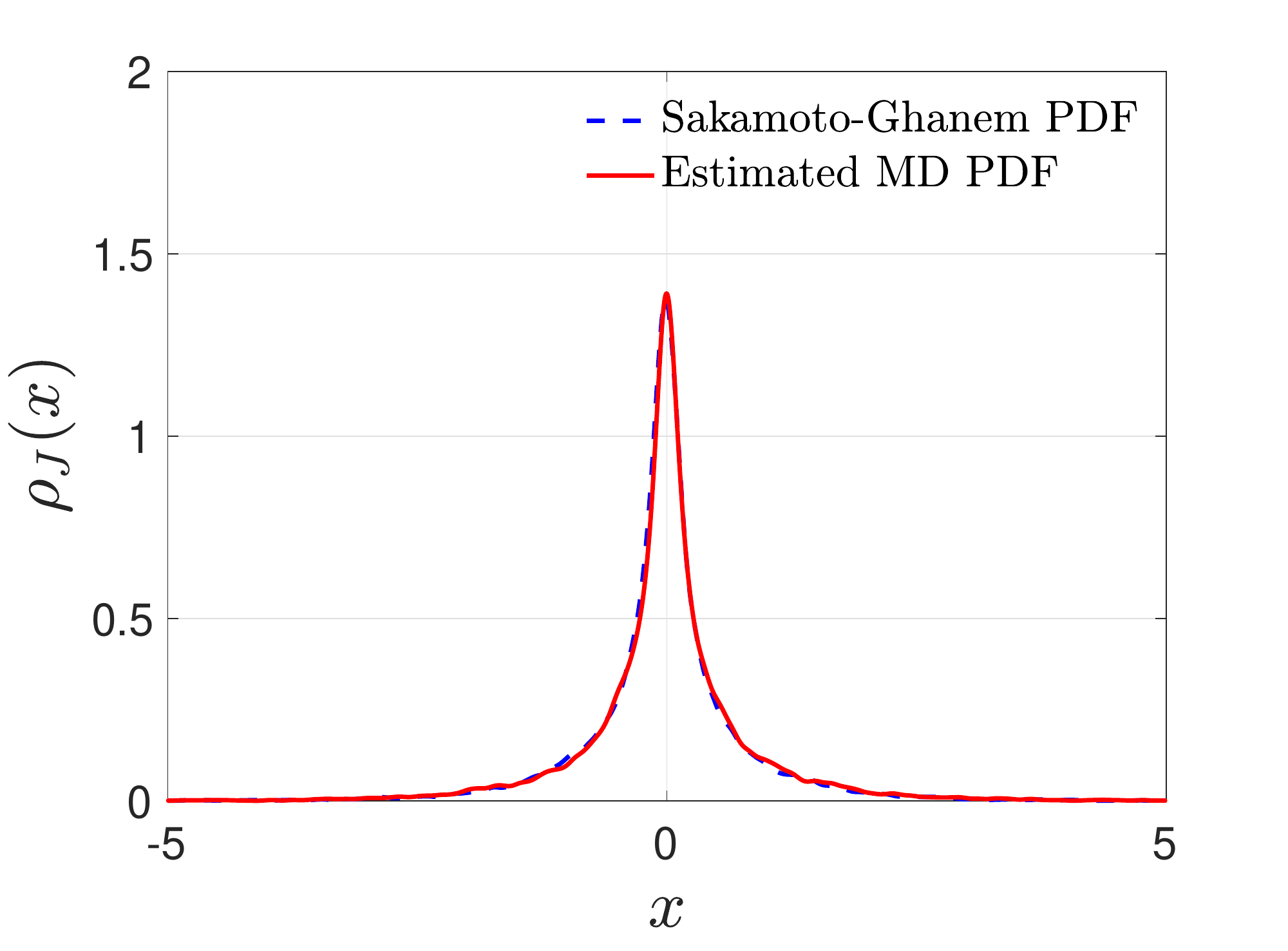}
\includegraphics[height=4cm]{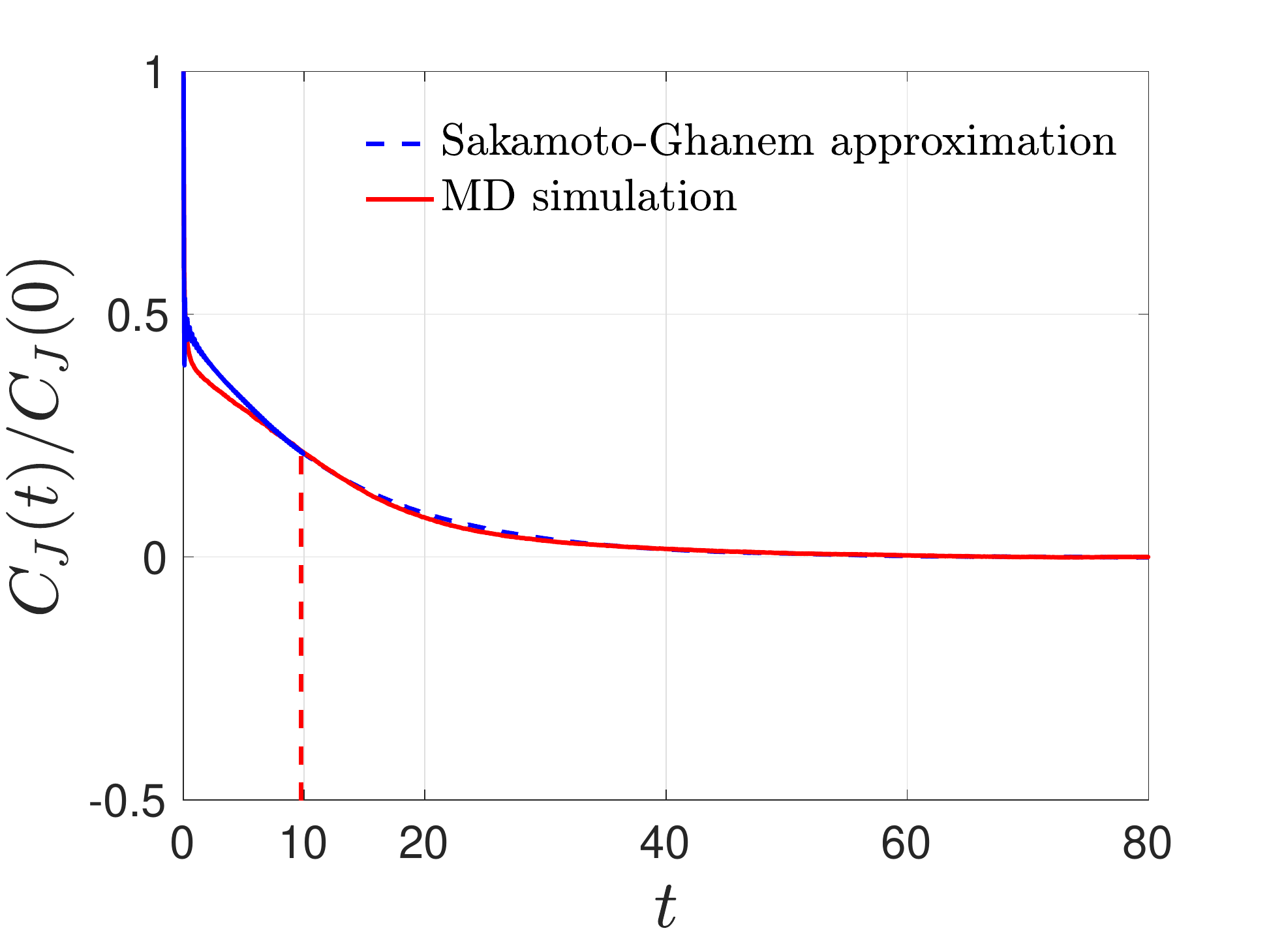}
\includegraphics[height=4cm]{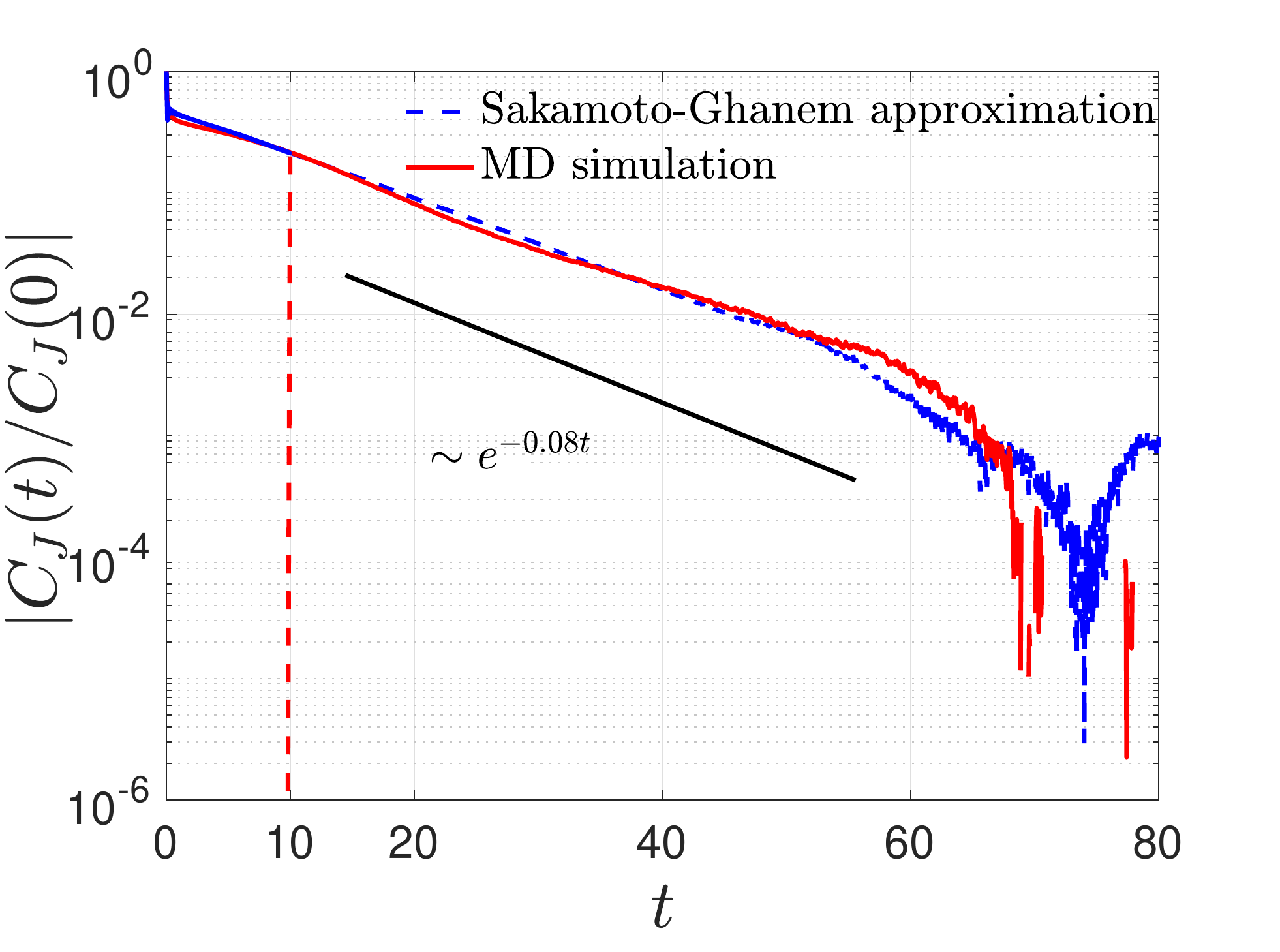}
}
\caption{Marginal probability density $\rho_J(x)$ and the normalized time autocorrelation function $C_J(t)/C_J(0)$ for the averaged heat flux $J_{av}(t)$ in the statistical equilibrium. The presented numerical simulation time domain is $[50,130]$ to ensure the system is already in the stationary state (equilibrium). Note that the short-time ($t\in [0,10]$) MD simulation for the correlation is displayed as a blue solid line and the extrapolation result based on the stochastic modeling is shown as a blue dashed line, with a vertical red dashed line separating these two. The presented simulation results are for systems with temperature $T_{eq}=1$ (First row) and $T_{eq}=5$ (Second row).}
\label{fig:J_ave_Eqn} 
\end{figure}
\paragraph{Close-to-Equilibrium case}
We now consider the close-to equilibrium case and use the classical Green-Kubo theory to verify the validity of stochastic heat conduction model. To this end, we choose $T_L=1$ and $T_R=1.1$ in the MD simulation of the SDE \eqref{SDE:n_d_heat}. By collecting the data from the nonequilibrium steady state, we find $J_{av}(t)$ is also approximately Gaussian with $J_{av}\sim \N(0.0170,0.24^2)$. Hence the reduced-order model for the close-to-equilibrium averaged heat flux $J_{av}(t)$ is given by 
\begin{align}\label{eqn:near_J_full_GLE}
\frac{d}{dt}J_{av}(t)=\sum_{i=1}^I\int_0^tk_ib_i(t-s)J_{av}(s) ds+\sum_{i=1}^M\sqrt{\lambda_i}\xi_ie_i(t).
\end{align}
In Figure \ref{fig:J_ave_Near_Eqn}, we compare the MD simulation result of the normalized time-autocorrelation function: 
\begin{align*}
    C_J(t)=\langle J_{av}(t)-\langle J_{av}\rangle_{\Delta T}, J_{av}(0)-\langle J_{av}\rangle_{\Delta T}\rangle_{\Delta T},
\end{align*}
where $\langle\cdot\rangle_{\Delta T}$ is the ensemble average with respect to the NESS probability density with $T_R-T_L=\Delta T$. Similar to the previous case, we used the NESS simulation data for $t\in[0,10]$ to construct the memory kernel. The long-term prediction of the correlation function is shown to be accurate. In addition, we calculate the averaged long-term energy accumulation:
\begin{align*}
\Delta E(t)=\left\langle\int_0^t J_{av}(s)ds\right\rangle_{\Delta T},
\end{align*}
where $J_{av}(t)$ is the MD sample path for system in the near-equilibrium steady state. For stochastic model \eqref{eqn:near_J_full_GLE}, this quantity can be calculated by averaging the solution with respect to the probability measure introduced by the i.i.d Gaussian random variables $\{\xi_i\}_{i=1}^M$ , i.e. 
\begin{align*}
\Delta E(t)=\left\langle\int_0^t J_{av}(s)ds\right\rangle_{\bm \xi=\{\xi_i\}_{i=1}^M}.
\end{align*}
Here $J_{av}(t)$ is the numerical solution of \eqref{eqn:near_J_full_GLE}. On the other hand, the Green-Kubo formula can also be used to calculate the energy accumulation as:
\begin{align}\label{eqn:Delta_E}
\Delta E(t)=\kappa\frac{dT}{dx}t\approx\kappa\frac{\Delta T}{aN}t=\frac{t}{k_BT^2_{eq}a}\int_0^{\infty}\langle J_{av}(s),J_{av}(0)\rangle_{eq}ds,
\end{align}
where $a$ is the equilibrium position spacing of the chain. For our example $a=\sigma=1$. $\langle J_{av}(s),J_{av}(0)\rangle_{eq}$ can be obtained from the equilibrium MD simulation result or the solution of \eqref{eqn:E_J_full_GLE}. Note that the second approximation in \eqref{eqn:Delta_E} is valid since normally a linear temperature profile is formed for the heat conduction model \cite{lepri2003thermal}. The simulation result verifies that both the Green-Kubo formula and the reduced-order model \eqref{eqn:near_J_full_GLE} predicate rather accurately the long-term energy accumulation. In fact, stochastic model \eqref{eqn:near_J_full_GLE} gives a different definition of thermal conductivity $\kappa$ based on {\em the second FDT}:
\begin{align}\label{model_GK}
\kappa(T,\Delta T,N):=\lim_{t\rightarrow +\infty} a\frac{1}{\Delta T}\frac{\langle\int_0^{t} J_{tot,N}(s)ds\rangle_{\Delta T}}{t}
\approx
\lim_{t\rightarrow +\infty} a\frac{1}{\Delta T}\frac{\langle\int_0^{t} J_{tot,N}(s)ds\rangle_{\bm \xi}}{t}.
\end{align}
For our example, $\kappa(T=1,\Delta T=0.1,N=256)\approx\kappa=0.0178$, where $\kappa$ is the Green-Kubo transport coefficient \eqref{def:Green-Kubo}. 
We also consider a high temperature case. By choosing $T_L=5$ and $T_R=5.25$, we found the steady state distribution for $J_{av}(t)$ is similar to the result we obtained for the high-temperature equilibrium case. Because of the non-Gaussian feature of $J_{av}(t)$, we need to use the aforementioned polynomial chaos expansion method to generate the stochastic model. The simulation result is shown in Figure \ref{fig:J_ave_Near_Eqn}, and we see that the constructed fluctuating heat conduction model faithfully captures and predicates the static and dynamical properties of $J_{av}(t)$. When comparing the accumulated energy $\Delta E(t)$, the calculated result is more accurate than the predication of the Green-Kubo formula, with an estimated conductivity (given by \eqref{model_GK}) $\kappa(T=5,\Delta T=0.25,N=256)\approx 0.0271$. This might stems from the fact that for temperature difference $\Delta T=0.25$, the system is already out of the linear response regime (see explanations below).  
\begin{figure}[t]
\centerline{
\includegraphics[height=4cm]{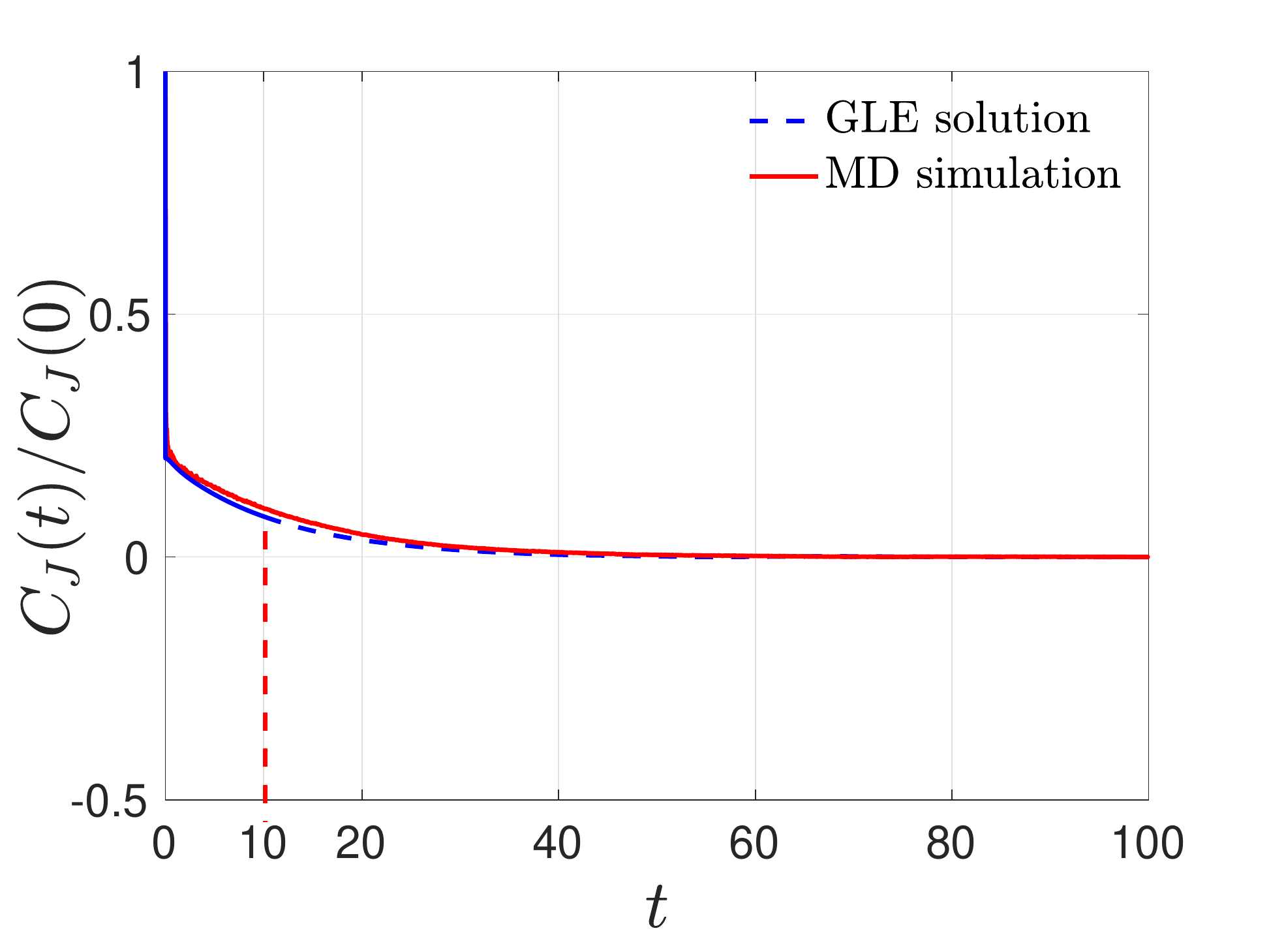}
\includegraphics[height=4cm]{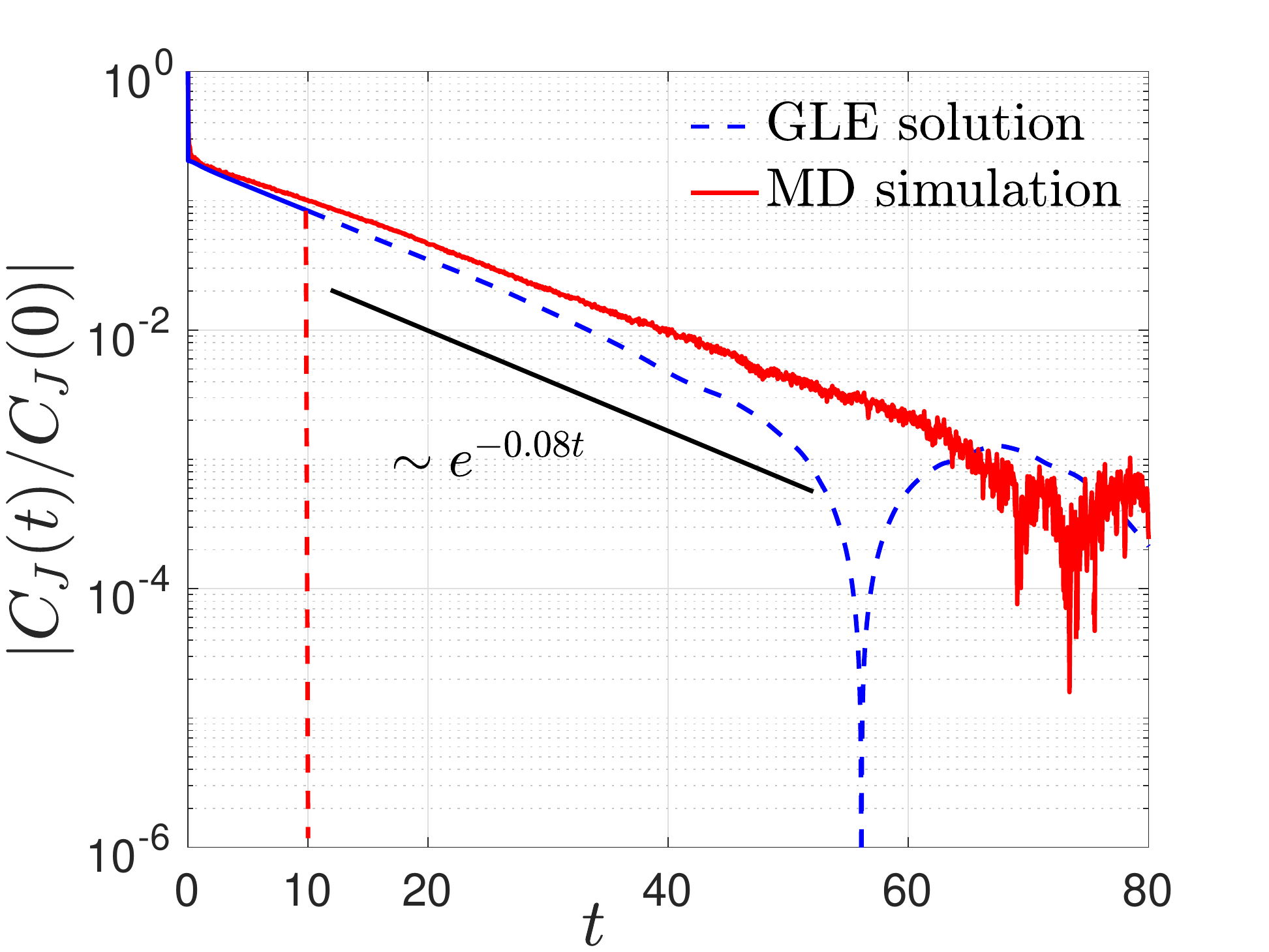}
\includegraphics[height=4cm]{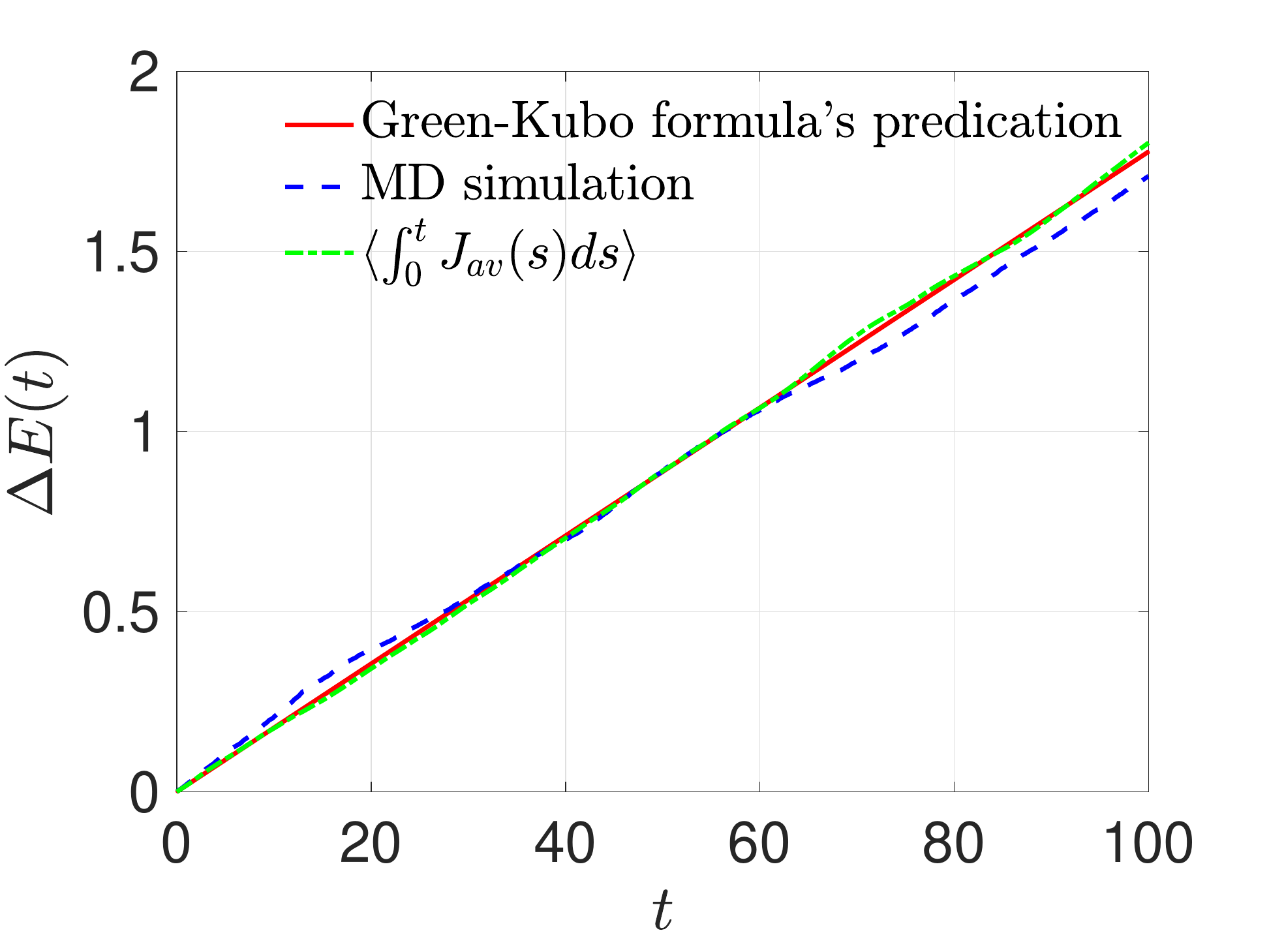}
}
\centerline{
\includegraphics[height=4cm]{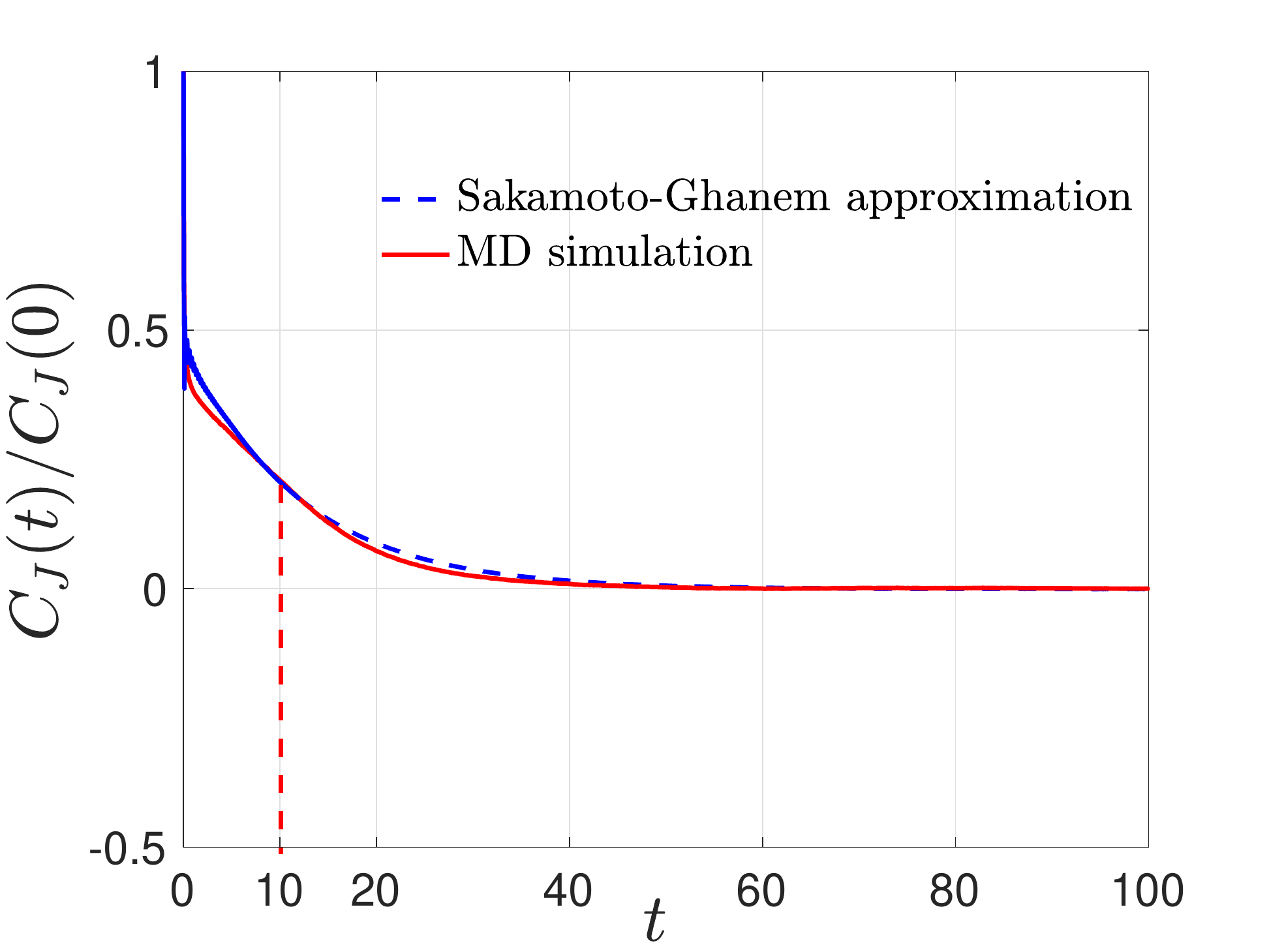}
\includegraphics[height=4cm]{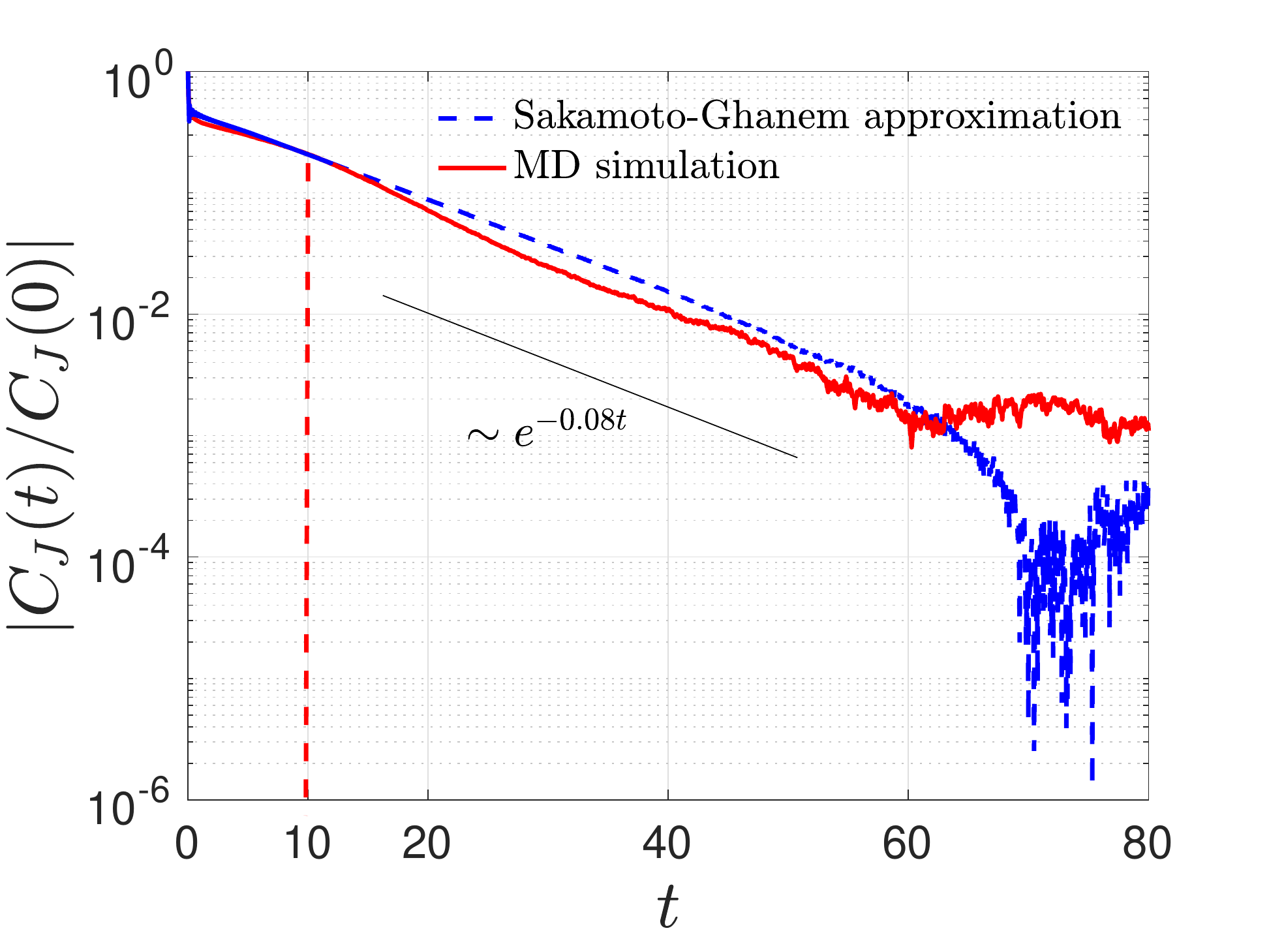}
\includegraphics[height=4cm]{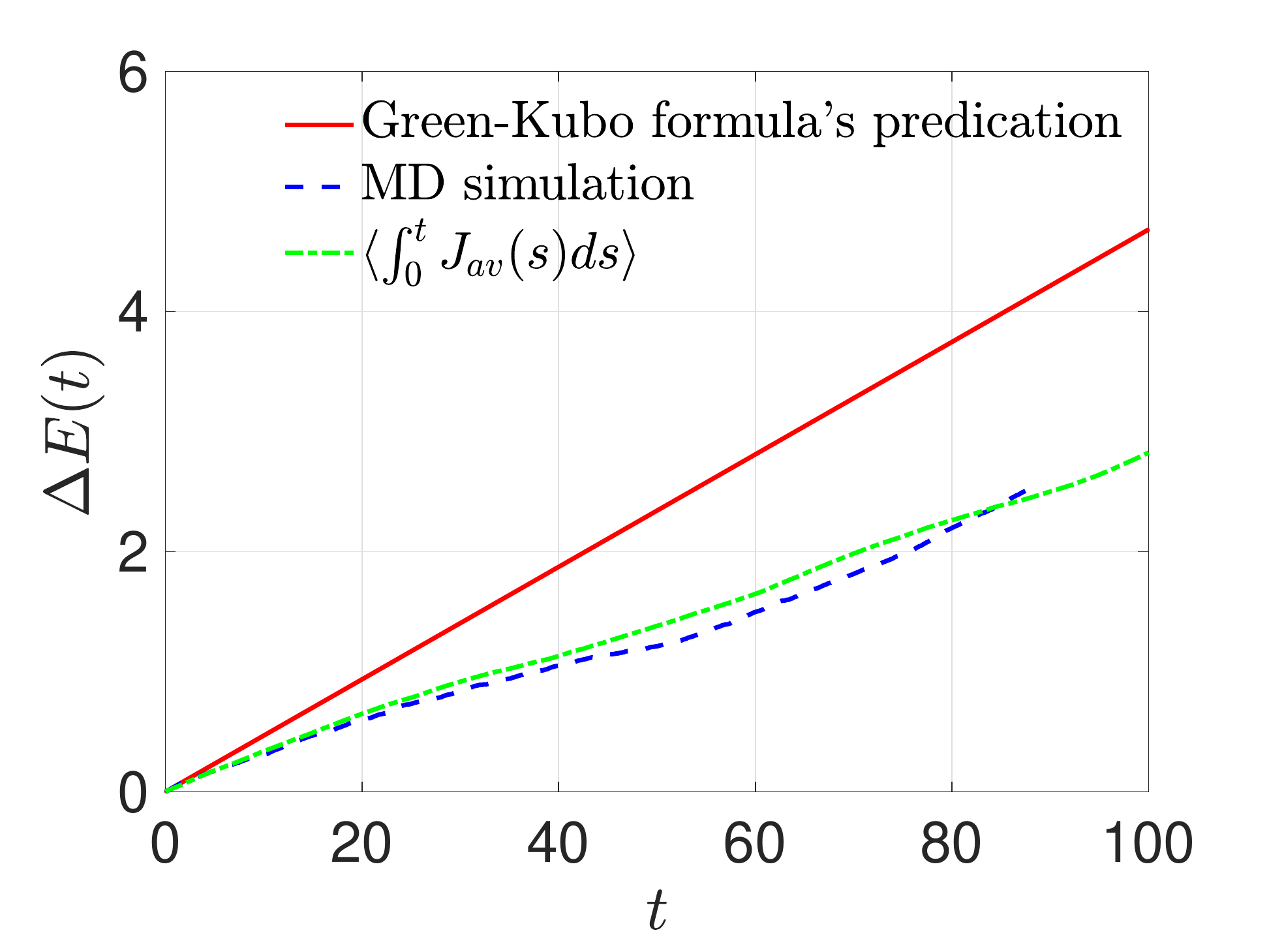}
}
\caption{Marginal probability density $\rho_J(x)$ and the normalized time autocorrelation function $C_J(t)/C_J(0)$ for the averaged heat flux $J_{av}(t)$; The presented simulation results are for close-to-equilibrium systems with temperatures $T_{L}=1,T_R=1.1$ (First row) and $T_{L}=5, T_{R}=5.25$ (Second row). Other settings are the same as in Figure \ref{fig:J_ave_Eqn}.}
\label{fig:J_ave_Near_Eqn} 
\end{figure}
\paragraph{Far-from-equilibrium case}
When a large temperature gradient $\Delta T$ is imposed to the system through the boundary thermostats, the system is outside of the linear response regime. For such far-from-equilibrium cases, the validity of the Green-Kubo formula \eqref{def:Green-Kubo} is questionable since it is based on perturbation theory. However, we can still use methods introduced in Section \ref{sec:Numerical_method} to build reduced-order models for $J_{av}(t)$ and use it to quantify the intensity of heat transfer. Form the MD simulation result presented in Figure \ref{fig:J_ave_NESS}, we find that the heat flux probability density in the far-from equilibrium steady state is non-Gaussian, strongly asymmetric and has long tails. Similar results were reported for Hamiltonian systems \cite{mendl2015current}, where the PDF was shown to approximate the Baik-Rains distribution. Here we did not pre-assume the form of the PDF and directly used the numerically evaluated density function to build a model. Specifically, since the heat flux is a non-Gaussian process, we adopt the second methodology \eqref{diagram2} and repeat what we have done for the high temperature near-equilibrium case to calculate $J_{av}(t)$ using $J_{av}(t)=\sum_{i=1}^{M}J_iH_i(\gamma(t,\bm \xi))$. Figure \ref{fig:J_ave_NESS} clearly shows that the simulated $J_{av}(t)$ predicts the energy accumulation more accurately than the Green-Kubo formula (with $T_{eq}=1,5$), as expected. The estimated conductivity for the two cases we considered are $\kappa(T=1,\Delta T=4,N=256)\approx 0.28$ and $\kappa(T=5,\Delta T=4,N=256)\approx 0.29$.

\paragraph{Remark} We may compare the classical Green-Kubo transport theory with the second FDT-induced transport theory from the information theory point of view. The Green-Kubo theory uses the equilibrium information (correlation function) of the flux to predict the near-equilibrium transport. In our method, we used the short-time information of flux to predict the long-time transport. These two methods have their own merits and drawbacks. Specifically, the Green-Kubo theory is uniformly valid for equilibrium systems with various perturbations, such as thermostats with different temperature gradients $\Delta T$. However, it only applies to near-equilibrium systems. The second FDT-induced transport theory has larger range of applicability and can be used to predict far-from-equilibrium transport. But up to this point, we can only reply on data-driven methods to calculate the GLE, which means that the heat transport for systems with different temperature gradients has to be handled on a one-to-one basis.    
\begin{figure}[t]
\centerline{
\includegraphics[height=4cm]{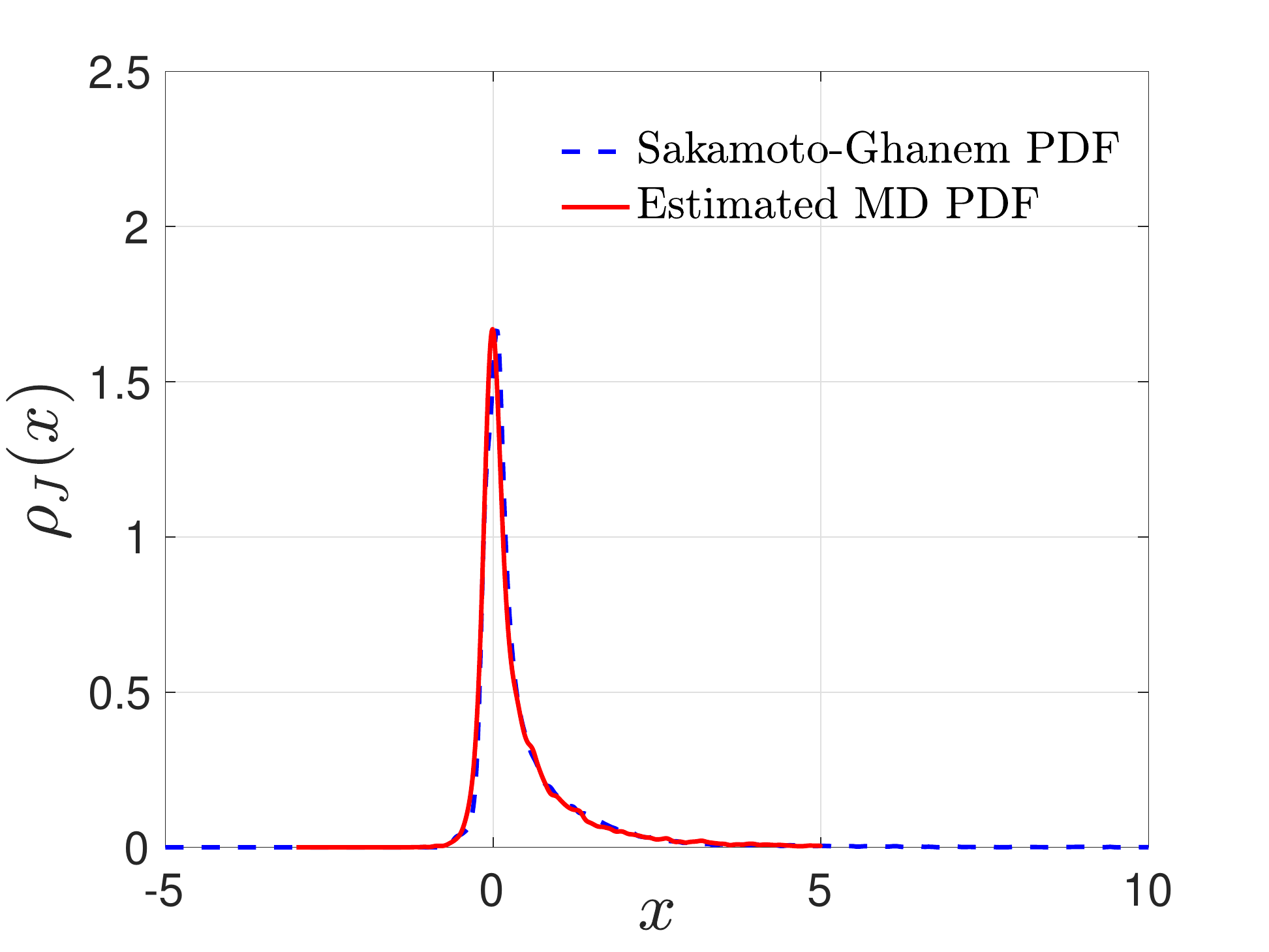}
\includegraphics[height=4cm]{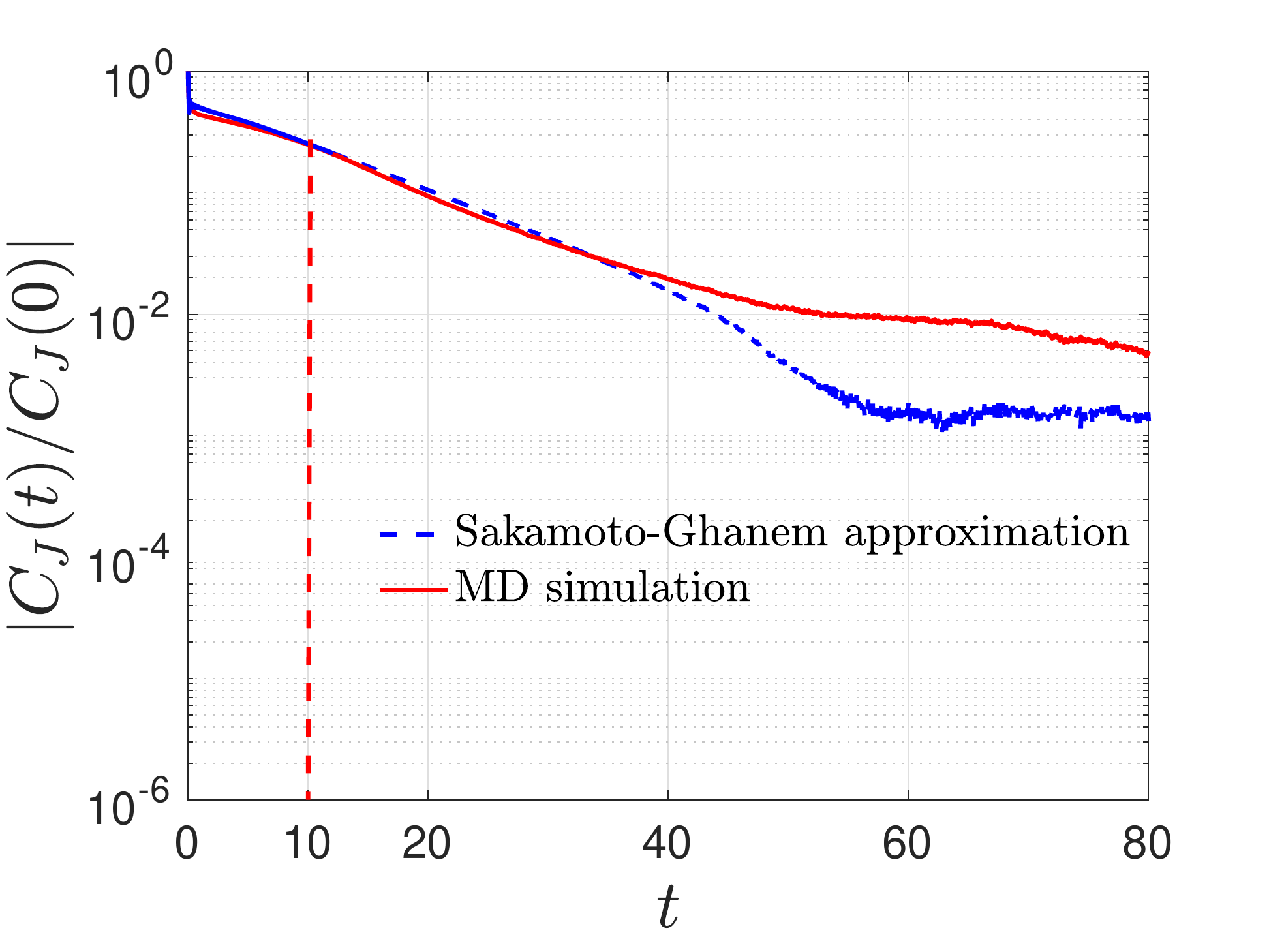}
\includegraphics[height=4cm]{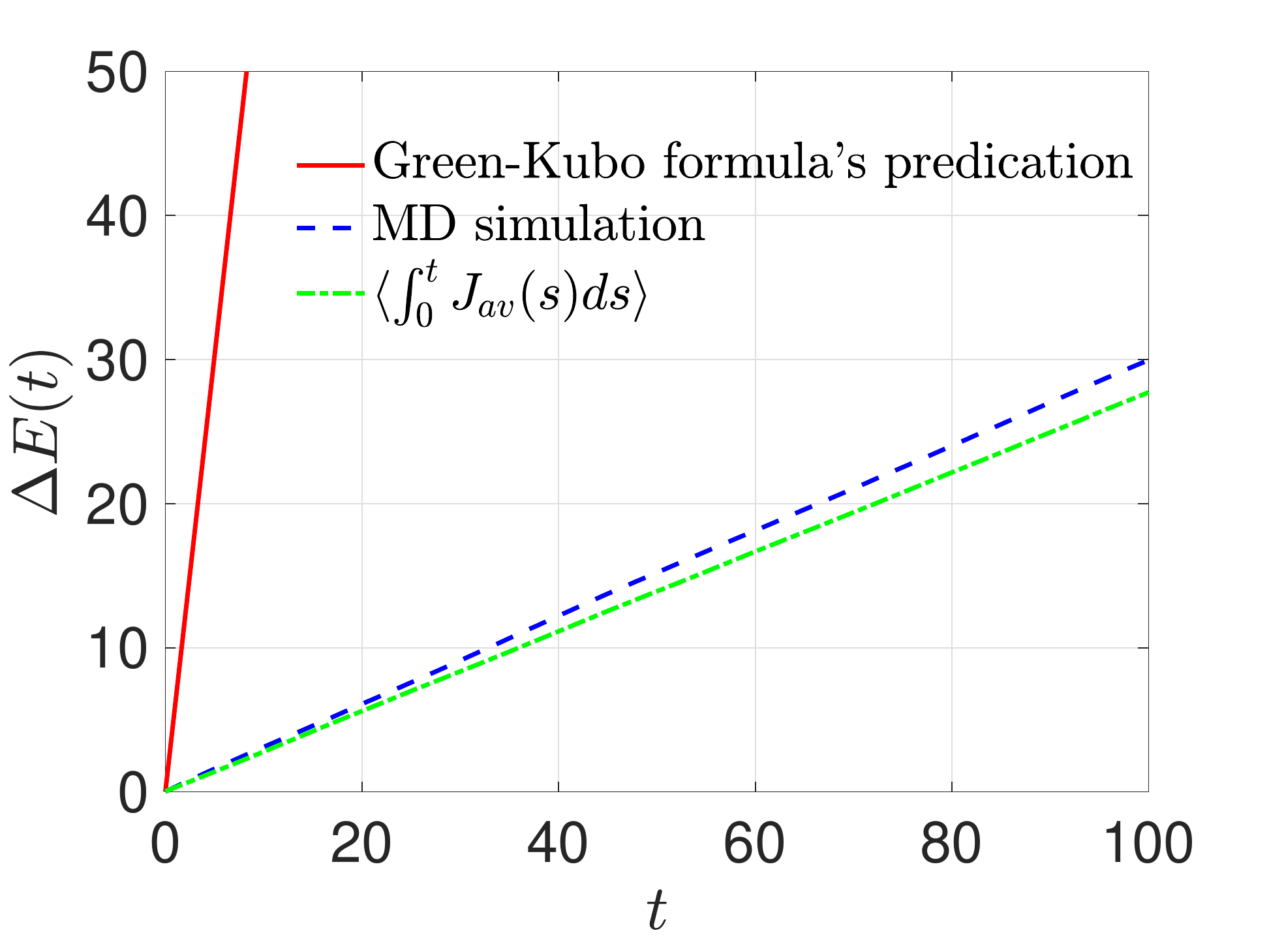}
}
\centerline{
\includegraphics[height=4cm]{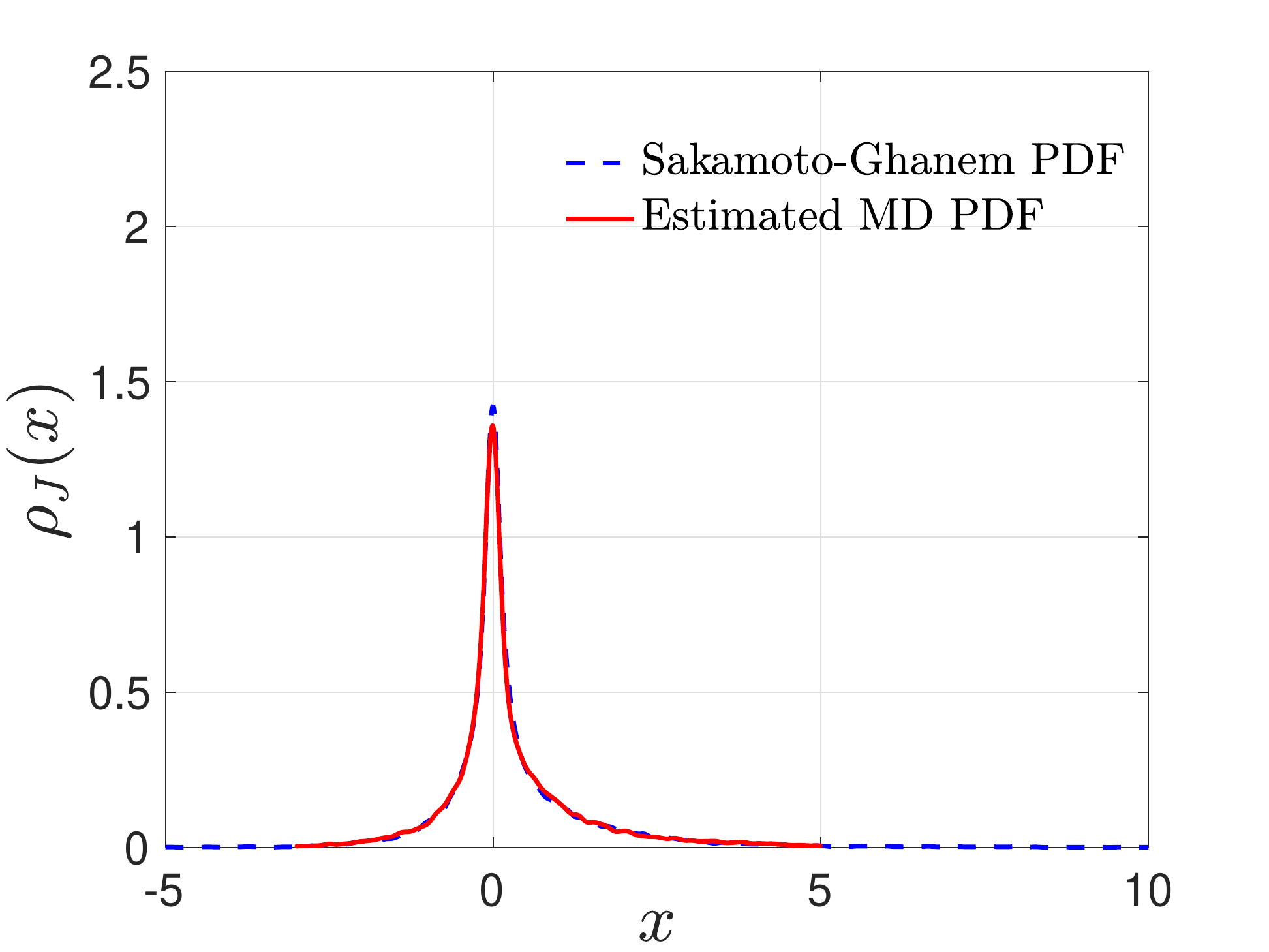}
\includegraphics[height=4cm]{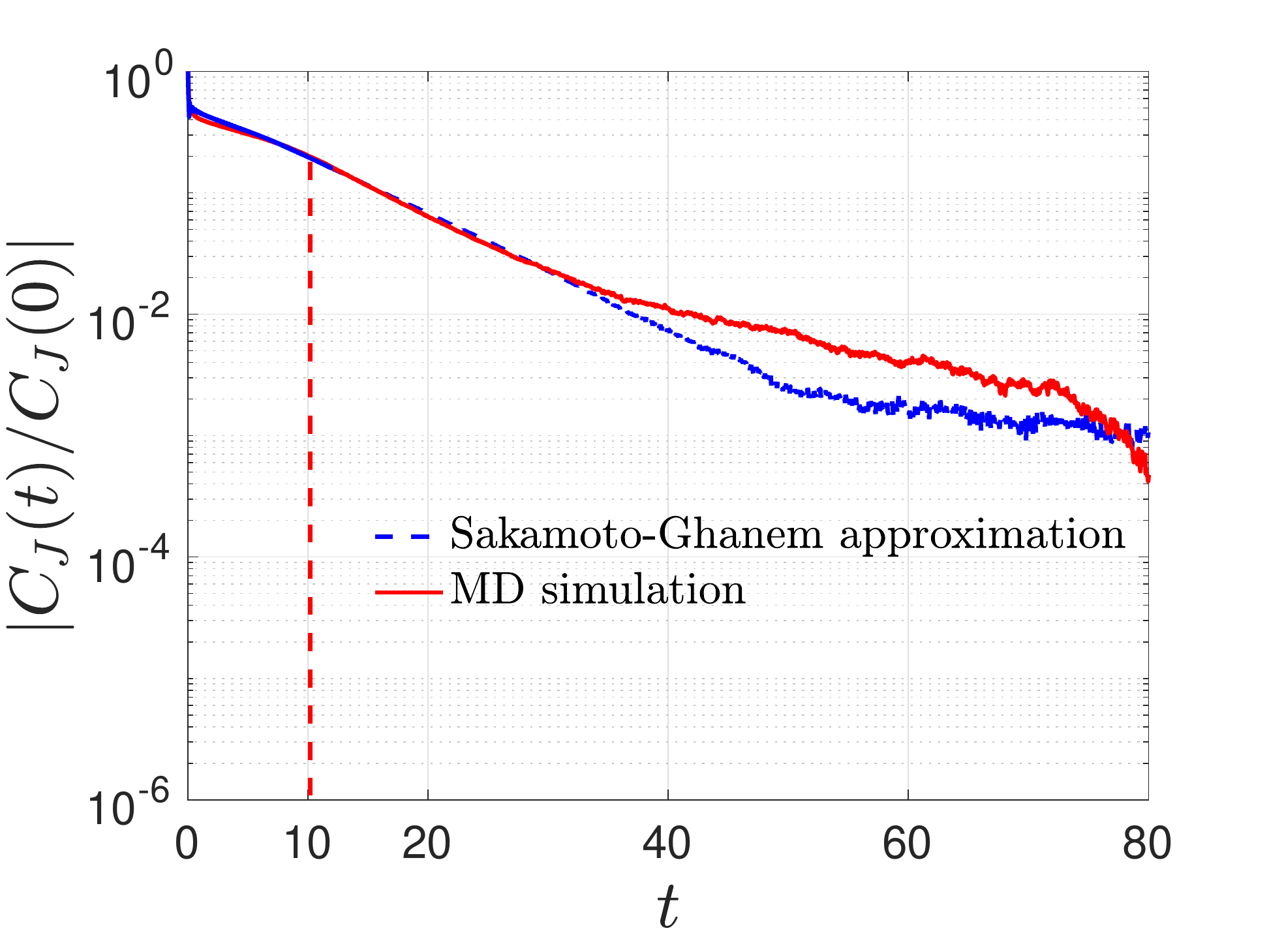}
\includegraphics[height=4cm]{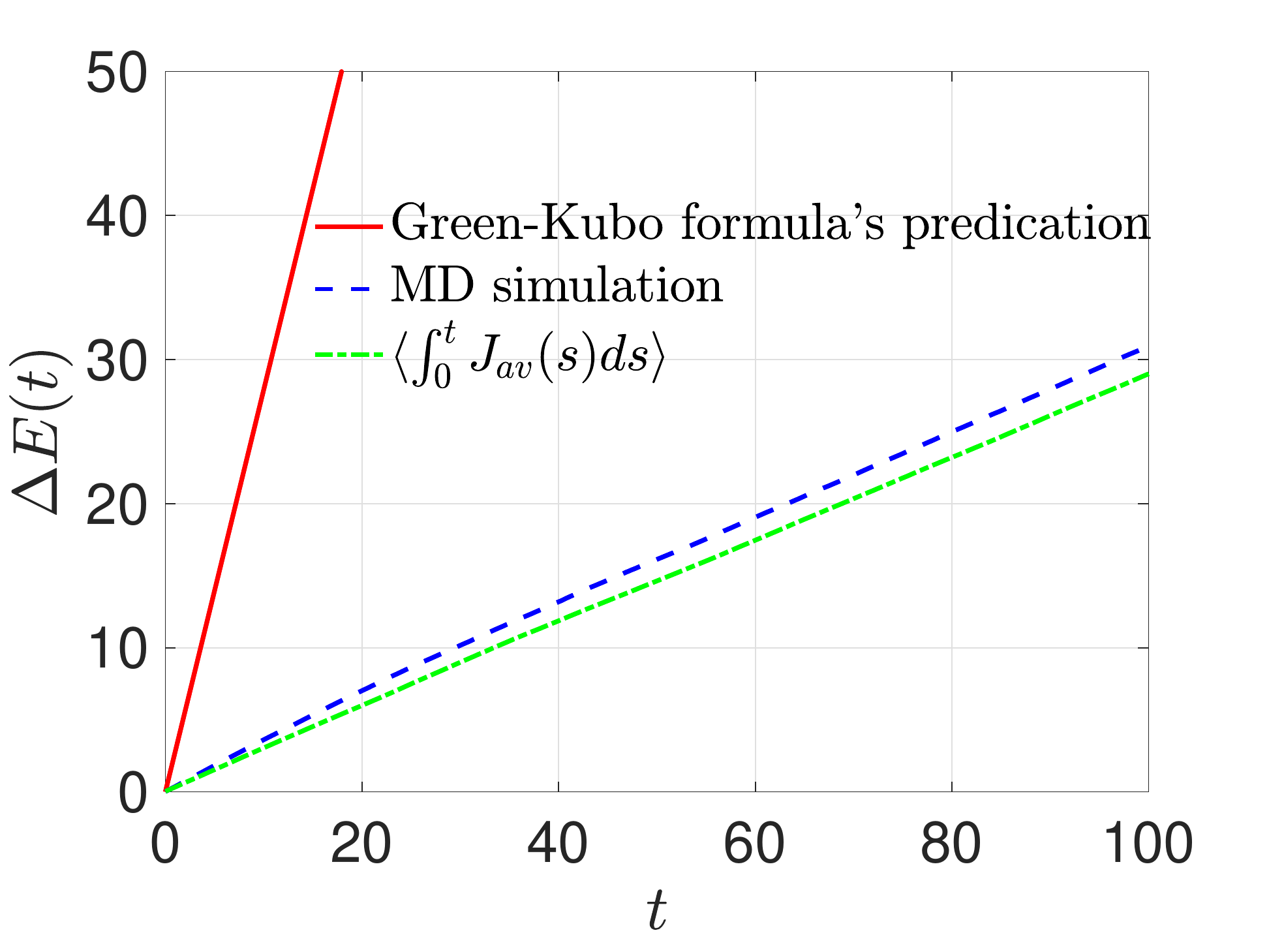}
}
\caption{Marginal probability density $\rho_J(x)$ and the normalized time autocorrelation function $C_J(t)/C_J(0)$ for the averaged heat flux $J_{av}(t)$; The presented simulation results are for far-from-equilibrium systems with temperature $T_{L}=1,T_R=5$ (First row) and $T_{L}=5, T_{R}=9$ (Second row). Other settings are the same as in Figure \ref{fig:J_ave_Eqn}.}
\label{fig:J_ave_NESS} 
\end{figure}

\section{Summary}
\label{sec:conclusion}
In this paper, we discussed the generalization of the second fluctuation-dissipation theorem (FDT) for stochastic dynamical systems driven by white noise and its application to the heat conduction model. Following Kubo-Mori's methodology, the derivation of this new FDT only uses the property of the Komogorov operator $\K$, and hence generally holds for observables in both equilibrium and nonequilibrium steady states (NESS). We also note a surprising fact that for observables in the degenerate coordinate of the Kolmogorov operator, the {\em classical} second FDT holds in the NESS
even when the steady state probability measure is generally unknown. The generalized second FDT was applied to various statistical physics models in and out of equilibrium such as the Langevin dynamics and the DPD model. In particular, we focused on a low-dimensional heat conduction model and proved the validity of the classical second FDT for the averaged heat flux. We also introduced two reduced-order modeling methods based on the second FDT. The first one works for Gaussian observables which satisfy the classical second FDT, whereas the other method is based on the polynomial chaos expansion approach which is generally applicable to observables satisfying the generalized second FDT. With these two methods, we were be able to numerically verify our main theoretical result, Theorem \ref{thm:2nd-FDT}. Moreover, we derived a fluctuating heat conduction model and introduced a new definition of thermal conductivity which holds for both near-equilibrium and far-from-equilibrium heat transport. For a nanoscale open system out of the linear response regime, this second-FDT-induced conductivity yields more accurate prediction of heat transfer than the classical Green-Kubo theory. We conclude by emphasizing that the presented work can be  generalized to other transport processes. Further applications of the proposed methodology can be expected.  

\vspace{0.3cm}
\noindent 
{\bf Note added in proof}. 
We note that similar conclusions on the validity of the classical second FDT for far-from-equilibrium systems was discovered independently by Jung and Schmid \cite{jung2021fluctuation} soon after we submitted this manuscript to arXiv. In their work, more numerical evidence from the DPD simulation of an colloid particle immersed in fluids is provided which supports our claims, in particular, the result (II) announced in Section \ref{sec:short_sum}. Moreover, a mathematical argument based on the deterministic Volterra equation is shown to provide a different proof of the classical second FDT in the nonequilibrium. In fact, their derivation shows that the necessary condition of Corollary \ref{corollary_2ndFDT} is also sufficient which means $\Omega=0$ if and only if $w(0)=0$. We believe our methodology provides a solid mathematical validation of Jung and Schmid's result using the stochastic system Mori-Zwanzig equation, and it is constructive to compare the theoretical and numerical results in these two manuscripts. 

\appendix
\section{Proof of the generalized first FDT}\label{sec:APP1-FDT_proof}
For the perturbed stochastic system \eqref{eqn:sde_p} evolving from the steady state $\rho(0)=\rho$, the evolution equation of the probability density is given by the Fokker-Planck equation:
\begin{align}\label{eqn:pert_rho}
\frac{\partial}{\partial t}\rho(t)=\K^*(t)\rho(t),
\end{align}
where $\K^*(t)$ is the Kolmogorov forward (Fokker-Planck) operator of the form
\begin{align*}
\K^*(t)=\K^*+\K^*_{ext}(t)=\K^*+\left[\sum_{i}\delta G_i(t)\tilde{F}_i(\bm x)\partial_{x_i}+\sum_i\delta H^2_i(t)\tilde{\sigma}_{i}^2(\bm x)\partial_{x_i}^2\right]^*.
\end{align*}
Since the added forces are of the order $O(\delta)$, which is assumed to be a small quantity, the solution of equation \eqref{eqn:pert_rho} can be written as a perturbative expansion of $\rho(t)$, i.e.
\begin{align}\label{pert_series}
\rho(t)=\rho+\delta\rho_1(t)+\delta^2\rho_2(t)+\cdots,
\end{align}
where the steady state solution satisfies $\K^*\rho=0$. By substituting the perturbation series \eqref{pert_series} into equation \eqref{eqn:pert_rho} and retaining only the first order term $O(\delta)$, we get the formal expression of $\delta \rho_1(t)$:
\begin{align*}
\delta\rho_1(t)=\int_0^tdse^{(t-s)\K^*}\K^*_{ext}(s)\rho.
\end{align*}
Hence in the phase space, for observable $u=u(\bm x)$, the change of ensemble average induced by the perturbation can be expressed as:
\begin{align*}
\la u(t)\rangle_{\rho_{\delta}}-\la u(t)\ra&=\left\langle\int_0^te^{(t-s)\K^*}\K^*_{ext}(s)ds\rho u(0)\right\rangle+O(\delta^2)\\
&=\left\langle\rho\int_0^t\K_{ext}(s)e^{(t-s)\K}dsu(0)\right\rangle+O(\delta^2)=\int_0^t\la\K_{ext}(s)u(t-s)\ra ds+O(\delta^2),
\end{align*} 
where $\rho_{\delta}=\rho_{\delta}(t)$ is the formal solution of the Fokker-Planck equation \eqref{eqn:pert_rho}. Using the integration by parts formula to simplify $\langle\K_{ext}(s)u(t-s)\rangle_{\rho}$, we obtain the generalized linear-response formulas \eqref{Generalized_1st_FDT}-\eqref{Generalized_1st_FDT2}.
\section{Proof of the generalized second FDT \eqref{W_sigma}}
\label{sec:APP-2FDT_proof}
Following the proof of Theorem \ref{thm:2nd-FDT}, we have 
\begin{align*}
    \L+\L_{\rho}^*=-\nabla\cdot\bm F(\bm x)-\frac{1}{\rho}\L\rho.
\end{align*}
Using integration by parts twice and a shorthand notation $\D=\sum_{i,j=1}^N\sigma_i(\bm x)\sigma_j(\bm x)\partial_{x_ix_j}^2 =\sum_{i,j=1}^N\sigma_i\sigma_j\partial_{x_ix_j}^2$, we have
\begin{align*}
    \D_{\rho}^*&=\sum_{i,j=1}^N\partial_{x_i}\partial_{x_j}[\sigma_i\sigma_j(\cdot)]+\frac{1}{\rho}\partial_{x_i}\rho\partial_{x_j}[\sigma_i\sigma_j(\cdot)]+\frac{1}{\rho}\partial_{x_j}\rho\partial_{x_i}[\sigma_i\sigma_j(\cdot)]+\frac{1}{\rho}\sigma_i\sigma_j\partial_{x_ix_j}^2\rho\\
    &=\D+\frac{1}{\rho}\D\rho+\frac{1}{\rho}\partial_{x_i}\rho\partial_{x_j}[\sigma_i\sigma_j(\cdot)]+\frac{1}{\rho}\partial_{x_j}\rho\partial_{x_i}[\sigma_i\sigma_j(\cdot)]+\partial_{x_i}[\sigma_i\sigma_j]\partial_{x_i}+\partial_{x_i}[\sigma_i\sigma_j]\partial_{x_i}+\partial_{x_ix_j}^2[\sigma_i\sigma_j]\\
    &=\D+\frac{1}{\rho}\D\rho+\A,
\end{align*}
where $\A$ is a second-order differential operator. Then it is straightforward to obtain
\begin{align*}
    \D_{\rho}^*+\D&=2\D+\frac{1}{\rho}\D\rho+\A, \\
    \K^*_{\rho}+\K&=-\nabla\cdot\bm F(\bm x)-\frac{1}{\rho}\L\rho+\frac{1}{\rho}\D\rho+2\D+\A.
\end{align*}
Since $\K^*\rho=\L^*\rho+\D^*\rho=0$, $\L+\L^*=-\nabla\cdot\bm F(\bm x)$ and $\D^*=\D+\B$, where operator $\B$ is defined as 
\begin{align*}
    \B=\sum_{i,j=1}^N\partial_{x_i}[\sigma_i\sigma_j]\partial_{x_j}+\partial_{x_j}[\sigma_i\sigma_j]\partial_{x_i}+\partial^2_{x_ix_j}[\sigma_i\sigma_j],
\end{align*}
we now obtain
\begin{equation}
\left.\begin{aligned}
\L\rho+\L^*\rho&=-\rho\nabla\cdot\bm F(\bm x)\\
\D^*\rho&=\D\rho+\B\rho
\end{aligned}\right\}
\Rightarrow
-\frac{1}{\rho}\L\rho+\frac{1}{\rho}\D\rho-\nabla\cdot\bm F(\bm x)=-\frac{1}{\rho}\B\rho.
\end{equation}
By substituting the above equality into $\K_{\rho}^*+\K$, we can get the desired result:
\begin{equation}
\begin{aligned}   
   \K^*_{\rho}+\K&=-\frac{1}{\rho}\B\rho+2\D+\A\\
   &=2\D+\sum_{i,j=1}^N\frac{1}{\rho}\sigma_i\sigma_j\partial_{x_i}\rho\partial_{x_j}+\frac{1}{\rho}\sigma_i\sigma_j\partial_{x_j}\rho\partial_{x_i}+\partial_{x_i}[\sigma_i\sigma_j]\partial_{x_j}+\partial_{x_j}[\sigma_i\sigma_j]\partial_{x_i}\\
   &=\S,
\end{aligned}
\end{equation}
where $\S$ is defined as \eqref{W_sigma}.

\section{The generalized second FDT for nonlinear GLEs}\label{sec:APP_Zwanzig}
The nonlinear GLEs are normally derived using the Zwanzig-type projection operator \cite{zwanzig1961memory,chorin2002optimal,zhu2018estimation}:
\begin{align}\label{Zwanzig_P}
    (\P f)(\hat {\bm x})=\frac{\int f(\bm x)\rho(\bm x)d\tilde {\bm x}}{\int\rho(\bm x)d\tilde {\bm x}},\qquad d\bm x=\underbrace{dx_1dx_2,\cdots dx_m}_{d\hat{\bm x}}\underbrace{dx_{m+1}dx_{m+2},\cdots dx_N}_{d\tilde{\bm x}}=d\hat{\bm x}d\tilde{\bm x}.
\end{align}
Zwanzig's projection \eqref{Zwanzig_P} is a conditional expectation operator satisfying $\P^2=\P$. It is also a symmetric operator in $L^2(M,\rho)$, i.e. $\langle\cdot,\P\cdot\rangle_{\rho}=\langle\P\cdot,\cdot\rangle_{\rho}$. Using \eqref{Zwanzig_P} to derive the Mori-Zwanzig equation for stochastic system \eqref{eqn:sde}, we obtain a nonlinear GLE:
\begin{align}\label{intro_MZ_Z}
   \frac{\partial}{\partial t}\underbrace{e^{t\K}\bm u(0)}_{\bm u(t)}
=\underbrace{e^{t\K}\P\K\bm u(0)}_{\bm F(\bm u(t))}
+\underbrace{\int_0^te^{s\K}\P\K
e^{(t-s)\Q\K}\Q\K \bm u(0)ds}_{\int_0^t \bm R(t-s,s) ds}+\underbrace{e^{t\Q\K}\Q\K \bm u(0)}_{\bm f(t)},
\end{align}
where $\bm F(\bm u(t))$ is a nonlinear vector function of $\bm u(t)$. The memory integral $\int_0^t\bm R(t-s,s)ds$ is not directly given by a convolution form such as the one in \eqref{gle_full}. To simplify this equation, we note that in the the conditional projection \eqref{Zwanzig_P} is an infinite-rank operator \cite{chorin2002optimal,zhu2018estimation}, therefore can be formally written as \cite{hislop2012introduction}:
\begin{align*}
    \P=\sum_{j=1}^{\infty}\langle \cdot, \psi_j(0)\rangle_{\rho}\phi_j(0),
\end{align*}
where $\{\psi_j(0)\}_{j=1}^{\infty}$ and $\{\phi_j(0)\}_{j=1}^{\infty}$ are set of phase space functions which are linearly independent in $L^2(M,\rho)$. As a result, the Zwanzig-equation memory kernel is given by:  
\begin{align*}
\int_0^t e^{s\K}\P\K e^{(t-s)\Q\K}\Q\K u_i(0)ds
&=\int_0^t e^{s\K}\sum_{j=1}^{\infty}\langle\psi_j(0),\K e^{(t-s)\Q\K}\Q\K u_i(0)\rangle_{\rho}\phi_j(0)ds\\
&=\int_0^t\underbrace{\sum_{j=1}^{\infty}\langle\psi_j(0),\K e^{(t-s)\Q\K}\Q\K u_i(0)\rangle_{\rho}}_{K_{ij}(t-s)}\phi_j(s)ds.
\end{align*}
Hence using the proof of Theorem \ref{thm:2nd-FDT}, the memory kernel can be rewritten as:
\begin{align}
K_{ij}(t)=\sum_{j=1}^{\infty}\langle \psi_j(0),\K e^{t\Q\K}\Q\K u_i(0)\rangle_{\rho}
&=\sum_{j=1}^{\infty}-\langle\K\psi_j(0),\K e^{t\Q\K}\Q\K u_i(0)\rangle_{\rho}+\langle\S\psi_j(0),\K e^{t\Q\K}\Q\K u_i(0)\rangle_{\rho}\nonumber\\
&=\sum_{j=1}^{\infty}-\langle\Q\K\psi_j(0),e^{t\Q\K}\Q\K u_i(0)\rangle_{\rho}+\langle\S\psi_j(0),e^{t\Q\K}\Q\K u_i(0)\rangle_{\rho}\nonumber\\
&=\sum_{j=1}^{\infty}-\langle g_j(0),f_i(t)\rangle_{\rho}+\langle h_j(0),f_i(t)\rangle_{\rho},
\label{nonlinear_GLE_FDT}
\end{align}
where $g_j(0)=\Q\K\psi_j(0)$ and $h_j(0)=\S\psi_j(0)$. After all these steps, we obtain a nonlinear GLE:
\begin{align}\label{nonlinear_Z_GLE}
\frac{\partial}{\partial t}u_i(t)
=F_i(\bm u(t))+\sum_{j=1}^{\infty}\int_0^tK_{ij}(t-s)\phi_j(s)ds+f_i(t),
\end{align}
where the memory kernel and the fluctuation force satisfy the generalized second FDT \eqref{nonlinear_GLE_FDT}. The above derivation is of course quite formal. However such mathematical rigorous discussion clearly indicates for stochastic systems the classical second FDT has to be generalized by including more observables such as $g_j(0)$ and $h_j(0)$ in the time correlations with the noise $f_i(t)$. At the current stage, it seems difficult or even unnecessary to get the explicit expressions for terms such as $\psi_j(0)$, $\phi_j(0)$ and $g_j(0)$, $h_j(0)$. However, nonlinear GLE \eqref{nonlinear_Z_GLE} and the generalized second FDT \eqref{nonlinear_GLE_FDT} may be used as the {\em ansatz} for proper reduced-order modellings for stochastic systems, such as the ones used in \cite{lei2016data}.  
\section{Validity of the classical second FDT for the averaged heat flux}\label{sec:App1}
It is sufficient to prove the result for $d=1$ system and the generalization to multi-dimensional cases is direct. The local heat flux at the boundaries have to be handled independently. Since the Langevin forces act on the momentum coordinates of the boundary oscillators, the heat reservoir only exchanges energy with the oscillators through the kinetic part \cite{lepri2003thermal}. From this observation we may define the boundary heat flux as the non-Hamiltonian contribution of the time derivative of the kinetic energy i.e., 
\begin{align*}
J_b&=\gamma_bk_BT_b\partial_{p_b}^2\frac{p_b^2}{2}-\gamma_b p_b\partial_{p_b}\frac{p_b^2}{2}=\gamma_bk_BT_b-\gamma_bp_b^2,\qquad b\in\B.
\end{align*}
In the steady state, a stable kinetic temperature profile is often obtained through numerical experiments and the boundary oscillators admit a kinetic temperature $T_b^k\neq T_b$ \cite{lepri2003thermal}. We further assume the marginal distribution for the momentum of the boundary oscillators $p_{b}$, $b\in \B$ is Gaussian, i.e. 
\begin{align}\label{eqn:6}
\rho_{p_b}=\int\rho\  d\bm q d\bm p\setminus\{p_b\}\propto e^{-\beta_b\frac{p_b^2}{2}},
\end{align}
where $\bm p\setminus\{p_b\}$ represents all momenta $\bm p$ but $p_b$, $\beta_b=1/k_BT_b^k$ and $\rho=\rho_{NESS}$ is the probability density corresponding to the NESS. To be noticed that this assumption is verified numerically in Section \ref{sec:veri-2nd_FDT} (see Figure \ref{fig:PLPM}). In the stationary regime, we get the boundary heat flux ensemble average 
\begin{align*}
\la J_b(t)\ra&=\gamma_bk_B(T_b-T_b^k),\\
\la J_b^2(t)\ra&=\gamma_b^2k_B^2(3(T_b^k)^2+T^2_b-2T_b^kT_b).
\end{align*}
Hence for the total heat flux we have
\begin{align*}
\la J_{tot,N}(t)\ra=\la J_{\G\setminus\B}(t)\ra+\la J_{\B}(t)\ra=\la  J_{\G\setminus\B}(t)\ra+\sum_{b\in\B}\gamma_bk_B(T_b-T_b^k),
\end{align*}
where $J_{\G\setminus\B}$ and $J_{\B}$ denote respectively the bulk and boundary contributions to the total heat flux. As it is shown in \cite{lepri2003thermal}, in the stationary regime the average bulk heat flux $\la J_{\G\setminus\B}(t)\ra$ is an extensive quantity which scales as the order of volume for the lattice system i.e. $O(L^d)$, where $L$ is the length of the lattice system. For fixed coupling constant $\lambda_b$ and thermal bath temperature $T_b$, the average boundary heat flux $\la J_{\B}(t)\ra$ scales as the order of surface area for the lattice system. i.e. $O(L^{d-1})$. Naturally since $N\propto L^d$, we have
\begin{align}\label{est_1}
\lim_{N\rightarrow\infty}\frac{1}{N}\la J_{tot,N}(t)\ra=\frac{1}{N}\la J_{\G\setminus\B}(t)\ra.
\end{align}
The above limit can be seen as the first-order moment estimation of the total heat flux in the stationary regime. We also need to verify whether the above estimation hold in the second order.  To this end, we use the triangle inequality to get 
\begin{align}\label{eqn:ine}
 \|J_{\G\setminus\B}(t)\|_{L^2(\rho)}-\|J_{\B}(t)\|_{L^2(\rho)}\leq \|J_{tot,N}(t)\|_{L^2(\rho)}\leq \|J_{\G\setminus\B}(t)\|_{L^2(\rho)}+\|J_{\B}(t)\|_{L^2(\rho)}.
\end{align}
Then by combining another triangle inequality 
\begin{align}\label{eqn:ine2}
\|J_{\B}(t)\|_{L^2(\rho)}\leq \sum_{b}\|J_b(t)\|_{L^2(\rho)}=\sum_b[\gamma_b^2k_B^2(3(T_b^k)^2+T^2_b-2T_b^kT_b)]^{1/2},
\end{align}
we have 
\begin{align*}
 \|J_{\G\setminus\B}(t)\|_{L^2(\rho)}-\sum_b\|J_{b}(t)\|_{L^2(\rho)}\leq \|J_{tot,N}(t)\|_{L^2(\rho)}\leq \|J_{\G\setminus\B}(t)\|_{L^2(\rho)}+\sum_b\|J_{b}(t)\|_{L^2(\rho)}.
\end{align*}
Using the embedding inequality associated with $L^2(\rho)\rightarrow L^1(\rho)$ and Cauchy-Schwartz inequality, we get that 
\begin{align}\label{eqn:ine3}
C|\la J_{\G\setminus\B}(t)\ra|\leq C \|J_{\G\setminus\B}(t)\|_{L^1(\rho)}\leq  \|J_{\G\setminus\B}(t)\|_{L^2(\rho)},
\end{align}
where $C$ is the embedding constant. Inequality \eqref{eqn:ine2} also implies $\|J_{\B}(t)\|_{L^2(\rho)}$ scales at most of the order $O(L^{d-1})$, while \eqref{eqn:ine3} indicates $\|J_{\G\setminus\B}(t)\|_{L^2(\rho)}$ scales at least of the order $O(L^d)$. Hence dividing by $N$ and then taking the limit $N\rightarrow\infty$ in both sides of inequality \eqref{eqn:ine}, we have
\begin{align}\label{est_2}
\lim_{N\rightarrow\infty}\frac{1}{N}\|J_{tot,N}(t)\|_{L^2(\rho)}=\frac{1}{N}\|J_{\G\setminus\B}(t)\|_{L^2(\rho)}.
\end{align}
Using the similar technique and the stationary condition $\|J_{\G\setminus\B}(t)\|_{L^2(\rho)}=\|J_{\G\setminus\B}(0)\|_{L^2(\rho)}$, $\|J_{\B}(t)\|_{L^2(\rho)}=\|J_\B(0)\|_{L^2(\rho)}$, it is easy to obtain the following estimate
\begin{align}\label{eqn:ine4}
\la J_{\G\setminus\B}(t),J_{\G\setminus\B}(0)\ra-2&\|J_{\G\setminus\B}(t)\|_{L^2(\rho)}\|J_{\B}(0)\|_{L^2(\rho)}-\|J_{\B}(0)\|^2_{L^2(\rho)}\leq 
\la J_{tot,N}(t),J_{tot,N}(0)\ra  \nonumber \\ 
&\leq
\la J_{\G\setminus\B}(t),J_{\G\setminus\B}(0)\ra+2\|J_{\G\setminus\B}(t)\|_{L^2(\rho)}\|J_{\B}(0)\|_{L^2(\rho)}+\|J_{\B}(0)\|^2_{L^2(\rho)}.
\end{align}
For the same reason, dividing by $N^2$ and then taking the limit $N\rightarrow\infty$ in both sides of inequality \eqref{eqn:ine4}, we obtain
\begin{align}\label{est_J_tot}
\lim_{N\rightarrow\infty}\frac{1}{N^2}\la J_{tot,N}(t),J_{tot,N}(0)\ra=\frac{1}{N^2}\la J_{\G\setminus\B}(t),J_{\G\setminus\B}(0)\ra.
\end{align}
Estimates \eqref{est_1} and \eqref{est_J_tot} imply that the averaged total heat flux $J_{av,N}(t)$, as a stationary second-order process, can be approximated by its bulk contribution $J_{\G\setminus\B,N}(t)/N$ in the $L^2$ sense.
\section{Polynomial chaos expansion for stationary non-Gaussian processes}\label{sec:App_poly}
For strongly non-Gaussian stochastic processes such as the far-from equilibrium heat flux $J_{av}(t)$, the KL expansion is no longer suitable to represent such processes since the random coefficients $\xi_k$ are not i.i.d Gaussian and cannot be easily determined. Some methods were proposed to address the approximation problem of a non-Gaussian process. For instance, Chu and Li \cite{chu2017mori}
suggested to use a Gaussian multiplicative noise to approximate the random force in the extended stochastic system. Zhu and Venturi \cite{zhu2019generalized} used Phoon's algorithm \cite{phoon2002simulation,phoon2005simulation} to generate a sample-based, iterated KL expansion to represent the non-Gaussian process. In this paper, we adopt a modified Sakamoto-Ghanem \cite{sakamoto2002polynomial} method to approximate $J_{av}(t)$. The numerical merits of the new algorithm are highlighted in two aspects. First, it is generally appliable to non-Gaussian stochastic process $u(t)$ with {\em arbitrary} steady state distribution $\rho_{u}$ and correlation function $C(t_1,t_2)=\langle u(t_1),u(t_2)\rangle$. Secondly, it works much more efficiently when comparing to similar approaches such as Phoon's algorithm \cite{phoon2002simulation,phoon2005simulation}, which enables us to rapidly generate sample trajectories from the GLE. Here we only briefly explain the main steps of Sakamoto-Ghanem algorithm and our modification of it. More details and extension of such a method can be found in \cite{sakamoto2002polynomial}.

Suppose $u(t)$ is a second order, stationary non-Gaussian stochastic process with an arbitrary steady state distribution $\rho_{u}$ and the stationary correlation function $C_u(t+s,s)=C_u(t,0)=\langle u(t),u(0)\rangle$. Then the following polynomial chaos expansion approximates $u(t)$ in the $L^2$ sense as $M\rightarrow+\infty$ and $K\rightarrow+\infty$:
\begin{align}\label{saka_process_u}
u(t)=\sum_{i=1}^{M}u_iH_i(\gamma(t,\bm \xi)), \qquad \text{where} \quad 
\gamma(t,\bm \xi)=\sum_{j=1}^K\sqrt{\lambda_j}\xi_je_j(t). 
\end{align}
Here $H_i$ is the $i$-th order probabilist's Hermite polynomials. Being represented as a truncated KL expansion which can be determined by solving the Fredholm equation \eqref{Fredholm_eqn}, $\gamma(t,\bm \xi)$ has steady correlation function $C_{\gamma}(t)=\langle \gamma(t,\bm \xi),\gamma(0,\bm \xi)\rangle$ satisfying the following algebraic equation:
\begin{align}\label{saka_process_alge_eqn}
C_u(t)=\sum_{j=1}^{M}\frac{u_j^2j!}{\sum_{k=1}^Mu_k^2k!}C^j_{\gamma}(t),
\end{align}
where $u_j$ are the coefficients of the Hermite-chaos expansion of the random variable $u$ with probability density $\rho_u$. Specifically, $u_i$ can be obtained (see more details in \cite{xiu2002wiener}, Section 6) using Gaussian quadrature by evaluating the integral: 
\begin{align}\label{saka_process_ui}
u_i=\frac{1}{\langle H_i^2\rangle}\int_0^1U^{-1}(x)H_i(G^{-1}(x))dx,
\end{align}
where $U^{-1}(x)$ and $G^{-1}(x)$ are, respectively, the inverse of the cumulative distribution function (CDF) for an arbitrary random variable $u\sim \rho_u$ and a Gaussian random variable $g\sim \N(0,1)$. In the original version of the Sakamoto-Ghanem algorithm \cite{sakamoto2002polynomial}, algebraic equation \eqref{saka_process_alge_eqn} is solved {\em exactly} for $C_u(t)$ and $C_{\gamma}(t)$ in a discrete lattice $\{0,\Delta t,2\Delta t,\cdots,T\}$ by assuming that $C_u(t)>0$ for $t\in [0,T]$. Here we generalize the algorithm by solving \eqref{saka_process_alge_eqn} {\em approximately} in the same lattice for the part $C_u(t)<0$. i.e. For the lattice point $i\Delta t$ such that $C_u(i\Delta t)<0$, we find an approximated solution $C_{\gamma}(i\Delta t)$ in $[-1,1]$ for the following algebraic equation:
\begin{align}\label{saka_process_alge_eqn_e}
C_u(i\Delta t)=\sum_{j=1}^{M}\frac{u_j^2j!}{\sum_{k=1}^Mu_k^2k!}C^j_{\gamma}(i\Delta t).
\end{align}
To apply the modified Sakamoto-Ghanem algorithm in the stochastic modeling of the far-from equilibrium heat flux $J_{av}(t)$, we only need to input the NESS probability density $\rho_J$ and correlation function $C_J(t)$ which can be obtained numerically via the MD simulation. We finally obtain the polynomial chaos expansion of $J_{av}(t)$
\begin{align}\label{saka_process_J}
J_{av}(t)=\sum_{i=1}^{M}J_iH_i(\gamma(t,\bm \xi)).
\end{align}

\bibliographystyle{plain}
\bibliography{2nd_FDT_NESS}
\end{document}